\definecolor{mygreen}{RGB}{28,172,0} % color values Red, Green, Blue
\definecolor{mylilas}{RGB}{170,55,241}
\theoremstyle{section}
\newtheorem{condition}{Condition}
\newtheorem{definition}{Definition}
\newtheorem{theorem}{Theorem}
\newtheorem{corollary}{Corollary}[theorem]
\newtheorem{lemma}[theorem]{Lemma}
\newtheorem{proposition}{Proposition}
\newtheorem{remark}{Remark}
\DeclarePairedDelimiter\abs{\lvert}{\rvert}%
\DeclarePairedDelimiter\norm{\lVert}{\rVert}%
\let\oldabs\abs
\def\abs{\@ifstar{\oldabs}{\oldabs*}}
\let\oldnorm\norm
\def\norm{\@ifstar{\oldnorm}{\oldnorm*}}
\newcommand{\mcI}{\mathcal{I}}
\newcommand{\mcL}{\mathcal{L}}
\newcommand{\mcS}{\mathcal{S}}
\newcommand{\mcR}{\mathcal{R}}
\newcommand{\mcN}{\mathcal{N}}
\newcommand{\mcM}{\mathcal{M}}
\newcommand{\R}{\mathbb{R}}
\newcommand{\C}{\mathbb{C}}
\newcommand{\D}{\mathbb{D}}
\newcommand{\Z}{\mathbb{Z}}
\newcommand{\K}{\mathbb{K}}
\renewcommand{\P}{\mathbb{P}}
\newcommand{\E}{\mathbb{E}}
\newcommand{\rarrow}{\rightarrow}
\newcommand{\sig}{\sigma}
\newcommand{\eps}{\varepsilon}
\newcommand{\ind}{\mathds{1}}
\begin{document}
% title, etc.
\title[Dominos and the sine-Gordon field]{Two-periodic weighted dominos and the sine-Gordon field at the free fermion point: I}
\author{Scott Mason\textsuperscript{*}}
\thanks{\textsuperscript{*}Department of Mathematics, KTH Royal Institute of Technology, scottm@kth.se}
\thanks{Supported by the grant KAW 2015.0270 from the Knut and Alice Wallenberg Foundation.}
%\vfil\hfil\noindent{\large\bf Dimer-dimer correlations at the rough-smooth/liquid-gas boundary\\ }\\ \\
%\noindent{\large  \\  \\}\\
%\date{\today}
\maketitle
\vspace{-2\baselineskip}
\begin{abstract}
In this paper we investigate the height field of a dimer model/random domino tiling on the plane at a smooth-rough (\emph{gas-liquid}) transition. We prove that the height field at this transition has two-point correlation functions which limit to those of the massless sine-Gordon field at the free fermion point, with parameters $(4\pi, z)$ where $z\in \R\setminus \{0\}$. The dimer model is on $\eps \Z^2$ and has a two-periodic weight structure with weights equal to either 1 or $a=1-C|z|\eps$, for  $0<\eps$ small (tending to zero). In order to obtain this result, we provide a direct asymptotic analysis of a double contour integral formula of the correlation kernel of the dimer model found by Fourier analysis. The limiting field interpolates between the Gaussian free field and white noise and the main result gives an explicit connection between tiling/dimer models and the law of a two-dimensional non-Gaussian field.
\end{abstract}

\section{Introduction}
The connection between the Gaussian free field and dimer models/tilings is well known, \cite{Kdom}. In particular, the Gaussian free field is seen in the continuum limit of a large class of dimer models \cite{CLR}, where it describes the fluctuations of the height function around their limit shape in the rough or \emph{liquid} phase \cite{K.O.S}. The Gaussian free field also appears in a large amount of other models, such as non-intersecting path models, random matrix theory, random graphs and other statistical mechanical models see e.g. \cite{BF}, \cite{Pet}, \cite{BG}, \cite{D}, \cite{RV}, \cite{GP}, \cite{BBNY}.  In this article we investigate the fluctuations of the continuum limit of the height function of a dimer model at a transition between the Gaussian free field and white noise. We call this transition a smooth-rough or \emph{gas-liquid} transition and note that the transition studied here is not the "rough-smooth" transition studied in \cite{J.M}.  One might expect, either since the covariance of height differences vary between zero and logarithmic or from renormalisation group heuristics \cite{Mus}, that a \emph{massive} Gaussian free field appears in the limit. However, this is not the case. To our knowledge, the main result of this paper is the first explicit connection between tiling/dimer models and the law of a two-dimensional non-Gaussian field.

In the physics literature, this type of continuum limit is known as a near-critical scaling limit \cite{Mus} (for a scaling limit in the rough phase one has an at-critical limit). The Gaussian free field \cite{S} is also known as the Euclidean \emph{bosonic} massless free field. We show that the fluctuations at the smooth-rough transition are described by a bosonized fermionic massive free field. This bosonized fermionic massive free field is known as the (massless) sine-Gordon field with parameters $(\beta,z)$, $\beta=4\pi$, $z\in \R\setminus\{0\}$, \cite{BW}. We give a review of a construction of the sine-Gordon field $SG(\beta,z)$, $\beta=4\pi$, via the Gaussian free field in section \ref{sinegordonsection}. In the context of periodic dimer models \cite{K.O.S}, the situation is that the magnetic coordinates of the dimer model are placed at the centre of the hole in the associated amoeba. The diameter of the hole is then set proportional to $\eps>0$, the lattice is rescaled by $\eps$ and we then consider the limit of the height field as $\eps\rarrow 0$. 
\begin{figure}[h]
\centering
\includegraphics[width = 0.5\textwidth]{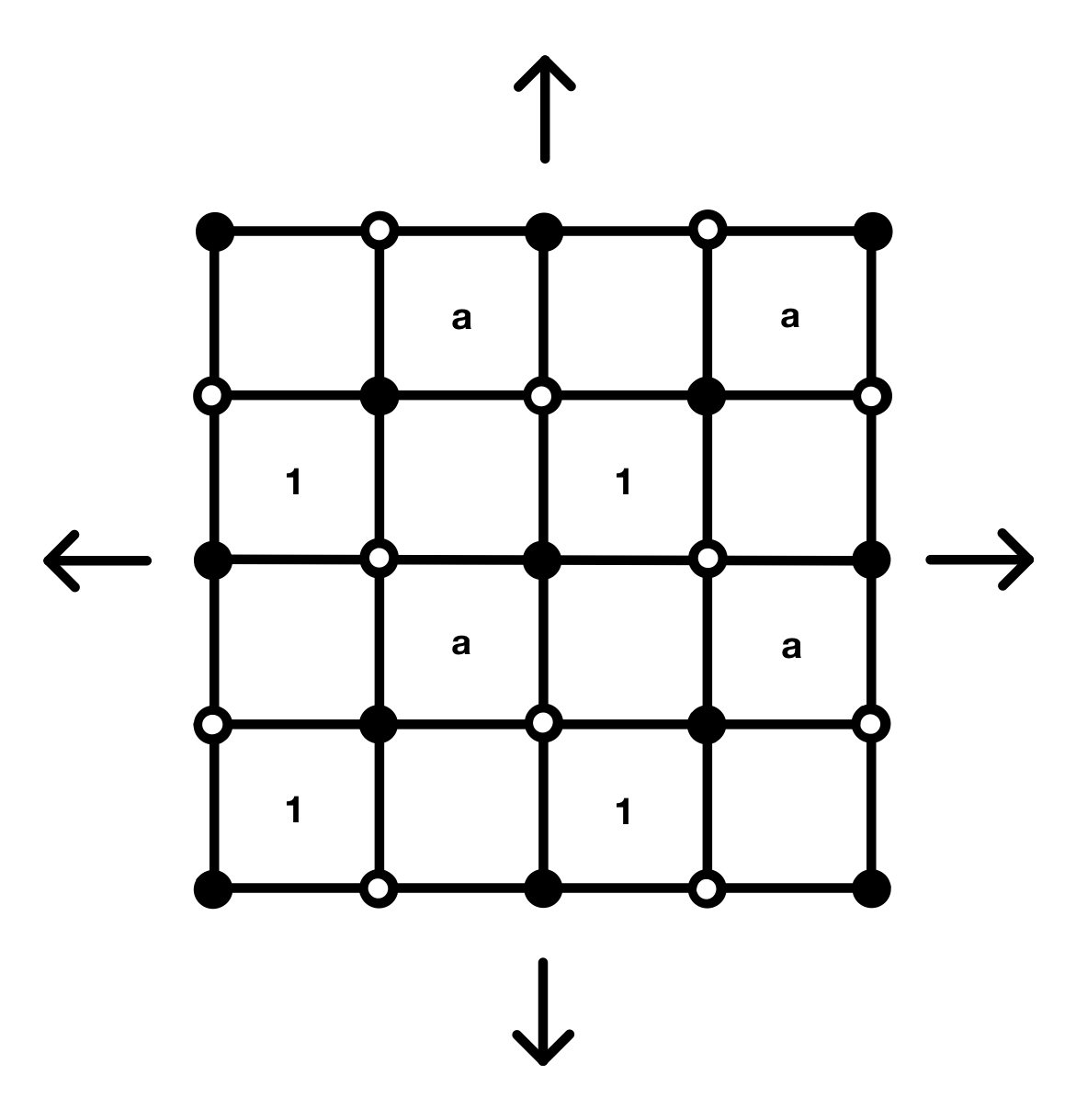}
\caption[Two-periodic weighting]
	{The two-periodic weighting on $\Z^2$. If the weight of a given face is $1$ then the edges on the boundary of that face have weight $1$, similarly if the weight of a face is $0<a<1$ then the edges on the boundary of that face have weight $a$.}
	\label{Twoperiodicweighting}
\end{figure}\\
We now give our main result, which is the convergence of the two-point correlation function of the height field to the two-point correlation function of the sine-Gordon field.
\begin{theorem}\label{mainthm}
Let $0<\eps<1$, rescale the dimer model with two-periodic weights as in figure \ref{Twoperiodicweighting} so that its vertices lie at $\eps \Z^2$. Fix $z\in \R\setminus\{0\}$ and let \begin{align}
a=1-C |z|  \eps
\end{align}
where $C =4\sqrt{2} \pi e^{-\gamma/2}$ and $\gamma$ is the Euler-Mascheroni constant.
We have the following limit,
\begin{align}
\E_a[h_\eps(f_1)h_\eps(f_2)]\rarrow \frac{1}{4\pi}\E_{SG(4\pi,z)}[\varphi(f_1)\varphi(f_2)]
\end{align}
as $\eps\rarrow 0$. \\In the above, $h_\eps$ is the smeared height function (see definition \ref{heightfielddef} in section \ref{sectionheightfield}) of the dimer model (see section \ref{dimermodelsec}), $\varphi$ is distributed under the sine-Gordon probability measure (see theorem \ref{existsineGord} in section \ref{sinegordonsection}) and $f_i\in C_c^\infty(\R^2)$ are two test functions subject to condition \ref{cond1} in section \ref{sectionheightfield}.
\end{theorem}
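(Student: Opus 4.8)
The plan is to reduce the computation of $\E_a[h_\eps(f_1)h_\eps(f_2)]$ to an asymptotic analysis of the dimer correlation kernel (the inverse Kasteleyn matrix, or coupling function), and then to recognise the limiting expression as the sine-Gordon two-point function. First I would unfold the smeared height field: by definition $h_\eps(f_i)$ is, up to the $\eps^2$ lattice weighting, an integral of the height function against $f_i$, and writing the height function as a signed sum of edge-occupation indicators along lattice paths to a reference face reduces the two-point function to a double sum
\[
\E_a[h_\eps(f_1)h_\eps(f_2)] = \sum \text{(test-function weights)} \times \mathrm{Cov}_a\big(\ind_{e_1},\ind_{e_2}\big),
\]
where the covariance of two edge indicators is, by the determinantal structure of the dimer model, the product of two off-diagonal entries of $K^{-1}$, namely $-K(e_1)K(e_2)\,K^{-1}(b_1,w_2)\,K^{-1}(b_2,w_1)$. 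Thus everything is governed by the coupling function $K^{-1}(b,w)$.

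Next I would insert the double contour integral formula for $K^{-1}$ obtained by Fourier analysis of the two-periodic Kasteleyn operator, and substitute the near-critical scaling $a = 1 - C|z|\eps$. The main work is a steepest-descent analysis of this double integral as $\eps \rarrow 0$ with the two faces held at macroscopic separation. At criticality ($a=1$) the spectral data degenerates and the integrand has a coalescing critical point, producing the massless $1/(x-y)$ decay responsible for the Gaussian free field; the detuning by $C|z|\eps$ instead opens a gap on the scale of the macroscopic coordinate, and the precise choice of scaling for $a$ should make the limiting kernel exactly the continuum massive Dirac propagator with mass set by $|z|$. The constant $C = 4\sqrt 2\,\pi e^{-\gamma/2}$ is what is needed to normalise the microscopic lattice mass to the continuum mass $z$, fixing the correct rate of approach to criticality.

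Having obtained the limiting kernel, I would reassemble the two $K^{-1}$ factors in the covariance and carry out the regularised sum over edges and paths against $f_1,f_2$. Because the model sits at the free fermion point, the product of two massive fermion propagators bosonizes to the two-point function of $SG(4\pi,z)$, so the emergent field is genuinely non-Gaussian rather than a massive Gaussian free field; matching the constants produces the prefactor $\tfrac{1}{4\pi}$, consistent with the construction of $SG(4\pi,z)$ reviewed in section \ref{sinegordonsection}. The smearing against $f_i\in C_c^\infty$, subject to condition \ref{cond1}, is precisely what makes the short-distance singularity of the kernel integrable and renders all the sums finite.

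The hard part will be the asymptotic analysis of the double contour integral: one needs uniform estimates on $K^{-1}$ across all scales — the short-distance regime, where the lattice singularity appears and must be tamed by the test functions; the intermediate regime, where the saddle point dominates; and the long-distance regime, where the mass gap supplies exponential decay — in order to justify exchanging the $\eps\rarrow 0$ limit with the sum over edges by dominated convergence. Controlling the subleading corrections to the saddle-point expansion, uniformly in the separation, and showing that they vanish in the limit, is the technical crux on which the entire identification rests.
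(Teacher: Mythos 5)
Your overall ingredients (determinantal covariance of edge indicators, near-critical asymptotics of $\K_a^{-1}$ yielding a massive Dirac propagator, identification of propagator products with sine-Gordon correlations, the factor $\tfrac{1}{4\pi}$) are the same as the paper's, but your reduction scheme contains a genuine gap. You expand both height functions as signed path sums from a reference face and propose to pass to the limit in the resulting double sum of edge covariances by dominated convergence, deferring ``uniform estimates on $K^{-1}$ across all scales'' as the technical crux. This is not merely technical: the two paths emanate from a common reference face, so the double sum contains edge pairs at every separation down to the lattice scale, and the limiting object --- a product of two massive propagators --- has a $|x-y|^{-2}$ singularity that is \emph{not} locally integrable in two dimensions (this is precisely why Theorem \ref{theorem2BW}, quoted from \cite{BW}, carries a principal value and requires disjoint supports). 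The relevant sums therefore converge only conditionally, through sign cancellations, and no dominated convergence argument of the kind you describe can close the proof. Moreover, the kernel asymptotics actually proved in the paper (Theorem \ref{invkastasymp}) hold only uniformly for points in two disjoint compact sets at macroscopic separation, so they could not feed such an all-scales argument even if it were structurally sound.

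The paper's proof is designed to avoid exactly this, and you have misread the role of Condition \ref{cond1}: it is not a device that ``makes the short-distance singularity of the kernel integrable''. It provides decompositions $f_i=\partial g_i+\overline{\partial}h_i$ with $g_1,h_1$ supported disjointly from $g_2,h_2$, so that after a discrete summation by parts the two-point function of the ($a$-)height field becomes a sum of correlations of the \emph{local} discrete derivative fields $\partial h^a_\eps,\overline{\partial}h^a_\eps$ paired against functions with disjoint supports (Proposition \ref{aheighttwoptcorr}, via Theorem \ref{derivstosinegordon}). Each such derivative is a finite difference of two edge indicators, so only macroscopically separated edge pairs ever appear, Theorem \ref{invkastasymp} applies directly, and the limit is identified with $\E_{SG(4\pi,z)}[\varphi(f_1)\varphi(f_2)]$ rigorously by matching the Bessel-function expressions against the explicit formulas for $\E_{SG(4\pi,z)}[\partial_i\varphi\,\partial_j\varphi]$ (Lemma \ref{dirderivSG}) and then reassembling via the Fourier identities \eqref{temp04311}, rather than by an informal bosonization step. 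Two further points your outline misses: the paper proves the result first for the $a$-height field and separately shows $\E_a[h_\eps(f_1)h_\eps(f_2)]-\E_a[h^a_\eps(f_1)h^a_\eps(f_2)]\rarrow 0$ (using $\P(e\in\omega)=1/4+O(\eps\ln\eps)$), and the asymptotics of $\K_a^{-1}$ are extracted from the single-integral formula of \cite{C/J} by a Laplace-type analysis, not by steepest descent on the double contour integral.
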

Theorem \ref{mainthm} is proved at the end of section \ref{sectionheightfield} where it is restated as Theorem \ref{mainthm2}. This article is structured as follows. In section \ref{dimermodelsec} we introduce the dimer model and give Theorem \ref{invkastasymp}, which regards asymptotics of its correlation kernel for $1-a$ proportional to $\eps$, when the dimers are at distance $r$ proportional to $1/\eps$.  In section \ref{sinegordonsection} we give some introductory material, followed by various formulae for the sine-Gordon field, for which we recommend \cite{BW} for further information. In section \ref{sectionheightfield} the definition of height field is given, followed by definitions, propositions and proofs related to the height field that we require to prove Theorem \ref{mainthm}. These rely on the theorem on dimer-dimer correlation asymptotics in section \ref{dimermodelsec}. In section \ref{sectionAsymptotics} we summarise a derivation of a single integral formula for the correlation kernel of the dimer model found in \cite{C/J}, starting from a double contour integral formula of the type found in \cite{K.O.S}. We then perform a rigorous asymptotic analysis of this single integral formula to prove Theorem \ref{invkastasymp}.

\begin{remark}
Similar asymptotics were computed in \cite{Ch} for the flipped/drifted dimer model (which defines the same measure as the two-periodic weighting above when e.g. the flipped model parameters are $r_1=r_4=1,r_2=r_3=a$).
 Since our inverse Kasteleyn can be related to the one found in \cite{Ch} for our weighting, the results in \cite{Ch} should agree with the asymptotics of the kernel found here, although we find small factors in the expressions which differ. We note that we have found good numerical agreement for our own asymptotics. Separately, we note that the asymptotic arguments in \cite{Ch} are somewhat unclear from a rigorous analysis perspective, here we give a direct asymptotic analysis starting from the basic formulas in \cite{K.O.S}.
 \end{remark}
 \begin{remark}
 Recently, the drifted dimer model (with parameters $s_1,s_2,s_3,s_4$) introduced in \cite{Ch} was studied in \cite{BH}. This model is equivalent to a piece of ours (with particular boundary conditions) when $s_1=s_4=1$, $s_2=s_3=a^2$, see \cite{Ch}.  Theorem \ref{mainthm} gives some information regarding a question in the open problems listed in \cite{BH}. 
 \end{remark} 
 \begin{remark} In this paper we only study the convergence of the two-point correlation function of the height field.
 The author is currently looking into obtaining full weak convergence of $h_\eps(f)$ to the sine-Gordon field.
 \end{remark}
\subsection*{Acknowledgements} I would particularly like to thank Kurt Johansson for suggesting to investigate how correlations build up as $a$ tends to 1 in the two-periodic model, for many supportive discussions along the way and for providing comments on the draft article. Many thanks goes to Sunil Chhita for helpful conversations at a conference in Stockholm. I would also like to thank both Christian Webb and Roland Bauerschmidt for discussions and helpful comments on section 3.
\section{The dimer model}\label{dimermodelsec}
In this section we introduce the relevant definitions for the dimer model, we begin with the graph and weights of the dimer model. The infinite planar graph which we are interested in is the grid graph with vertices at $\Z^2$, we denote this by $\tilde{G}=(\tilde{V}, \tilde{E})$. 
We let
\begin{align}
&\tilde{B}_0=(2\Z)\times (2\Z), && \tilde{B}_1=(2\Z+1)\times (2\Z+1),\\
&\tilde{W}_0= (2\Z+1)\times (2\Z),&& \tilde{W}_1=(2\Z)\times (2\Z+1),\nonumber
\end{align}
where $\tilde{B}=\tilde{B}_0\cup\tilde{B}_1$ are the black vertices and $\tilde{W}=\tilde{W}_0\cup\tilde{W}_1$ are the white vertices, $\tilde{V}=\tilde{B}\cup\tilde{W}$. We now define a weighting on the edge set $\tilde{E}$. First we define the "$a$-faces" of $\tilde{G}$ to be the faces with centres at the points $(2\Z+1/2)\times(2\Z+1/2)$. Define the weights of the edges on the boundary of an $a$-face to have weight $0<a<1$. Define the weights of all other edges to be equal to one. The Gibbs measure, $\P_a$, on the set of dimer configurations $\mcM(\tilde{G})$ of $\tilde{G}$ falls under a general construction given in \cite{K.O.S} and determines an infinite translation invariant gradient Gibbs measure $\mu_a$ on the set of height functions (defined below), for which $\E_a$ denotes the expectation with respect to.

\begin{remark}
In order to connect with previous formulae related to the two-periodic weighting, we rotate $\tilde{G}$ by 45 degrees clockwise around the point $(1,0)$ and expand it by factor of $\sqrt{2}$, we call this new graph $G$. If we view the edges and vertices of $G, \tilde{G}$ as subsets of $\R^2$ they are related by
\begin{align}
\tilde{G}=\frac{1}{2}\begin{pmatrix} 1 &-1\\ 1 & 1\end{pmatrix}\big(G-\begin{pmatrix} 1 \\ 0\end{pmatrix}\big)+\begin{pmatrix}1 \\ 0\end{pmatrix}.\label{changeofcoordsG}
\end{align}
We use \eqref{changeofcoordsG} to write the image of $\tilde{B}_i$ as $B_i$ and $\tilde{W}_i$ as $W_i$, $i=0,1$ and let it preserve edge weights.
\end{remark}

Explicitly for the graph $G=(V,E)$ in we have
 for $i\in\{0,1\}$,
\begin{align}
B_i=\{(x,y)\in \Z^2: x \text{ mod }2=0, y \text{ mod }2 =1, x+y \text{ mod } 4=2i+1\},\nonumber\\
W_i=\{(x,y)\in \Z^2: x \text{ mod }2=1, y \text{ mod }2 =0, x+y \text{ mod } 4=2i+1\}\nonumber
\end{align}
where $B=B_0\cup B_1$ are the black vertices, $W=W_0\cup W_1$ are the white vertices and $V=W\cup B$. The edge set $E$ is all edges of the form $b-w=\pm \vec{e}_1,\pm\vec{e}_2$, for $b\in B,w\in W$, where $\vec{e}_1=(1,1), \vec{e}_2=(-1,1)$. The weights maintain their two periodic fashion, in particular the edges along the boundary of the face centred at the point $(i,j)$ where $(i+j)$ mod $4=2$, have weight $0<a<1$ and the edges on the boundary of the face centred at the the point $(i,j)$ where $(i+j)$ mod $4=0$ have weight $1$.

This model is seen in the smooth phase of the two periodic Aztec-diamond, see \cite{C/J}. Asymptotics of its correlation kernel (defined below in section \ref{dimermodelsec}) were computed in \cite{J.M} for fixed $a\in(0,1)$. For a more complete introduction to the following see section 1.5 of \cite{J.M}. 
Define the Kasteleyn matrix
\begin{align}
\K_a(y,x)=\begin{cases} a(1-j)+j & \text{if } x=y+\vec{e}_1, y\in B_{j}\\
i(aj+(1-j)) & \text{if } x=y+\vec{e}_2, y\in B_{j}\\
aj+(1-j) & \text{if } x=y-\vec{e}_1, y\in B_{j}\\
i(a(1-j)+j) & \text{if }  x=y-\vec{e}_2,y\in B_{j}\\
0 & \text{otherwise}
\end{cases}
\end{align}
 where $i=\sqrt{-1},$ $j\in\{0,1\} $.
Suppose that $x\in W_{\eps_1}$ and $y\in B_{\eps_2}$, $\eps_1,\eps_2\in\{0,1\}$. Let $(u,v)\in \Z^2$ be such that $u (2\vec{e}_1)+v(2\vec{e}_2)$ is the translation to get from the fundamental domain containing $x$ to the fundamental domain containing $y$. The inverse Kasteleyn matrix of $G$ for the entries $x,y$ is
\begin{align}
\K_{a}^{-1}(x,y)=\frac{1}{(2\pi i)^2}\int_{\Gamma_{1}}\frac{dz}{z}\int_{\Gamma_{1}}\frac{dw}{w}\frac{Q(z,w)_{\eps_1+1,\eps_2+1}}{P(z,w)}z^uw^v
\label{infinvKasteleyn}
\end{align}
where $\Gamma_r$ is a circle or radius $r$ around the origin,
\begin{align}
Q(z,w)=\begin{pmatrix}
i(a+w) & -(a+z)\\
-(a+1/z) & i(a+1/w)
\end{pmatrix}
\end{align}
and 
\begin{align}
P(z,w)=-2-2a^2-\frac{a}{w}-aw-\frac{a}{z}-az
\end{align}
is the characteristic polynomial.
Write $e_i=(b_i,w_i)\in B_{\eps_2}\times W_{\eps_1}$ for $1\leq i \leq n$. The probabilities of cylinder sets, i.e. sets of the form $\{e_1,...,e_n\in \omega\}\subset\mcM(G)$, are given by 
\begin{align}\label{dimersdetprocess}
\P_{a}(e_1,...,e_n\in \omega)=\det(L(e_i,e_j))_{i,j}^n.
\end{align}
where the \emph{correlation kernel} is given by
\begin{align}
L(e_i,e_j)=\K_{a}(b_i,w_i)\K_{a}^{-1}(w_j,b_i).
\label{corrkernel}
\end{align}
Note that the formula in \eqref{dimersdetprocess} defines the Gibbs measure, $\P_a$.

The essential asymptotic ingredient we use to prove Theorem \ref{mainthm} is an asymptotic expansion of $L(e_i,e_j)$ when the dimers $e_i,e_j$ are  in distance $ r/\eps$ apart whilst simultaneously taking the parameter limit $a=1-\lambda \eps$, for $\lambda>0$ fixed, $r>0$ in a compact set, $\eps>0$ tending to zero. In this context we have the following theorem.
Denote the vectors
\begin{align}
\vec{e}_1=(1,1),&& \vec{e}_2=(-1,1).
\end{align}
\begin{theorem}\label{invkastasymp}
Let $a=1-\lambda\eps>0$, $\eps>0$ and $\lambda>0$ be fixed independent of $\eps$. Index the elements of $\K_a^{-1}$ as $(x(j),y(i))\in W_{\eps_1}\times B_{\eps_2}$ with coordinates given by
\begin{align}
x(j)=(x_1(j),x_2(j))=(1+\alpha_j\eps^{-1})\vec{e}_1+\beta_j\eps^{-1}\vec{e}_2+(0,2\eps_1-1)\label{temp039112}\\
y(i)=(y_1(i),y_2(i))=(1+\alpha_i\eps^{-1})\vec{e}_1+\beta_i\eps^{-1}\vec{e}_2+(2\eps_2-1,0)\nonumber
\end{align}
where $i,j\in\{0,1\}$. In \eqref{temp039112} we take $(\alpha_k\eps^{-1},\beta_k\eps^{-1})\in (2\Z)^2$, $k\in\{0,1\}$  and $(\alpha_0,\beta_0)\neq (\alpha_1,\beta_1)$.\\
Let 
\begin{align}
\alpha=\alpha_j-\alpha_i, && \beta=\beta_j-\beta_i,
\end{align}
where $\{0,1\}\ni i\neq j$, and set $z=\sqrt{\alpha^2+\beta^2}$. 
We have the following limits; \\
if $\eps_1=0,\eps_2=0$ then
\begin{align}
\frac{(-1)^{(\alpha+\beta)\eps^{-1}/2}}{\eps}\K_a^{-1}(x(j),y(i))=\frac{i\lambda}{2\pi}K_0(\lambda z/\sqrt{2})+\frac{i\lambda\beta}{\sqrt{2}\pi z}K_1(\lambda z/\sqrt{2})+o(1),
\end{align}
if $\eps_1=0,\eps_2=1$ then
\begin{align}
\frac{(-1)^{(\alpha+\beta)\eps^{-1}/2}}{\eps}\K_a^{-1}(x(j),y(i))=-\frac{\lambda}{2\pi}K_0(\lambda z/\sqrt{2})-\frac{\lambda\alpha}{\sqrt{2}\pi z}K_1(\lambda z/\sqrt{2})+o(1),
\end{align}
if $\eps_1=1,\eps_2=0$ then
\begin{align}
\frac{(-1)^{(\alpha+\beta)\eps^{-1}/2}}{\eps}\K_a^{-1}(x(j),y(i))=-\frac{\lambda}{2\pi}K_0(\lambda z/\sqrt{2})+\frac{\lambda\alpha}{\sqrt{2}\pi z}K_1(\lambda z/\sqrt{2})+o(1),
\end{align}
if $\eps_1=1,\eps_2=1$ then 
\begin{align}
\frac{(-1)^{(\alpha+\beta)\eps^{-1}/2}}{\eps}\K_a^{-1}(x(j),y(i))=\frac{i\lambda}{2\pi}K_0(\lambda z/\sqrt{2})-\frac{i\lambda\beta}{\sqrt{2}\pi z}K_1(\lambda z/\sqrt{2})+o(1)
\end{align}
uniformly for $(\alpha_0,\beta_0), (\alpha_1,\beta_1)$ in two compact, disjoint subsets of $\R^2$ as $\eps>0$ tends to zero. Here $K_\nu$ are the modified Bessel functions of the second kind (see the Appendix).
\begin{proof}
see the end of section \ref{proofinvKastasymp}.
\end{proof}
\end{theorem}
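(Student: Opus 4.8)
The plan is to pass from the double contour integral \eqref{infinvKasteleyn} to a single contour integral, and then to carry out a near-critical saddle/coalescence analysis localised at the point where the gas-phase hole of the amoeba closes.

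\textbf{Reduction to a single integral.} For fixed $z\in\Gamma_1$, the product $wP(z,w)=-aw^2-(2+2a^2+a/z+az)w-a$ is a quadratic in $w$ whose two roots $w_\pm(z)$ satisfy $w_+(z)w_-(z)=1$. First I would evaluate the inner $w$-integral by residues, retaining the root $w_-(z)$ inside $\Gamma_1$, which collapses \eqref{infinvKasteleyn} to a single integral
\begin{align}
\K_a^{-1}(x,y)=\frac{1}{2\pi i}\int_{\Gamma_1}\frac{dz}{z}\,\frac{Q(z,w_-(z))_{\eps_1+1,\eps_2+1}}{w_-(z)\,\partial_wP(z,w_-(z))}\,z^u\,w_-(z)^v,
\end{align}
with $\partial_wP(z,w)=a(1-w^2)/w^2$; the branch of $w_-(z)$ carries the square root that will generate the Bessel structure. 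This recovers the single-integral kernel of \cite{C/J} in the normalisation we need.

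\textbf{Localisation at the closing hole.} At $a=1$ the zero set of $P$ meets the unit torus only at $(z,w)=(-1,-1)$, and for $a=1-\lambda\eps<1$ it misses it by $O(\eps)$; hence $w_\pm(z)$ coalesce near $-1$ as $z\to-1$ and the integrand concentrates in an $\eps$-neighbourhood of $z=-1$. Writing $z=-e^{s}$, $w_-(z)=-e^{t}$ and rescaling $s=\eps\sigma,\ t=\eps\tau$, one finds
\begin{align}
P\big(-e^{\eps\sigma},-e^{\eps\tau}\big)=\eps^2\big(\sigma^2+\tau^2-2\lambda^2\big)+O(\eps^3),
\end{align}
so the vanishing denominator becomes a massive-propagator denominator of mass $\sqrt2\,\lambda$, and on $\Gamma_1$ the inner root sits on the shell $\tau=-\sqrt{2\lambda^2-\sigma^2}$. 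Since \eqref{temp039112} gives $u=\alpha/(2\eps)+O(1)$ and $v=\beta/(2\eps)+O(1)$, the factor $z^uw^v$ splits as the oscillatory prefactor $(-1)^{u+v}=(-1)^{(\alpha+\beta)\eps^{-1}/2}$ times a residual $e^{\sigma\alpha/2+\tau\beta/2}$ of order one; extracting this sign is precisely the normalisation on the left-hand side of the statement. The numerator vanishes to order $\eps$ at the base point, with $a+w=-\eps(\lambda+\tau)$, $a+z=-\eps(\lambda+\sigma)$, $a+1/z=\eps(\sigma-\lambda)$, $a+1/w=\eps(\tau-\lambda)$, and the four choices of $(\eps_1,\eps_2)$ select the four entries of $Q$, supplying the factors of $i$ and the $\alpha$- versus $\beta$-dependence.

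\textbf{Identification of the limit.} Using $dz/z=\eps\,d\sigma$ and $w_-\partial_wP=2\eps\tau+O(\eps^2)$, the powers of $\eps$ combine as $\eps\cdot\eps/\eps=\eps$, matching the $1/\eps$ normalisation, and in the case $\eps_1=\eps_2=0$ the rescaled integral $\tfrac{(-1)^{(\alpha+\beta)\eps^{-1}/2}}{\eps}\K_a^{-1}$ converges (up to the scalar prefactor of the relevant $Q$-entry, here $-i$) to
\begin{align}
\frac{1}{2\pi i}\int \frac{\lambda+\tau}{2\tau}\,e^{\sigma\alpha/2+\tau\beta/2}\,d\sigma,\qquad \tau=-\sqrt{2\lambda^2-\sigma^2},
\end{align}
with $\sigma$ running along the rescaled image of $\Gamma_1$. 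Splitting $(\lambda+\tau)/(2\tau)$ into a part proportional to $1/\sqrt{2\lambda^2-\sigma^2}$ and a constant part, the first integrates to a standard one-dimensional representation of $\frac{1}{2\pi}K_0(\lambda\sqrt{\alpha^2+\beta^2}/\sqrt2)$ (the mass being $\sqrt2\,\lambda$ and the conjugate position $(\alpha/2,\beta/2)$, so the argument is $\lambda z/\sqrt2$), while the constant part is a $\beta$-derivative of the same integral and hence produces $-K_0'=K_1$ with the $\beta/z$ prefactor; the $\eps_2=1$ cases instead carry a numerator linear in $\sigma$, giving an $\alpha$-derivative and the $\alpha/z$ prefactor. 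Collecting the $Q$-entry prefactors then reproduces the four displayed expansions.

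\textbf{Main obstacle.} The crux is the \emph{coalescing} nature of the limit: the two roots $w_\pm$ (equivalently the two saddles) merge exactly as $a\to1$, so this is a double-scaling analysis rather than an ordinary steepest-descent one, and the local model is massive rather than the usual Gaussian. The technical work is to make the $o(1)$ rigorous and \emph{uniform} over $(\alpha_0,\beta_0),(\alpha_1,\beta_1)$ in compact disjoint sets: bounding the contribution of the part of $\Gamma_1$ bounded away from $z=-1$ (where $|P|$ is bounded below and $z^u$ provides no decay), controlling the error in the quadratic expansion of $P$ and in $w_-(z)$ including the choice of square-root branch, and justifying the interchange of the $\eps\to0$ limit with the integral. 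Keeping track of the $\eps$-powers and the oscillatory prefactor through the residue step is what ties the normalisation to the Bessel arguments.
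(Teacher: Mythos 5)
Your route is genuinely different from the paper's, and its local computations are essentially right: the coalescence point $(z,w)=(-1,-1)$, the expansion $P(-e^{\eps\sigma},-e^{\eps\tau})=\eps^2(\sigma^2+\tau^2-2\lambda^2)+O(\eps^3)$, the order-$\eps$ vanishing of the $Q$-entries, the residue denominator $w\,\partial_wP\approx 2\eps\tau$, and the identification of the limit through the Fourier--Basset representation of $K_0$ (with the constant part of $(\lambda+\tau)/(2\tau)$ acting as a $\beta$-derivative and producing the $K_1$ terms) all reproduce the stated Bessel structure. The paper instead follows \cite{C/J}: it first substitutes $z=u_2/u_1$, $w=u_1u_2$ and then $u_i=i^{-1}G(w_i)$ with $G$ the inverse Joukovski-type map \eqref{Gfunction}, so the residue is taken at the rational point $w_2=1/w_1$ and the single integral \eqref{ikl} has integrand $G(w_1)^{|\ell|}G(1/w_1)^{|k|}$ in which \emph{both} large exponents are non-negative and \emph{both} factors have modulus strictly less than one on the contour; the asymptotics are then a Laplace-type analysis with a square-root local model (Lemma \ref{'saddleptfn'}), tail control by strict monotonicity of $\mcR f_{r_1,r_2}$ (Lemma \ref{descent}), and the Appendix identities (Proposition \ref{prop64}).

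The genuine gap in your proposal is the localisation step, and it is not a removable technicality. On $\Gamma_1$ the factor $z^u$ has modulus one: it oscillates and never decays, so the only decay available in your single integral is $|w_-(z)|^{v}$, which localises the integral only when that exponent tends to $+\infty$. The theorem must hold uniformly for $(\alpha,\beta)\neq(0,0)$ in compact sets, which includes the diagonal separations $\alpha=0$ or $\beta=0$; there one of $u,v$ is $O(1)$, $|w_-^{v}|$ provides no concentration whatsoever, and indeed your limiting integral itself degenerates: since $(\lambda+\tau)/(2\tau)\to 1/2$ as $|\sigma|\to\infty$, the limit formula is only conditionally (oscillatorily) convergent when the conjugate coordinate vanishes, whereas the paper's limiting integral is absolutely convergent for all $(r_1,r_2)\in\R^2_{\geq0}\setminus\{0\}$ because $r_1+r_2\geq r^*>0$ multiplies $\mcR[\sqrt{1+4it}]\sim\sqrt{2t}$. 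You flag exactly this ("$z^u$ provides no decay") as the main obstacle, but the proposal contains no mechanism -- stationary-phase bounds with uniform control of an amplitude whose singularities approach the contour, or a contour deformation off $\Gamma_1$ -- to overcome it; the paper's change of variables is precisely the device that converts this oscillation into monotone genuine decay, and its remark that the closely related analysis of \cite{Ch} is ``somewhat unclear from a rigorous analysis perspective'' refers to this very point. A second, smaller omission: your residue step ignores the pole at $w=0$ coming from the $1/w$ entries of $Q$ and from exponents $v\leq 0$ (which do occur, depending on the sign of the coordinate differences), so your single-integral formula is incomplete in those cases; the paper sidesteps this case split with the symmetry $\mathcal{I}_a(k,\ell)=\mathcal{I}_a(|k|,|\ell|)$ of \eqref{ikl1} before evaluating any residue.
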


\section{The sine-Gordon field}\label{sinegordonsection}
In this section we give some introductory material on the sine-Gordon field, see \cite{BW} for more details and references therein. 
Informally, one can think of the sine-Gordon field with parameters $(\beta,z)$, $\beta\in (0,8\pi]$, $z\in\R$, as something like a probability measure obtained by reweighting the Gaussian free field $\mu_{GFF}$ as
\begin{align}
\frac{1}{\mathcal{Z}}\exp\big[{2z\int dx :\cos(\sqrt{\beta}\varphi(x)): \  }\big]\mu_{GFF}(d\varphi).
\end{align}
This expression does not make sense as it is currently written. For example, the Gaussian free field is a measure on linear functionals, so the pointwise evaluation $\varphi(x)$ does not make sense. In the following we define the sine-Gordon field more precisely. If the reader is already familiar with the Gaussian free field and regularisation, they may wish to skip to theorem \ref{existsineGord}.\\
%where $\Lambda=\Lambda_L=\{|x|\leq L\}\subset \R^2$. 
%A popular review article for the Gaussian free field for mathematicians is \cite{S}. This review article emphasises the Gaussian Hilbert space construction (a collection of Gaussian random variables indexed by a Hilbert space), see \cite{Ja}. 

For pedagogical reasons, we will define the massive/massless GFF as just a probability measure on two particular duals of "nuclear" spaces (we will not define nuclear spaces in general here). We also introduce a regularised GFF, which is necessary to understand the definition of the sine-Gordon field \cite{BW}. We draw from \cite{GV}, but recommend \cite{BDW}, \cite{LSSW} and \cite{GF}.\\
Denote the Schwartz space $\mcS(\R^2)$ to be the set of smooth real-valued functions on $\R^2$ whose derivatives all decay faster than any polynomial at infinity. Denote the subspace of functions in $\mcS(\R^2)$ with mean zero as $\mcS_0(\R^2)$, that is
\begin{align}
\mcS_0(\R^2)=\big\{f\in\mcS(\R^2) \ ; \ \int_{\R^2}f(x)dx=0 \big\}.
\end{align} We define the topology on $\mcS(\R^2)$ to be generated by the family of semi-norms 
\begin{align}
\big\{||f||_{n,i,j}:=\sup_{x\in \R^2}|x|^n|\partial^i_{x_1}\partial^j_{x_2}f(x) \ ; \ n,i,j\geq 0\big\},
\end{align}
which induces the subspace topology on $\mcS_0(\R^2)$.
The space of tempered distributions $\mcS'(\R^2)$ is defined as the space of continuous linear functionals on $\mcS(\R^2)$. Similarly, the space of tempered distributions modulo additive constants $\mcS_0'(\R^2)$ is defined as the space of continuous linear functionals on $\mcS_0$. Observe that $\mcS'(\R^2)\subset \mcS_0'(\R^2)$ because $\mcS_0(\R^2)\subset \mcS(\R^2)$.\\
Let $X$ denote either  $\mcS(\R^2)$ or $\mcS_0(\R^2)$, and $X'$ denote the topological dual of $X$. 
We introduce a class of measures on $X'$. A \emph{cylinder measure on $X'$} is a measure defined on the $\sigma$-algebra generated by all sets of the form
\begin{align}
\{F\in X'\ ; \ F(f)\in B\}
\end{align}
where $f\in X$ and $B$ is Borel in $\R$.
The Bochner-Minlos theorem gives necessary and sufficient conditions for a cylinder probability measure $\mu$ to exist on the dual of a \emph{nuclear space}, see chapter 4 in \cite{GV}. However, since we are only interested in the two nuclear spaces $\mcS$ and $\mcS_0$ we restate the theorem in our context. These conditions are in terms of properties of the \emph{characteristic functional}, which  is defined by $\mcL:X\rarrow \C$;
\begin{align}
\mcL(f)=\int_{X'} e^{iF(f)}d\mu(F).
\end{align}
\begin{theorem}[Bochner-Minlos for either $\mcS'(\R^2)$ or $\mcS_0'(\R^2)$]
Let $X$ denote either  $\mcS(\R^2)$ or $\mcS_0(\R^2)$, and $X'$ denote the topological dual of $X$. A function $\mcL:X\rarrow \C$ is the characteristic functional of a probability measure $\mu$ on $X'$ if and only if $\mcL(0)=1$, $\mcL$ is continuous and $\mcL$ is positive definite, which means
\begin{align}
\sum_{j,k=1}^nz_j \overline{z}_k\mcL(f_j-f_k)\geq 0
\end{align}
for all $f_1,...,f_n\in X$ and $z_1,...,z_n\in \C$.
\end{theorem}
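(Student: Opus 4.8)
The plan is to treat the two implications separately; necessity is routine and the content lies in sufficiency. For necessity, suppose $\mu$ is a probability measure on $X'$ with characteristic functional $\mcL$. Then $\mcL(0)=\int_{X'}1\,d\mu=1$. Positive definiteness is immediate from the pointwise identity
\begin{align}
\sum_{j,k=1}^n z_j\overline{z}_k\,\mcL(f_j-f_k)=\int_{X'}\Big|\sum_{j=1}^n z_j e^{iF(f_j)}\Big|^2 d\mu(F)\geq 0.\nonumber
\end{align}
For continuity, if $f_m\to f$ in $X$ then $F(f_m)\to F(f)$ for every $F\in X'$ by continuity of $F$, so $e^{iF(f_m)}\to e^{iF(f)}$ pointwise; since the integrands are bounded by $1$, dominated convergence gives $\mcL(f_m)\to\mcL(f)$, and as $X$ is metrizable this sequential continuity is continuity.

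For sufficiency I would reconstruct $\mu$ from prescribed finite-dimensional marginals. Fix $f_1,\dots,f_n\in X$ and put $\phi(t)=\mcL\big(\sum_j t_j f_j\big)$ for $t\in\R^n$. By hypothesis $\phi$ is continuous, $\phi(0)=1$, and $\phi$ is positive definite on $\R^n$, so the classical Bochner theorem yields a probability measure $\mu_{f_1,\dots,f_n}$ on $\R^n$ with Fourier transform $\phi$. Linearity of $F\mapsto(F(f_1),\dots,F(f_n))$ renders this family consistent under the natural projections and under linear relations among the $f_j$, so Kolmogorov's extension theorem produces a finitely additive cylinder pre-measure $\mu$ on the algebraic dual $X^{*}$, carried by the $\sig$-algebra generated by the evaluations $F\mapsto F(f)$ and having the prescribed characteristic functional.

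The hard part is upgrading this cylinder pre-measure to a countably additive probability measure supported on the topological dual $X'$, and this is exactly where nuclearity is used. I would exploit the Hermite grading of $\mcS(\R^2)$: with $A=-\Delta+\abs{x}^2+1$ and $H_p$ the completion of $\mcS(\R^2)$ under $\norm{f}_p:=\norm{A^{p/2}f}_{L^2}$, one has $\mcS(\R^2)=\bigcap_{p\geq 0}H_p$, $\mcS'(\R^2)=\bigcup_{p\geq 0}H_{-p}$, and the inclusions $H_q\hookrightarrow H_p$ are Hilbert--Schmidt once $q-p>2$, since $\norm{\iota}^2_{HS}=\sum_n\lam_n^{-(q-p)}<\infty$ for the eigenvalues $\lam_n$ of $A$ (which satisfy $\lam_n\sim\sqrt{n}$ in dimension two). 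Continuity of $\mcL$ at $0$ means it is continuous with respect to some $\norm{\cdot}_p$; choosing $q>p+2$ so that $H_q\hookrightarrow H_p$ is Hilbert--Schmidt, the Minlos estimate bounds the cylinder mass of the complement of a ball $\{\norm{F}_{-q}\leq R\}$ by a quantity controlled by $\sup_{\norm{f}_p\leq\delta}\big(1-\mathrm{Re}\,\mcL(f)\big)$ together with $R^{-2}$ times $\norm{\iota}^2_{HS}$. This estimate is obtained by averaging $1-\mathrm{Re}\,\mcL$ against Gaussians on finite-dimensional coordinate subspaces and passing to the limit; letting $R\to\infty$ and then $\delta\to 0$ establishes tightness of the marginals on balls of $H_{-q}$. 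Prokhorov's theorem then promotes the pre-measure to a Radon probability measure concentrated on $H_{-q}\subset X'$, whose characteristic functional is $\mcL$ by construction.

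Finally, for $X=\mcS_0(\R^2)$ the same argument applies verbatim: $\mcS_0(\R^2)$ is a closed subspace of the nuclear space $\mcS(\R^2)$, and closed subspaces of nuclear spaces are nuclear, so it inherits a Hilbert--Schmidt grading and the identical Minlos estimate, with dual $\mcS_0'(\R^2)$. The main obstacle throughout is the uniform control of the cylinder mass outside large balls --- the tightness estimate --- which is precisely the quantitative content supplied by the Hilbert--Schmidt (nuclearity) hypothesis; without it the pre-measure need only be finitely additive and may fail to concentrate on $X'$.
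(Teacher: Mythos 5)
The paper itself contains no proof of this theorem: it is quoted as a classical result with a pointer to chapter 4 of \cite{GV}, so there is no internal argument to compare against. Your proposal reconstructs exactly the standard Minlos proof found in that reference: necessity via the positive-definiteness identity and dominated convergence (using metrizability of $\mcS(\R^2)$), and sufficiency via Bochner's theorem on finite-dimensional marginals, Kolmogorov consistency to get a cylinder pre-measure, and the Hilbert--Schmidt tightness estimate coming from the Hermite grading $\mcS(\R^2)=\bigcap_p H_p$. The necessity half is complete, the eigenvalue asymptotics $\lam_n\sim\sqrt{n}$ and the threshold $q-p>2$ for the Hilbert--Schmidt embedding are correct in dimension two, and the remark that $\mcS_0(\R^2)$ inherits nuclearity as a closed subspace is also correct.

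There is, however, one step that is wrong as stated. You assert that continuity of $\mcL$ at $0$ means it is continuous with respect to a single norm $\norm{\cdot}_p$, and you conclude that the limit measure is concentrated on a single $H_{-q}$. For a \emph{nonlinear} functional on a Fr\'echet space this is not automatic: continuity at $0$ only gives, for each $\eps>0$, an order $p_\eps$ and a radius $\delta_\eps$ such that $1-\mathrm{Re}\,\mcL(f)\leq \eps$ on $\{\norm{f}_{p_\eps}\leq \delta_\eps\}$, and $p_\eps$ may grow as $\eps\rarrow 0$; the self-improving inequalities for positive-definite functions (such as $1-\mathrm{Re}\,\mcL(f)\leq n^2\big(1-\mathrm{Re}\,\mcL(f/n)\big)$) let you propagate a bound outward in scale but never convert a bound at one $\eps$ into smallness at all scales for a fixed $p$. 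Consequently the Minlos estimate yields: for every $\eps>0$ there is a $q_\eps$ and a closed ball $B_\eps\subset H_{-q_\eps}$ whose complement has cylinder outer measure at most $C\eps$. Each such ball is weak-$*$ compact in $X'$ by Banach--Alaoglu, so tightness in $X'$ and the countably additive Radon extension with characteristic functional $\mcL$ still follow --- which is all the theorem asserts --- but the measure need not give full mass to any single $H_{-q}$. A concrete counterexample to your stronger conclusion: take $F_k\in H_{-q_k}\setminus H_{-q_{k-1}}$ and $\mu=\sum_k 2^{-k}\delta_{F_k}$; its characteristic functional $\sum_k 2^{-k}e^{iF_k(f)}$ is continuous on $\mcS(\R^2)$, yet $\mu(H_{-q})<1$ for every $q$. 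So the final localization claim should be dropped (or replaced by the hypothesis that the continuity estimate holds for one fixed $p$ uniformly in $\eps$); with that repair your argument is the standard and correct proof.
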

The Gaussian free field with "mass" $m>0$ is the probability measure on $\mcS'(\R^2)$ defined by the characteristic functional
\begin{align}
\mcL_{GFF(m)}(f)=e^{-\frac{1}{2}(f,(-\nabla^2+m^2)^{-1} f)_{L_2}}\label{temp4204}
\end{align}
for $f\in \mcS(\R^2)$. The case $m=0$ is defined below for comparison. The operator $(-\nabla^2+m^2)^{-1}$ is an integral operator with an integral kernel given by 
\begin{align}
\frac{1}{2\pi}K_0(m|x-y|)=\int_0^\infty ds\frac{e^{-|x-y|^2/4s-m^2s}}{4\pi s} \label{K0intrep}
\end{align}
where $K_0$ is a modified Bessel function of the second kind and the integral representation \eqref{K0intrep} can be found in \cite{DLMF}.
By changing $f$ to $tf$ in \eqref{temp4204} for a parameter $t\in \R$, observe we have now specified the Fourier transform of a collection of mean zero Gaussian random variables on $\R$ indexed by $\mcS(\R^2)$, that is for each fixed $f\in\mcS(\R^2)$:
\begin{align}
F(f)\sim \mcN(0,(f,(-\nabla^2+m^2)^{-1} f)_{L_2}).
\end{align}
The covariances between these random variables are given by ($f,g\in\mcS(\R^2)$);
\begin{align}
&\text{Cov}[F(f)F(g)]=(f,(-\nabla^2+m^2)^{-1} g)_{L_2}\\
&=\int_{\R^2\times \R^2} f(x)\frac{1}{2\pi}K_0(m|x-y|)g(y)dxdy\nonumber\\&=\int_{\R^2\times \R^2} f(x)\int_0^\infty ds\frac{e^{-|x-y|^2/4s-m^2s}}{4\pi s}g(y)dxdy.\nonumber
\end{align}

Recall that the inverse Laplace operator $(-\nabla^{2})^{-1}$ is an integral operator with integral kernel given by $-\frac{1}{2\pi}\log(|x-y|)$.
The Gaussian free field with mass $m=0$ is not defined by the characteristic functional \eqref{temp4204}, and $(f,(-\nabla^2)^{-1}f)_{L_2}$ is not even finite for all $f\in \mcS(\R^2)$. If we denote $\hat{f}$ as the Fourier transform of $f\in S_0$ then indeed,
\begin{align}
(f,(-\nabla^{2})^{-1}f)_{L_2}&=\int_{\R^2\times \R^2}f(x)(-\frac{1}{2\pi}\log|x-y|)f(y)dxdy\label{temp02586}\\
&=\frac{1}{2\pi}\int_{\R^2} |\omega|^{-2}|\hat{f}(\omega)|^2d\omega\nonumber
\end{align}
is finite since $\hat{f}(0)=0$ iff $\int f(x)dx=0$. One can show that \eqref{temp02586} is not integrable when $f\in \mcS\setminus \mcS_0$. However, we can restrict the characteristic functional to $f\in\mcS_0$ and define the massless Gaussian free field $\mu_{GFF(0)}$ on $\mcS_0'(\R^2)$ precisely via
\begin{align}
\mcL_{GFF(0)}(f)=e^{-\frac{1}{2}(f,(-\nabla^2)^{-1} f)_{L_2}}.
\end{align}

We now introduce a regularisation of  the \emph{massive} Gaussian free field $\mu_{GFF(m)}$, $m>0$, which appears in our regularisation of the sine-Gordon field \eqref{temp3023}.
If one replaces the integral operator $(-\nabla^2+m^2)^{-1}$ in the characteristic functional \eqref{temp4204} 
 by the integral operator with integral kernel
 \begin{align}
 \int_{\eps^2}^\infty ds\frac{e^{-|x-y|^2/4s-m^2s}}{4\pi s},\quad \eps>0,
 \end{align}  then one obtains a new characteristic functional $\mcL_{GFF(m,\eps)}$. The associated Gaussian measure $\mu_{GFF(m,\eps)}$ of $\mcL_{GFF(m,\eps)}$ localises on smooth tempered distributions \cite{BW} - that is - $\mu_{GFF(m,\eps)}$ has its full support given by the subspace of linear functionals in $\mcS'$ of the form 
 \begin{align}
 F(f)=\int \psi(x)f(x)dx,\quad \psi\in C^\infty.\label{temp1094}
 \end{align}
  We can consider  \eqref{temp1094}  as a map from (a subset of) $C^\infty$ into $\mcS'$. This map is an injection and allows us to define a pointwise evaluation of a smooth tempered distribution $F$ via
  \begin{align}
  F(x):= \lim_{\delta\rarrow 0}\int \psi(y)\frac{e^{-|x-y|^2/(2\delta)}}{2\pi\delta^2}dy=\int \psi(y)\delta(y-x)dy=\psi(x).
  \end{align}
We can instead consider $\mu_{GFF(m,\eps)}$ as a measure directly on $C^\infty$, we call this new probability measure  a \emph{regularisation} of $\mu_{GFF(m)}$.

We are now ready to introduce the sine-Gordon field. One would like to define the \emph{massless} sine-Gordon field $SG(\beta,z)$ with $\beta\in (0,8\pi)$, $z\in \R$ as the weak limit $L\rarrow \infty, m\rarrow 0, \eps\rarrow 0$ of the probability measure specified by
\begin{align}
d\mu_{SG(\beta,z \ : \ \eps,m,L)}(\varphi)\propto \exp \Big [2z\int_{\Lambda} \eps^{-\beta/4\pi}\cos(\sqrt{\beta}\varphi(x))dx\Big]d\mu_{GFF(\eps,m)}(\varphi)\label{temp3023}
\end{align}
where $\Lambda:=\Lambda_L=\{x\in \R^2  \ ; \ |x|\leq L\}$. Note that due to the regularisation $\mu_{GFF(m,\eps)}$ of $\mu_{GFF(m)}$, the pointwise evaluation $\varphi(x)$ in \eqref{temp3023} makes sense. 

In this context, let $\mcL_{SG(\beta,z \ :  \  \eps,m,L)}(f)$ denote the characteristic functional of \eqref{temp3023}.
It was proven in \cite{BW} (see Theorem 1.6 and its proof) that for $\beta\in(0,6\pi)$, $z\in \R$, the functional defined by the limit
\begin{align}
\lim_{L\rarrow \infty}\lim_{m\rarrow 0}\lim_{\eps\rarrow 0}\mcL_{SG(\beta,z \ :  \  \eps,m,L)}(f)
\end{align}
for $f\in C_c^\infty(\R^2)$ with $\int f=0$, is uniformly continuous in the topology of $\mcS_0$ and extends to a unique characteristic functional $\mcL_{SG(\beta,z)}^*(f)$ on $\mcS_0$ of a probability measure $\mu_{SG(\beta,z)}^*$. 

Let $\gamma_N(x)=e^{-|x|^2/(2N)}/(2\pi N)$ denote a Gaussian density on $\R^2$ with mean zero and variance $N$.
In the case $\beta=4\pi$, we have the following further result (see Theorem 1.3 and its proof in \cite{BW}):
\begin{theorem}[Existence of $\varphi$ \cite{BW}]\label{existsineGord}
Let $z\in\R\setminus \{0\}$, define an extension of $\mcL_{SG(4\pi,z)}^*$ to $f\in\mcS$ by
\begin{align}
\mcL_{SG(4\pi,z)}(f):=\lim_{N\rarrow \infty}\mcL_{SG(4\pi,z)}^*(f-\hat{f}(0)\gamma_N).
\end{align} 
Then $\mcL_{SG(4\pi,z)}$ defines the characteristic functional of a probability measure $\mu_{SG(4\pi,z)}$  on $\mcS'$ (not $\mcS_0'$) and, in this paper, we call $\mu_{SG(4\pi,z)}$ the probability measure of the sine-Gordon field $SG(4\pi,z)$. We denote the expectation with respect to $\mu_{SG(4\pi,z)}$ by $\E_{SG(4\pi,z)}$.
\end{theorem}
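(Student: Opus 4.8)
The plan is to verify the three hypotheses of the Bochner--Minlos theorem for $\mcS'(\R^2)$ (stated above) for the candidate functional $\mcL_{SG(4\pi,z)}$ on $\mcS(\R^2)$, and then read off the measure $\mu_{SG(4\pi,z)}$ on $\mcS'$. Throughout I write $R_N f:=f-\hat f(0)\gamma_N$. Two elementary facts organise everything: $R_N$ is linear in $f$, and since $\int\gamma_N=1$ and $\hat f(0)=\int f$ we have $\int R_N f=0$, i.e. $R_N f\in\mcS_0(\R^2)$. Hence $\mcL_{SG(4\pi,z)}^*(R_N f)$ is defined for every $f\in\mcS$ and every $N$, and the candidate is $\mcL_{SG(4\pi,z)}(f)=\lim_{N\rarrow\infty}\mcL_{SG(4\pi,z)}^*(R_N f)$.

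Two of the three conditions are almost immediate, modulo existence of the limit. For normalisation, $R_N 0=0$ gives $\mcL_{SG(4\pi,z)}(0)=\mcL_{SG(4\pi,z)}^*(0)=1$. For positive-definiteness, linearity gives $R_N(f_j-f_k)=R_N f_j-R_N f_k$, so for $f_1,\dots,f_n\in\mcS$ and $\zeta_1,\dots,\zeta_n\in\C$,
\[
\sum_{j,k=1}^n \zeta_j\overline{\zeta_k}\,\mcL_{SG(4\pi,z)}^*(R_N f_j-R_N f_k)\geq 0
\]
for each fixed $N$, because $\{R_N f_i\}\subset\mcS_0$ and $\mcL_{SG(4\pi,z)}^*$ is positive-definite on $\mcS_0$ (being the characteristic functional of $\mu_{SG(4\pi,z)}^*$ established earlier). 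Taking $N\rarrow\infty$ preserves the inequality.

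The substantive content is the existence of the limit and continuity, both of which I would reduce to a single infrared estimate on the two-point form of $\mu_{SG(4\pi,z)}^*$. Let $F$ be the coordinate process under $\mu_{SG(4\pi,z)}^*$, put $C^*(g,h):=\E_{\mu^*}[F(g)F(h)]$ for $g,h\in\mcS_0$, and set $X_N:=F(R_N f)$, so that $\mcL_{SG(4\pi,z)}^*(R_N f)=\E_{\mu^*}[e^{iX_N}]$. Since $R_N f-R_M f=\hat f(0)(\gamma_M-\gamma_N)\in\mcS_0$, we have $\|X_N-X_M\|_{L^2(\mu^*)}^2=\hat f(0)^2\,C^*(\gamma_M-\gamma_N,\gamma_M-\gamma_N)$. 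The key claim is that this vanishes as $N,M\rarrow\infty$; granting it, $\{X_N\}$ is Cauchy in $L^2(\mu^*)$ with limit $X_\infty(f)$, and since $|e^{iX_N}-e^{iX_\infty}|\leq|X_N-X_\infty|$ we get $\mcL_{SG(4\pi,z)}^*(R_N f)\rarrow\E_{\mu^*}[e^{iX_\infty(f)}]$, so the limit exists. Continuity then follows from showing the linear map $f\mapsto X_\infty(f)$ is bounded from $\mcS$ into $L^2(\mu^*)$: if $\|X_\infty(f)\|_{L^2(\mu^*)}^2=\lim_N C^*(R_N f,R_N f)$ is dominated by finitely many of the Schwartz seminorms $\|f\|_{n,i,j}$, then for $f_k\rarrow f$ in $\mcS$ one has $|\mcL_{SG(4\pi,z)}(f_k)-\mcL_{SG(4\pi,z)}(f)|\leq\|X_\infty(f_k-f)\|_{L^2(\mu^*)}\rarrow 0$. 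With all three conditions verified, Bochner--Minlos for $\mcS'$ produces $\mu_{SG(4\pi,z)}$ on $\mcS'$.

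The hard part, and the reason this is special to $\beta=4\pi$, is exactly the two covariance estimates above. For the massless GFF the analogue fails: its zero mode is logarithmically divergent, which is precisely why $\mu_{GFF(0)}$ only lives on $\mcS_0'$. What rescues the extension to $\mcS'$ is that the $\cos$ interaction confines the zero mode, and at the free fermion point $\beta=4\pi$ there is an explicit massive free-fermion (Dirac) representation of $C^*$ whose low-frequency behaviour is integrable. I would extract both $C^*(\gamma_M-\gamma_N,\gamma_M-\gamma_N)\rarrow 0$ and the seminorm domination of $\|X_\infty(f)\|_{L^2(\mu^*)}^2$ directly from that representation; establishing these infrared bounds is where essentially all the work lies.
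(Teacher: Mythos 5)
There is nothing in this paper to compare your argument against: Theorem \ref{existsineGord} is not proved here, it is imported from \cite{BW} (the text points to Theorem 1.3 and its proof there), and the covariance formulas appearing later in section \ref{sinegordonsection} are likewise quoted from \cite{BW}. Judged on its own terms, your reduction is structurally sound and is, in outline, the expected construction: $R_Nf=f-\hat f(0)\gamma_N$ lands in $\mcS_0$, normalisation and positive-definiteness pass to the pointwise limit (in fact positive-definiteness is automatic once you have the representation $\mcL_{SG(4\pi,z)}(f)=\E_{\mu^*}[e^{iX_\infty(f)}]$ with $X_\infty$ linear, so that separate check is redundant), and both existence of the limit and continuity reduce to second-moment control. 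Your key infrared claim is also correct and can be read off from the formula quoted in this paper: $C_{A|z|}(p)=(A|z|)^{-2}F(|p|/(A|z|))$ with $F(x)=x^{-2}-4\,\mathrm{arcsinh}(x/2)/(x^3\sqrt{4+x^2})$ satisfies $F(x)\rarrow 1/6$ as $x\rarrow 0$ and $F(x)=O(x^{-2})$ at infinity, so $C_{A|z|}$ is bounded near $p=0$; since $\widehat{\gamma_N}(p)=e^{-N|p|^2/2}$, dominated convergence then yields both $C^*(\gamma_M-\gamma_N,\gamma_M-\gamma_N)\rarrow 0$ and the Schwartz-seminorm bound on $\lim_N C^*(R_Nf,R_Nf)=\frac{1}{(2\pi)^3}\int|\hat f(p)|^2C_{A|z|}(p)\,dp$. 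This boundedness at $p=0$ is exactly the zero-mode confinement that fails for the massless GFF, where the kernel behaves like $|p|^{-2}$; you have correctly located why $\beta=4\pi$, $z\neq 0$ is special.

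The genuine gap is the step you pass over when you write $\|X_N-X_M\|_{L^2(\mu^*)}^2=\hat f(0)^2\,C^*(\gamma_M-\gamma_N,\gamma_M-\gamma_N)$. This presupposes (a) that $F(g)\in L^2(\mu^*)$ for $g\in\mcS_0$, and (b) that these second moments are computed by the kernel $C_{A|z|}$. Neither is formal: $\mu^*$ was produced abstractly by Bochner--Minlos from convergence of characteristic functionals, and such convergence does not by itself give finiteness of second moments of the limiting measure, nor their identification with the limits $\lim_L\lim_m\lim_\eps$ of the regularized covariances; one needs uniform integrability, which in \cite{BW} comes from uniform moment bounds supplied by the free-fermion representation. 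There is also a circularity to guard against: the two-point formula you would naturally cite is stated (both here and in \cite{BW}) for the extended measure $\mu_{SG(4\pi,z)}$, whose existence is precisely what is being proved, so you must obtain the covariance at the level of $\mu^*$ (equivalently, of the regularized measures) first. Your closing concession that "essentially all the work" lies in these bounds is accurate; as written, the proposal is a correct and well-organised reduction of the theorem to the fermionic moment estimates of \cite{BW}, not a self-contained proof of it.
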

\begin{remark}
It is natural to consider the limiting functional 
\begin{align}
\lim_{m\rarrow 0}\lim_{L\rarrow \infty}\lim_{\eps\rarrow 0}\mcL_{SG(\beta ,z \ :  \  \eps,m,L)}(f)
\end{align} for $f\in C_c^\infty(\R^2)$ (not assuming $\int f=0$)  for a construction of the sine-Gordon field $SG(\beta ,z)$. Indeed, in the case $\beta =4\pi$, \cite{BW} show that this limiting functional has a unique extension to $\mcS$, and defines a probability measure on $\mcS'$. It is currently an open problem to determine whether this extension is equivalent to $\mcL_{SG(4\pi,z)}$ or not, however we suspect it is.
\end{remark}

The paper \cite{BW} also give formulas for the covariances of the sine-Gordon field for $(4\pi,z)$, $z\neq 0$.
For the two point correlation function of $SG(4\pi,z)$,
\begin{align}
\E_{SG(4\pi,z)}[\varphi(f_1)\varphi(f_2)]=\int_{\R^2}\frac{dp}{(2\pi)^3}\widehat{f_1}(p)\widehat{f_2}(-p)C_{A|z|}(p)
\end{align}
where $\widehat{f}(p)=\int f(x)e^{-ip\cdot x}dx$ and
\begin{align}
C_{\mu}(p)=\mu^{-2}F(|p|/\mu), \quad \text{with } \quad F(x)=\frac{1}{x^2}-4\frac{\text{arcsinh}(x/2)}{x^3\sqrt{4+x^2}}.
\end{align}
The same paper also gives formulae for the covariances of the field's weak derivatives (in a sense defined below). In order to define this we introduce some notation following \cite{BW}.
Identify points in the real plane $(x_0,x_1)\in \R^2$ with points in the complex plane $ix_0+x_1\in\C$. Define the differential operators 
\begin{align}
\partial =\frac{1}{2}(-i\partial_0+\partial_1), && \overline{\partial}=\frac{1}{2}(i\partial_0+\partial_1),\label{diffops}
\end{align}
 where $\partial_j$ is the directional derivative along $x_j$, $j=0,1$.

For smooth, compactly supported functions $f_1,f_2\in C_c^\infty(\R^2)$, define
\begin{align}
\E_{SG(4\pi,z)}\big [\partial \varphi(f_1)\partial \varphi(f_2)\big]=\lim_{L\rarrow \infty}\lim_{m\rarrow 0}\lim_{\eps\rarrow 0}\E_{SG(4\pi,z \ : \ \eps,m,L)}\big [\partial \varphi(f_1)\partial \varphi(f_2)\big],\label{temp014}
\end{align}
similarly for $\E_{SG(4\pi,z)}\big [\partial \varphi(f_1)\overline{\partial} \varphi(f_2)\big]$. Also observe that via integration by parts, $\partial\varphi(f)=-\varphi(\partial f)$ for smooth functions $\varphi,\partial\varphi$ considered as linear functionals on $C_c^\infty$. 
\begin{theorem}[\cite{BW}]\label{theorem2BW}
Let $\beta=4\pi$ and $z\in \R\setminus \{0\}$. Then for $f_1,f_2\in C_c^\infty (\R^2)$ with disjoint support,
\begin{align}
&\E_{SG(4\pi,z)}\big [\partial \varphi(f_1)\partial \varphi(f_2)\big]=-\frac{B^2}{\pi^2}p.v\int dx_1dx_2f_1(x_1)f_2(x_2)(\partial_{x_1}K_0(A|z||x_1-x_2|))^2,\\
&\E_{SG(4\pi,z)}\big [\partial \varphi(f_1)\overline{\partial} \varphi(f_2)\big]=-\frac{B^2A^2z^2}{4\pi^2}\int dx_1dx_2f_1(x_1)f_2(x_2)(K_0(A|z||x_1-x_2|))^2\label{temp34fr}
\end{align}
where $p.v\int$ stands for the integral $\lim_{\delta\rarrow 0}\int_{|x_1-x_2|\geq \delta}$, $A=4\pi e^{-\gamma/2}$, $B=\sqrt{\pi}$ and where $\gamma$ is the Euler-Mascheroni constant and $\partial_x$ means $\partial$ applied in the $x$-variable.
\end{theorem}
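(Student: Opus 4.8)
The plan is to reduce both identities to the already-known two-point function of $\varphi$ and then to an elementary computation with modified Bessel functions. Since each regularised measure $\mu_{SG(4\pi,z\,:\,\eps,m,L)}$ is supported on genuine functions, integration by parts gives $\partial\varphi(f)=-\varphi(\partial f)$ and $\overline{\partial}\varphi(f)=-\varphi(\overline{\partial}f)$ at the level of the approximants, so the iterated limit \eqref{temp014} defining the derivative correlations equals the corresponding limit of $\E_{\eps,m,L}[\varphi(\partial f_1)\varphi(\partial f_2)]$ (and similarly with one $\overline{\partial}$). Because $\partial f_i,\overline{\partial}f_i\in C_c^\infty(\R^2)$ are gradients they have vanishing integral, hence lie in $\mcS_0$, so after passing to the limit (using the convergence proven in \cite{BW}) the right-hand sides are given by the explicit two-point function. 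Writing $\mu=A|z|$ and letting $g$ be the position-space kernel of that two-point function — the radial tempered distribution with $\widehat g=C_\mu$, normalised so that $\E_{SG(4\pi,z)}[\varphi(h_1)\varphi(h_2)]=\iint h_1(x_1)h_2(x_2)\,g(x_1-x_2)\,dx_1dx_2$ (the precise powers of $2\pi$ being read off from the conventions of \cite{BW}) — two integrations by parts move the derivatives onto $g$ and give, for $f_1,f_2$ of disjoint support,
\begin{align}
\E_{SG(4\pi,z)}[\partial\varphi(f_1)\partial\varphi(f_2)]&=-\iint f_1(x_1)f_2(x_2)\,(\partial^2 g)(x_1-x_2)\,dx_1dx_2,\\
\E_{SG(4\pi,z)}[\partial\varphi(f_1)\overline{\partial}\varphi(f_2)]&=-\iint f_1(x_1)f_2(x_2)\,(\partial\overline{\partial} g)(x_1-x_2)\,dx_1dx_2.
\end{align}
It therefore suffices to identify the kernels $\partial^2 g$ and $\partial\overline{\partial} g$ away from the diagonal, where they are smooth.

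The single analytic input I need about $g$ is the relation $|p|^2 C_\mu(p)=1-\tfrac{\mu^2}{\pi}\widehat{K_0(\mu|\cdot|)^2}(p)$, equivalently the distributional identity $-\Delta g=\delta-\tfrac{\mu^2}{\pi}K_0(\mu|\cdot|)^2$. To prove it I would compute the two-dimensional bubble $\widehat{K_0(\mu|\cdot|)^2}=\tfrac{1}{(2\pi)^2}\widehat{K_0}*\widehat{K_0}$ from $\widehat{K_0(\mu|\cdot|)}(p)=2\pi/(|p|^2+\mu^2)$ (which is the Fourier side of \eqref{K0intrep}) by Feynman parametrisation, obtaining
\begin{align}
\widehat{K_0(\mu|\cdot|)^2}(p)=\frac{4\pi\,\mathrm{arcsinh}(|p|/2\mu)}{|p|\sqrt{|p|^2+4\mu^2}},
\end{align}
and then match this against the explicit $F$: substituting $x=|p|/\mu$ in $x^2F(x)=1-4\,\mathrm{arcsinh}(x/2)/(x\sqrt{4+x^2})$ makes the $\mathrm{arcsinh}$ terms coincide, and the leftover constant $1$ is precisely the $\delta$ above.

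Given this, the mixed case is immediate, since $\partial\overline{\partial}=\tfrac14\Delta$ (with $\partial,\overline{\partial}$ as in \eqref{diffops}): away from the origin $\partial\overline{\partial} g=\tfrac{\mu^2}{4\pi}K_0(\mu|\cdot|)^2$, and inserting $\mu^2=A^2z^2$ and $B^2=\pi$ gives \eqref{temp34fr}. For the holomorphic case I would write $g=g(r)$ with $r=|x|$ and identify points with $\zeta=x_1+ix_0$ as in the paper; a direct computation gives $\partial\overline{\partial} g=\tfrac14(g''+r^{-1}g')$ and $\partial^2 g=\tfrac{\bar\zeta^2}{4r^2}(g''-r^{-1}g')$, while $(\partial K_0(\mu|\cdot|))^2=\tfrac{\bar\zeta^2}{4r^2}\mu^2 K_1(\mu r)^2$ because $K_0'=-K_1$. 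The radial form of $-\Delta g=-\tfrac{\mu^2}{\pi}K_0^2$ for $r>0$ reads $(rg')'=\tfrac{\mu^2}{\pi}rK_0(\mu r)^2$; integrating with the Bessel primitive $\int uK_0(u)^2\,du=\tfrac{u^2}{2}(K_0(u)^2-K_1(u)^2)$ and fixing the integration constant by the Gaussian-free-field leading behaviour $g\sim-\tfrac{1}{2\pi}\log r$ yields $g'(r)=\tfrac{\mu^2 r}{2\pi}(K_0(\mu r)^2-K_1(\mu r)^2)$. Subtracting $2r^{-1}g'$ from $g''+r^{-1}g'=\tfrac{\mu^2}{\pi}K_0^2$ then gives $g''-r^{-1}g'=\tfrac{\mu^2}{\pi}K_1(\mu r)^2$, hence $\partial^2 g=\tfrac1{\pi}(\partial K_0(\mu|\cdot|))^2$, which is the first displayed identity with $B^2/\pi^2=1/\pi$; here $A=4\pi e^{-\gamma/2}$ and $B=\sqrt{\pi}$ enter only through $\mu=A|z|$ and $B^2=\pi$.

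The hard part will be twofold, and is where care rather than bookkeeping is needed. First, the reduction of the opening paragraph requires justifying that the iterated limit defining the derivative correlations may be evaluated by applying the two-point formula to $\partial f_i$; this rests on the mean-zero property of $\partial f_i$ together with the uniform-continuity and convergence statements of \cite{BW}, and on interchanging the limit with the integration by parts. Second, the kernel $(\partial K_0(\mu|\cdot|))^2\sim \tfrac14\bar\zeta^{-2}$ is non-integrable at the diagonal, which is the source of the principal value in the $\partial\partial$ identity: for test functions of disjoint support the diagonal is avoided and the principal value collapses to an ordinary integral of the smooth kernel identified above, but to present the statement in its stated form one must verify that the degree-zero Fourier multiplier $(p_0+ip_1)^2/|p|^2$ coming from the large-$p$ behaviour $C_\mu\sim|p|^{-2}$ assembles, together with the bubble term, into exactly the principal value of $(\partial K_0)^2$ modulo distributions supported at the origin. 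Tracking these origin-supported contact terms — all of which vanish against disjoint supports — and pinning down the overall normalisation against \cite{BW}'s Fourier conventions is the only genuinely delicate bookkeeping.
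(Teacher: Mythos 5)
The paper contains no proof of this statement: Theorem \ref{theorem2BW} is simply quoted from \cite{BW}, where it is proved via the Coleman correspondence with free massive Dirac fermions (the remark following the theorem even traces the corrected sign of \eqref{temp34fr} through that correspondence). Your proposal is therefore a genuinely different route: you take as input the momentum-space covariance $C_{A|z|}$ and recover the real-space derivative kernels by Fourier analysis and Bessel calculus. I have checked the substance and it is correct. The two-dimensional bubble integral does give $\widehat{K_0(\mu|\cdot|)^2}(p)=4\pi\,\mathrm{arcsinh}(|p|/2\mu)/\big(|p|\sqrt{|p|^2+4\mu^2}\big)$, which together with $\mathrm{arctanh}\big(x/\sqrt{4+x^2}\big)=\mathrm{arcsinh}(x/2)$ yields $|p|^2C_\mu(p)=1-\frac{\mu^2}{\pi}\widehat{K_0(\mu|\cdot|)^2}(p)$; the primitive $\int uK_0(u)^2du=\frac{u^2}{2}\big(K_0(u)^2-K_1(u)^2\big)$, the radial identities for $\partial^2$ and $\partial\overline{\partial}$, and the flux normalisation fixing the integration constant are all right, and they produce exactly the off-diagonal kernels $\frac{1}{\pi}(\partial_{x_1}K_0)^2$ and $\frac{\mu^2}{4\pi}K_0^2$, i.e.\ the stated constants with $B^2=\pi$, $\mu=A|z|$ --- including the minus sign in \eqref{temp34fr} that the paper corrects relative to \cite{BW}, which your computation confirms independently. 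The reduction $\E_{SG(4\pi,z)}[\partial\varphi(f_1)\partial\varphi(f_2)]=\E_{SG(4\pi,z)}[\varphi(\partial f_1)\varphi(\partial f_2)]$ is also sound, since \eqref{temp014} defines the left side through regularised measures where integration by parts is exact and $\partial f_i,\overline{\partial}f_i$ have zero mean; the remaining convergence input is legitimately deferred to \cite{BW}. What your route buys is an elementary, self-contained verification; what it costs is logical independence: in \cite{BW} the momentum-space covariance is itself extracted from the fermionic real-space two-point functions, so your argument essentially runs their derivation backwards, and is a proof of the theorem \emph{given} the covariance formula rather than a new proof of both.

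One point you must pin down rather than defer to ``conventions'': the two covariance formulas displayed in the paper are mutually inconsistent by a factor of $2\pi$. The $\varphi\varphi$ formula in section \ref{sinegordonsection} carries $(2\pi)^{-3}$, while \eqref{temp04311} carries $(2\pi)^{-2}$; since $\partial\varphi(f)=-\varphi(\partial f)$ holds exactly, both cannot be true, so one is a misprint. Your computation implicitly uses the $(2\pi)^{-2}$ normalisation --- equivalently $\widehat{g}=C_\mu$, i.e.\ $-\Delta g=\delta-\frac{\mu^2}{\pi}K_0(\mu|\cdot|)^2$ --- and that is the only choice consistent with the constants $A=4\pi e^{-\gamma/2}$, $B=\sqrt{\pi}$ in Theorem \ref{theorem2BW}: had you taken the $(2\pi)^{-3}$ formula literally, every coefficient in your conclusion would come out smaller by $2\pi$ and the theorem would fail. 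So state explicitly that you adopt the normalisation of \eqref{temp04311}, and note that consistency with the theorem (and with integration by parts) forces it; with that made precise, your argument is complete.
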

It is straightforward to use the linearity of expectation values and limits in \eqref{temp014} to obtain the following lemma.
\begin{lemma}[]\label{dirderivSG}
Let $\beta=4\pi$, $z\in \R\setminus \{0\}$  and denote the variables $x=(x_0,x_1)$, $y=(y_0,y_1)$. Then for $f_1,f_2\in C_c^\infty (\R^2)$ with disjoint support, we have the formulas
\begin{align}
\E_{SG(4\pi,z)}\big [\partial_0 \varphi(f_1)\partial_1 \varphi(f_2)\big]=& \ -\frac{B^2A^2z^2}{\pi^2}\int dxdyf_1(x)f_2(y)\frac{(x_0-y_0)(x_1-y_1)}{|x-y|^2}K_1(A|z||x-y|)^2,\\
\E_{SG(4\pi,z)}\big [\partial_1 \varphi(f_1)\partial_0 \varphi(f_2)\big]=& \ -\frac{B^2A^2z^2}{\pi^2}\int dxdyf_1(x)f_2(y)\frac{(x_0-y_0)(x_1-y_1)}{|x-y|^2}K_1(A|z||x-y|)^2,\\
\E_{SG(4\pi,z)}\big [\partial_0 \varphi(f_1)\partial_0 \varphi(f_2)\big]=& \ \frac{B^2A^2z^2}{2\pi^2}\int dxdyf_1(x)f_2(y)\big [-K_0(A|z||x-y|)^2\\&\qquad\qquad+\Big (-\Big(\frac{x_0-y_0}{|x-y|}\Big)^2+\Big(\frac{x_1-y_1}{|x-y|}\Big)^2\Big)K_1(A|z||x-y|)^2\big],\nonumber\\
\E_{SG(4\pi,z)}\big [\partial_1 \varphi(f_1)\partial_1 \varphi(f_2)\big]=& \ \frac{B^2A^2z^2}{2\pi^2}\int dxdyf_1(x)f_2(y)\big [-K_0(A|z||x-y|)^2\\&\qquad\qquad+\Big (\Big(\frac{x_0-y_0}{|x-y|}\Big)^2-\Big(\frac{x_1-y_1}{|x-y|}\Big)^2\Big)K_1(A|z||x-y|)^2\big]\nonumber
\end{align}
where $A=4\pi e^{-\gamma/2}$, $B=\sqrt{\pi}$ and $\gamma$ is the Euler-Mascheroni constant.
\end{lemma}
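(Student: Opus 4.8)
The plan is to reduce the four real-derivative correlators to the two complex ones already computed in Theorem~\ref{theorem2BW}. Inverting the definitions \eqref{diffops} by adding and subtracting gives the operator identities $\partial_1=\partial+\overline{\partial}$ and $\partial_0=i(\partial-\overline{\partial})$. Because the correlators on the left-hand side of the lemma are built from the limits \eqref{temp014}, and both the expectation and those limits are linear, the map $(D_1,D_2)\mapsto \E_{SG(4\pi,z)}[D_1\varphi(f_1)D_2\varphi(f_2)]$ is bilinear in operators expressed as linear combinations of $\partial$ and $\overline{\partial}$. Expanding each target therefore yields a fixed linear combination of the four building blocks $P:=\E[\partial\varphi(f_1)\partial\varphi(f_2)]$, $M:=\E[\partial\varphi(f_1)\overline{\partial}\varphi(f_2)]$, $\tilde M:=\E[\overline{\partial}\varphi(f_1)\partial\varphi(f_2)]$ and $N:=\E[\overline{\partial}\varphi(f_1)\overline{\partial}\varphi(f_2)]$; for instance $\E[\partial_0\varphi(f_1)\partial_1\varphi(f_2)]=i(P+M-\tilde M-N)$, and analogously for the remaining three.

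Since Theorem~\ref{theorem2BW} supplies only $P$ and $M$, the next step is to identify $\tilde M$ and $N$. For $\tilde M$ I would use symmetry of the covariance: $\tilde M=\E[\partial\varphi(f_2)\overline{\partial}\varphi(f_1)]$ is the $M$-formula \eqref{temp34fr} with $f_1,f_2$ interchanged, and since its kernel $K_0(A|z||x_1-x_2|)^2$ is symmetric we get $\tilde M=M$ (which is moreover real). For $N$ I would use that $\varphi$ is a real field and $f_1,f_2$ are real: for real $f$ one has $\overline{\partial}f=\overline{\partial f}$, hence via $\partial\varphi(f)=-\varphi(\partial f)$ one gets $\overline{\partial}\varphi(f)=\overline{\partial\varphi(f)}$ as random variables, giving $N=\overline{P}$. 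Feeding $\tilde M=M$ and $N=\overline{P}$ back in collapses the combinations to $\E[\partial_0\varphi(f_1)\partial_1\varphi(f_2)]=-2\operatorname{Im}P$, $\E[\partial_0\varphi(f_1)\partial_0\varphi(f_2)]=2M-2\operatorname{Re}P$ and $\E[\partial_1\varphi(f_1)\partial_1\varphi(f_2)]=2M+2\operatorname{Re}P$, with $\partial_1\partial_0$ coinciding with $\partial_0\partial_1$.

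It then remains to separate $P$ into its real and imaginary parts. Setting $u=x-y$ and using $K_0'=-K_1$, a direct differentiation gives $\partial_x K_0(A|z||u|)=\tfrac{A|z|}{2|u|}K_1(A|z||u|)(iu_0-u_1)$, so that $(\partial_xK_0)^2=\tfrac{A^2z^2}{4|u|^2}K_1^2\big((u_1^2-u_0^2)-2iu_0u_1\big)$. Reading off $\operatorname{Re}(\partial_xK_0)^2$ and $\operatorname{Im}(\partial_xK_0)^2$ and substituting them, along with the explicit $M$ of \eqref{temp34fr}, into the three collapsed expressions reproduces exactly the four displayed formulas; the principal value can be dropped because $f_1,f_2$ have disjoint support.

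I expect the genuine content to lie only in the bookkeeping of the factors of $i$ and in the two reductions $\tilde M=M$ and $N=\overline{P}$, rather than in any analytic difficulty: on disjoint supports every integral converges absolutely, and the interchange of the derivatives with the limits in \eqref{temp014} is precisely the linearity already invoked. The main thing to watch is keeping the signs of $\operatorname{Re}P$ and $\operatorname{Im}P$ consistent with the stated identities.
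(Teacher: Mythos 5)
Your proposal is correct and follows the same route the paper intends: the paper's entire proof is the remark that the lemma follows ``by linearity of expectation values and limits in \eqref{temp014}'' applied to Theorem \ref{theorem2BW}, and your expansion of $\partial_0=i(\partial-\overline{\partial})$, $\partial_1=\partial+\overline{\partial}$, together with the reductions $\tilde M=M$ (kernel symmetry) and $N=\overline{P}$ (reality of $\varphi$ and of $f_i$), is exactly the bookkeeping this requires, and your collapsed expressions $-2\operatorname{Im}P$, $2M\mp2\operatorname{Re}P$ and the computation $(\partial_xK_0)^2=\tfrac{A^2z^2}{4|u|^2}K_1^2\big((u_1^2-u_0^2)-2iu_0u_1\big)$ reproduce all four stated formulas with the correct signs.
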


 Define the Dirac operator with mass $\mu$ by
\begin{align} 
i\slashed{\partial}+\mu=
\begin{pmatrix}
\mu & 2\overline{\partial}\\
2\partial & \mu
\end{pmatrix}.
\end{align}
The inverse of the Dirac operator has integral kernel given by 
\begin{align}
S(x,y)=-\frac{1}{2\pi}\begin{pmatrix}-\mu K_0(|\mu||x-y|)& 2\overline{\partial}_xK_0(|\mu||x-y|)\\
2\partial_xK_0(|\mu||x-y|)& -\mu K_0(|\mu||x-y|).
\end{pmatrix}
\end{align}
For distinct points $x_1,...,x_n,y_1...,y_n\in \R^2$ and $\alpha_1...,\alpha_n,\beta_1,...,\beta_n\in\{1,2\}$, the correlation functions of free Dirac fermions are given by
\begin{align}
\Big \langle\prod_{i=1}^n\overline{\psi}_{\alpha_i}(x_i)\psi_{\beta_i}(y_i)\Big \rangle_{FF(\mu)}=\det (S_{\alpha_i,\beta_j}(x_i,y_j))_{i,j=1}^n.
\end{align}
This is singular when $x_j=y_j$ for some $j$. However one may consider the \emph{truncated} correlation functions (also known as cumulants) which is not singular in this case. For more on this see \cite{BW}, we merely note that for our purposes the truncated correlation functions for $n=2$ are given by
\begin{align}
\Big \langle\overline{\psi}_{\alpha_1}\psi_{\beta_1}(x_1)\overline{\psi}_{\alpha_2}\psi_{\beta_2}(x_2)\Big \rangle_{FF(\mu)}^T=-S_{\alpha_2,\beta_1}(x_2,x_1)S_{\alpha_1,\beta_2}(x_1,x_2)
\end{align}
where $x_1,x_2\in\R^2$ are distinct. In fact, one can get the above formula involving $\overline{\psi}_{\alpha}\psi_{\beta}(x)$ under a limit of a regularisation of $\overline{\psi}_{\alpha}(x)\psi_{\beta}(x)$.  Also define the two-point truncated correlation functions pairing with test functions as
\begin{align}
\Big \langle\overline{\psi}_{\alpha_1}\psi_{\beta_1}(f_1)\overline{\psi}_{\alpha_2}\psi_{\beta_2}(f_2)\Big \rangle_{FF(\mu)}^T=\int dx_1dx_2 f_1(x_1)f_2(x_2)\Big \langle\overline{\psi}_{\alpha_1}\psi_{\beta_1}(x_1)\overline{\psi}_{\alpha_2}\psi_{\beta_2}(x_2)\Big \rangle_{FF(\mu)}^T.\label{fermcorrpaired}
\end{align}
\begin{remark}
Theorem \ref{theorem2BW} differs from Theorem (1.2) in \cite{BW} by an overall minus sign in \eqref{temp34fr}. This is just a minor mistake. To see where it appears, in their own notation we have
\begin{align}
\langle \overline{\psi}_2\psi_1(x)\overline{\psi}_1\psi_2(y)\rangle_{FF(\mu)}^T=-S_{1,1}(y,x)S_{2,2}(x,y)=\frac{\mu^2}{4\pi^2}K_0(\mu|x-y|)^2.
\end{align} 
Using this in the Coleman correspondence (their Theorem (1.1)) with $n=n'=0$, $q'=q=1$, one sees an overall factor of $i^2=-1$ missing from their formula (1.15). 
\end{remark}

\section{The height field}\label{sectionheightfield}
In this section we define the height function of the dimer model and its random height field.

For our dimer model the corresponding height function is defined on faces $f$ of the graph $\tilde{G}$ (with vertices at $\Z^2$) and the height function $h(f)$ is specified by fixing its value at an arbitrary fixed face and then defining the following height changes between adjacent faces;\\\\
a height change of $+ 3/4$ $ (-3/4)$ when traversing across an edge covered by a dimer with the white vertex on the right (left),\\\\
a height change of $+ 1/4$ $ (-1/4)$ when traversing across an edge \emph{not} covered by a dimer with the white vertex on the left (right).\\\\
In this article we assign the height at the $a$-face $(1/2,1/2)$ to be zero, this results in the average $a$-height (see definition \ref{aheightdef}) to be zero. In the following we will often identify faces of $\tilde{G}$ with their centre points $(\Z+1/2)\times (\Z+1/2)$ so that we can consider the height function as a random map from $(\Z+1/2)\times (\Z+1/2)$ into $\Z/4$.
\begin{definition}\label{heightfielddef}
Given a smooth compactly supported function $f:\R^2\mapsto f(x)\in \R$, define for $0<\eps<1$, 
\begin{align}
h_\eps(f):=\eps^2\sum h(x/\eps)f(x)
\end{align}
where the sum is over faces $x$ with centre points $\eps(\Z+1/2)\times \eps(\Z+1/2)$. We call $h_\eps$ the height field of the height function.
\end{definition}
\begin{definition}\label{aheightdef}
Let $h^a$ denote the height function $h$ restricted to the \emph{a}-faces of $\tilde{G}$. We call $h^a$ the $a$-height function, and observe that it takes values in $\Z$.
\end{definition}

We can consider the $a$-height function as a random map $h^a:(2\Z+1/2)\times(2\Z+1/2)\rarrow \Z$. 
Define
\begin{align}
\partial_0h^a(x)=\frac{h^a(x+(2,0))-h^a(x)}{2\eps}, \quad \partial_1h^a(x)=\frac{h^a(x+(0,2))-h^a(x)}{2\eps},
\end{align}
for $x\in(2\Z+1/2)\times(2\Z+1/2)$ and where $\partial_j:=\partial_j^\eps$ depends on $0<\eps\leq 1$. 
Now we consider the $a$-height function as a random linear functional on $C_c^\infty(\R^2)$.  
\begin{definition}
Given a smooth compactly supported function $f:\R^2\mapsto f(x)\in \R$, define for $0<\eps<1$, 
\begin{align}
h_\eps^a(f):=(2\eps)^2\sum h^a(x/\eps)f(x)
\end{align}
where the sum is over $x$ in $\eps(2\Z+1/2)\times \eps(2\Z+1/2)$. We call $h_\eps^a$ the height field of the $a$-height function.
\end{definition}
In this section we predominately work with the $a$-height field since doing so simplifies various calculations. However since we are (in a sense) averaging over a large collection of points by pairing the height function and $a$-height function with test functions, we will see that results on correlations of the $a$-height field can be extended to the height field. For an example of this see the limit in \eqref{temp43rdz}.
We have a condition on test functions which we refer to in various places.
\begin{condition}\label{cond1}
For two test functions $f_i$, $i\in \{1,2\}$ there are functions $g_i,h_i\in C_c^\infty(\R^2)$ with
\begin{align}
f_i=\partial g_i+\overline{\partial}h_i
\end{align} 
and where both $g_1, h_1$ have disjoint support from $g_2, h_2$, that is 
\begin{align}
(\text{Supp}(g_1)\cup\text{Supp}(h_1))\cap(\text{Supp}(g_2)\cup\text{Supp}(h_2))=\emptyset.
\end{align}
\end{condition}
We use this condition in particular to call the formulas \eqref{temp04311}. The author believes the condition can be removed using ideas from the proof of Theorem 1.3 in \cite{BW} together with further analysis of the inverse Kasteleyn matrix.

We also define  for $j\in \{0,1\}$
\begin{align}
\partial_j h_\eps^a(f):=(2\eps)^2\sum_{x\in \eps(2\Z+1/2)^2} \partial_j h^a(x/\eps)f(x).
\end{align}
We also recall \eqref{diffops}, and define 
\begin{align}
\partial h_\eps^a(f)=\frac{1}{2}(-i\partial_0 h_\eps^a(f)+\partial_1 h_\eps^a(f)),&& \overline{\partial} h_\eps^a(f)=\frac{1}{2}(i\partial_0 h_\eps^a(f)+\partial_1 h_\eps^a(f)).
\end{align}
We have the following theorem on convergence of the two-point correlation functions of the weak directional derivatives of the height field to those of the sine-Gordon field.
\begin{theorem}\label{derivstosinegordon}
Let $\eps>0$ and $\lambda>0$ be fixed not depending on $\eps$. 
Also let $0<a<1$, $z\in\R\setminus\{0\}$ be such that 
\begin{align}
a=1-\lambda\eps, && |z|=\lambda\frac{e^{\gamma/2}}{4\sqrt{2}\pi},
\end{align}
where $\gamma$ is the Euler-Mascheroni constant.
Let $f_1,f_2:\R^2\rarrow \R$ be smooth functions with disjoint, compact support. We have the following limits, for $i,j\in\{0,1\}$
\begin{align}
\E_a[\partial_ih_\eps^a(f_1)\partial_jh_\eps^a(f_2)]\rarrow \frac{1}{4\pi}\E_{SG(4\pi,z)}[\partial_i\varphi(f_1)\partial_j\varphi(f_2)]\label{temp34edmn}
\end{align}
as $\eps\rarrow 0$.
\begin{proof}
We have the left hand side of \eqref{temp34edmn} as
\begin{align}
(2\eps)^4\sum_{x_1,x_2\in \eps(2\Z+1/2)^2} \E_a[\partial_ih^a(x_1/\eps) \partial_j h^a(x_2/\eps)]f(x_1)f(x_2).
\end{align}
Since $f_1,f_2$ have compact, disjoint support the limit follows from the following lemma combined with lemma \ref{dirderivSG}.
\begin{lemma}
Let $a=1-\lambda\eps>0$, $\eps>0$ and $\lambda>0$ is fixed not depending on $\eps$. Let $x_1,x_2$ be two $a$-faces such that
\begin{align}
x_1-x_2=(x,y)/\eps, \quad \text{ where } (x,y)\neq 0.
\end{align}
We have the following limits, 
\begin{align}
\E_a[\partial_1 h^a(x_1) \partial_1h^a(x_2)]&= \frac{\lambda^2}{4\pi^2}\Big(-K_0(\lambda z/\sqrt{2})^2+\frac{x^2-y^2}{z^2}K_1(\lambda z/\sqrt{2})^2\Big)+o(1),\label{temp3210}\\
\E_a[\partial_0 h^a(x_1) \partial_0h^a(x_2)]&= \frac{\lambda^2}{4\pi^2}\Big(-K_0(\lambda z/\sqrt{2})^2+\frac{y^2-x^2}{z^2}K_1(\lambda z/\sqrt{2})^2\Big)+o(1),\label{temp3212}\\
\E_a[\partial_1 h^a(x_1) \partial_0 h^a(x_2)]&=-\frac{\lambda^2xy}{2\pi^2z^2}K_1(\lambda z/\sqrt{2})^2+o(1),\label{temp3209}\\
\E_a[\partial_0 h^a(x_1) \partial_1 h^a(x_2)]&=-\frac{\lambda^2xy}{2\pi^2z^2}K_1(\lambda z/\sqrt{2})^2+o(1).\label{temp3211}
\end{align}
as $\eps$ tends to zero, where $z=\sqrt{x^2+y^2}$ and uniform for $(x,y)$ in a compact subset of $\R^2\setminus \{0\}$. 
\begin{proof}
In the following we identify edges and vertices under \eqref{changeofcoordsG} when necessary. We just prove \eqref{temp3210} and \eqref{temp3209} since \eqref{temp3212} can be found with the same type of argument and \eqref{temp3211} follows by symmetry from \eqref{temp3209}. We first  prove \eqref{temp3210}.

A straight path travelling from an $a$-face $x_1$ to an $a$-face $x_1+(0,2)$ first crosses an edge $e_1$ with its black vertex on the right and then crosses an edge $e_2$ with its white vertex on the right. Similarly, a straight path travelling from an $a$-face $x_2$ to $x_2+(0,2)$ first crosses an edge $e_3$ with its black vertex on the right and then crosses an edge $e_4$ with its white vertex on the right. Hence we have
\begin{align}
2\eps\partial_1h^a(x_1)=h^a(x_1+(2,0))-h^a(x_1)=(-\ind_{e_1\in \omega}+1/4)+(\ind_{e_2\in \omega}-1/4)=\ind_{e_2\in \omega}-\ind_{e_1\in\omega},
\end{align}
and
\begin{align}
2\eps\partial_1h^a(x_2)=\ind_{e_4\in\omega}-\ind_{e_3\in\omega}.
\end{align}
Substitution then gives
\begin{align}
4\eps^2\E_a[\partial_1h^a(x_1)\partial_1h^a(x_2)]=\P_a(e_1,e_3\in\omega)+\P_a(e_2,e_4\in\omega)-\P_a(e_2,e_3\in \omega)-\P_a(e_1,e_4\in\omega).
\end{align}
We know that dimers form a determinantal point process \eqref{dimersdetprocess}, and thus
\begin{align}
\P_a(e_i,e_j\in\omega)=\det(L(e_k,e_\ell))_{k,\ell\in\{i,j\}}=\P_a(e_i\in\omega)\P(e_j\in\omega)-L(e_i,e_j)L(e_j,e_i).\label{dimerdetprocess}
\end{align}
For an edge $e=(b,w)\in \tilde{B}_{\eps_2}\times \tilde{W}_{\eps_1}$, $\P_a(e\in\omega)=\K_a(b,w)\K_a^{-1}(w,b)$, and it is easy to see from the symmetries in \eqref{infinvKasteleyn} that $\P_a(e\in\omega)$ are all equal for the four types of edges $\eps_1,\eps_2$ of weight $a$. Hence
\begin{align}
4\eps^2\E_a[\partial_1h^a(x_1)\partial_1h^a(x_2)]=-L(e_1,e_3)L(e_3,e_1)-L(e_2,e_4)L(e_4,e_2)\\+L(e_2,e_3)L(e_3,e_2)+L(e_1,e_4)L(e_4,e_1).\nonumber
\end{align}
We have $\K_a(e_1)=\K_a(e_2)=\K_a(e_3)=\K_a(e_4)=ai=i+O(\eps)$. We label $e_j=(b_j,w_j)$, $j=1,...,4$, and we write
\begin{align}\label{temp095029}
\E_a[\partial_1h^a(x_1)\partial_1h^a(x_2)] &=(2\eps)^{-2}\K_a^{-1}(w_3,b_1)\K_a^{-1}(w_1,b_3)+(2\eps)^{-2}\K_a^{-1}(w_4,b_2)\K_a^{-1}(w_2,b_4)\\ & \ \ \ -(2\eps)^{-2}\K_a^{-1}(w_3,b_2)\K_a^{-1}(w_2,b_3)-(2\eps)^{-2}\K_a^{-1}(w_4,b_1)\K_a^{-1}(w_1,b_4)+O(\eps).\nonumber
\end{align}
We would now like to make use of theorem \ref{invkastasymp}. For concreteness we write  the image of the face 
\begin{align}
x_j=(-\beta_j\eps^{-1}+1/2,\alpha_j\eps^{-1}+1/2)\label{temp024h4}
\end{align} under \eqref{changeofcoordsG} as $(1+\alpha_j\eps^{-1})\vec{e}_1+\beta_j\eps^{-1}\vec{e}_2$ where $(\alpha_j\eps^{-1},\beta_j\eps^{-1})\in (2\Z)^2$.
 This gives the image of $x_j+(0,2)$ as  $(1+\alpha_j\eps^{-1})\vec{e}_1+(\beta_j-2\eps)\eps^{-1}\vec{e}_2$.
 By their definition, $e_1,e_3\in \tilde{B_1}\times \tilde{W_1}$ and $e_2,e_4\in \tilde{B_0}\times \tilde{W_0}$. 
 So for example $w_3$ and $b_1$ have coordinates
 \begin{align}
 (1+\alpha_2\eps^{-1})\vec{e}_1+\beta_2\vec{e}_2+(0,1) &&\text{and}&&\quad (1+\alpha_1)\eps^{-1}\vec{e}_1+\beta_1\vec{e}_2+(1,0)
 \end{align}
 under \eqref{changeofcoordsG}. We also have 
 \begin{align}
 \beta=\beta_2-\beta_1=-x&&\alpha=\alpha_2-\alpha_1=y.
 \end{align}
 Hence we can make use of theorem \ref{invkastasymp} to get \eqref{temp095029} as
 \begin{align}
 &\frac{1}{4}\Big(\frac{i\lambda}{2\pi}K_0(\lambda z/\sqrt{2})-\frac{i\lambda\beta}{\sqrt{2}\pi z}K_1(\lambda z/\sqrt{2})\Big)\Big(\frac{i\lambda}{2\pi}K_0(\lambda z/\sqrt{2})-\frac{i\lambda(-\beta)}{\sqrt{2}\pi z}K_1(\lambda z/\sqrt{2})\Big)\\
 &+\frac{1}{4}\Big(\frac{i\lambda}{2\pi}K_0(\lambda z/\sqrt{2})+\frac{i\lambda\beta}{\sqrt{2}\pi z}K_1(\lambda z/\sqrt{2})\Big)\Big(\frac{i\lambda}{2\pi}K_0(\lambda z/\sqrt{2})+\frac{i\lambda(-\beta)}{\sqrt{2}\pi z}K_1(\lambda z/\sqrt{2})\Big)\nonumber\\
&-\frac{1}{4}\Big(-\frac{\lambda}{2\pi}K_0(\lambda z/\sqrt{2})+\frac{\lambda\alpha}{\sqrt{2}\pi z}K_1(\lambda z/\sqrt{2})\Big)\Big(-\frac{\lambda}{2\pi}K_0(\lambda z/\sqrt{2})-\frac{\lambda(-\alpha)}{\sqrt{2}\pi z}K_1(\lambda z/\sqrt{2})\Big)\nonumber\\
&-\frac{1}{4}\Big(-\frac{\lambda}{2\pi}K_0(\lambda z/\sqrt{2})-\frac{\lambda\alpha}{\sqrt{2}\pi z}K_1(\lambda z/\sqrt{2})\Big)\Big(-\frac{\lambda}{2\pi}K_0(\lambda z/\sqrt{2})+\frac{\lambda(-\alpha)}{\sqrt{2}\pi z}K_1(\lambda z/\sqrt{2})\Big)\nonumber\\
&+o(1).\nonumber\\
=& \ \frac{\lambda^2}{4\pi^2}\Big(-K_0(\lambda z/\sqrt{2})^2+\frac{x^2-y^2}{z^2}K_1(\lambda z/\sqrt{2})^2\Big)+o(1)\nonumber
 \end{align}
 where $z=\sqrt{\alpha^2+\beta^2}=\sqrt{x^2+y^2}$.
 
 Now we prove \eqref{temp3209}. We will redefine the edges $e_3,e_4$ for the remainder of the proof.  That is $e_1,e_2$ are defined above in the sense that a straight path travelling from an $a$-face $x_1$ to an $a$-face $x_1+(0,2)$ first crosses an edge $e_1$ with its black vertex on the right and then crosses an edge $e_2$ with its white vertex on the right. However now, a straight path travelling from an $a$-face $x_2$ to $x_2+(2,0)$ (i.e moving \emph{horizontally}) first crosses an edge $e_3$ with its white vertex on the right and then crosses an edge $e_4$ with its black vertex on the right. So we have 
 \begin{align}
 2\eps\partial_1h^a(x_1)=\ind_{e_2}-\ind_{e_1} && 2\eps\partial_0h^a(x_2)=\ind_{e_3}-\ind_{e_4}.
 \end{align}
 Following the previous argument, one obtains 
 \begin{align}
 4\eps^2\E_a[\partial_1h^a(x_1)\partial_0h^a(x_2)]=&   -L(e_1,e_4)L(e_4,e_1)-L(e_2,e_3)L(e_3,e_2)\\
& \  +L(e_2,e_4)L(e_4,e_2)+L(e_1,e_3)L(e_3,e_1).\nonumber
 \end{align}
 Now $\K_a(e_1),\K_a(e_2)=ai$, $\K_a(e_3)=\K_a(e_4)=a$. We have the labels $e_j=(b_j,w_j)$, $j=1,...,4$, and we get
\begin{align}\label{temp09239ss}
\E_a[\partial_1h^a(x_1)\partial_0 h^a(x_2)] &=-(2\eps)^{-2} \ i \ \K_a^{-1}(w_4,b_1)\K_a^{-1}(w_1,b_4)-(2\eps)^{-2} \ i \   \K_a^{-1}(w_3,b_2)\K_a^{-1}(w_2,b_3)\\ & \ \ \ +(2\eps)^{-2} \ i \ \K_a^{-1}(w_4,b_2)\K_a^{-1}(w_2,b_4)+(2\eps)^{-2} \ i \ \K_a^{-1}(w_3,b_1)\K_a^{-1}(w_1,b_3)+O(\eps).\nonumber
\end{align}
By their definitions,  $e_1\in \tilde{B}_1\times \tilde{W}_1$, $e_2\in \tilde{B}_0\times \tilde{W}_0$, $e_3\in \tilde{B}_1\times \tilde{W}_0$ and $e_4\in \tilde{B}_0\times \tilde{W}_1$. We assign the coordinates \eqref{temp024h4} and use theorem \ref{invkastasymp} again.
This gives \eqref{temp09239ss} as
\begin{align}
&-i\Big(\frac{i\lambda}{2\pi}K_0(\lambda z/\sqrt{2})-\frac{i\lambda\beta}{\sqrt{2}\pi z}K_1(\lambda z/\sqrt{2})\Big)\Big(-\frac{\lambda}{2\pi}K_0(\lambda z/\sqrt{2})+\frac{\lambda(-\alpha)}{\sqrt{2}\pi z}K_1(\lambda z/\sqrt{2})\Big)\\
&-i\Big(\frac{i\lambda}{2\pi}K_0(\lambda z/\sqrt{2})+\frac{i\lambda\beta}{\sqrt{2}\pi z}K_1(\lambda z/\sqrt{2})\Big)\Big(-\frac{\lambda}{2\pi}K_0(\lambda z/\sqrt{2})-\frac{\lambda(-\alpha)}{\sqrt{2}\pi z}K_1(\lambda z/\sqrt{2})\Big)\nonumber\\
&+i\Big(-\frac{\lambda}{2\pi}K_0(\lambda z/\sqrt{2})+\frac{\lambda\alpha}{\sqrt{2}\pi z}K_1(\lambda z/\sqrt{2})\Big)\Big(\frac{i\lambda}{2\pi}K_0(\lambda z/\sqrt{2})+\frac{i\lambda(-\beta)}{\sqrt{2}\pi z}K_1(\lambda z/\sqrt{2})\Big)\nonumber\\
&+i\Big(-\frac{\lambda}{2\pi}K_0(\lambda z/\sqrt{2})-\frac{\lambda\alpha}{\sqrt{2}\pi z}K_1(\lambda z/\sqrt{2})\Big)\Big(\frac{i\lambda}{2\pi}K_0(\lambda z/\sqrt{2})-\frac{i\lambda(-\beta)}{\sqrt{2}\pi z}K_1(\lambda z/\sqrt{2})\Big)\nonumber\\
& +o(1)\nonumber\\
& =-\frac{\lambda^2 xy}{2\pi^2z^2}K_1(\lambda z/\sqrt{2})^2+o(1)\nonumber
\end{align}
which proves \eqref{temp3209}.
\end{proof}
\end{lemma}
\end{proof}
\end{theorem}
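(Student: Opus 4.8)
The plan is to reduce each correlation of discrete height derivatives to a finite signed combination of inverse Kasteleyn products, and then to feed these into Theorem \ref{invkastasymp}. First I would express the discrete derivatives through edge occupation variables. A straight segment joining an $a$-face to an adjacent $a$-face (a vertical step $x\mapsto x+(0,2)$ or a horizontal step $x\mapsto x+(2,0)$) crosses exactly two $a$-weighted edges; applying the height-change rules $\pm 1/4$, $\pm 3/4$ according to the side on which the white vertex lies, the uncovered-edge contributions combine with the covered-edge terms so that $2\eps\,\partial_j h^a(x)=\ind_{e\in\omega}-\ind_{e'\in\omega}$ for a suitable pair of edges $e,e'$. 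Substituting the two such expressions for $x_1$ and $x_2$ and expanding the product yields $4\eps^2\,\E_a[\partial_i h^a(x_1)\partial_j h^a(x_2)]$ as a signed sum of four joint edge probabilities $\P_a(e_k,e_\ell\in\omega)$.

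Next I would invoke the determinantal structure \eqref{dimersdetprocess}. Writing each joint probability via \eqref{dimerdetprocess} as $\P_a(e_k\in\omega)\P_a(e_\ell\in\omega)-L(e_k,e_\ell)L(e_\ell,e_k)$, the single-edge marginals $\P_a(e\in\omega)$ are equal across all four $a$-edge types by the symmetry of the integral \eqref{infinvKasteleyn}; since the four joint probabilities enter with signs summing to zero, the marginal products cancel and only the covariance terms $L(e_k,e_\ell)L(e_\ell,e_k)$ survive. Using \eqref{corrkernel} and $\K_a(e)\in\{ai,a\}$ with $a=1+O(\eps)$, each surviving term becomes $(2\eps)^{-2}$ times a product of two inverse Kasteleyn entries $\K_a^{-1}(w_\ell,b_k)\K_a^{-1}(w_k,b_\ell)$, up to an overall power of $i$ determined by how many of the participating edges are vertically oriented.

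I would then apply Theorem \ref{invkastasymp}. For this I assign to the four edges the vertex coordinates \eqref{temp024h4}, read off the class indices $(\eps_1,\eps_2)$ of each inverse Kasteleyn factor from the membership of its endpoints in $\tilde B_0,\tilde B_1,\tilde W_0,\tilde W_1$, and substitute $\alpha=y$, $\beta=-x$ together with $z=\sqrt{x^2+y^2}$. Because each factor $L(e_k,e_\ell)L(e_\ell,e_k)$ carries the prefactor $(-1)^{(\alpha+\beta)\eps^{-1}/2}$ twice, these signs square to unity and drop out, leaving a fixed combination of $K_0(\lambda z/\sqrt 2)$ and $K_1(\lambda z/\sqrt 2)$. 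Expanding and collecting the $K_0^2$, $K_1^2$ and cross terms then produces \eqref{temp3210} and \eqref{temp3209}; \eqref{temp3212} is obtained by running the identical argument with the horizontal step at $x_1$, and \eqref{temp3211} follows from \eqref{temp3209} by interchanging the roles of $x_1$ and $x_2$ (equivalently $\alpha,\beta\mapsto-\alpha,-\beta$).

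The main obstacle is the bookkeeping rather than any analytic difficulty: one must correctly determine, for each of the four edges, its black/white sublattice class and hence which of the four cases of Theorem \ref{invkastasymp} governs the corresponding $\K_a^{-1}$ entry, and must track the residual factor of $i$ (present for the mixed derivative \eqref{temp3209}, where two edges are vertical of weight $ai$ and two horizontal of weight $a$, but absent for \eqref{temp3210}, where the product of weights is $(ai)^2=-1+O(\eps)$ and merely flips the external sign). Keeping the arguments $\alpha,\beta$ and the cross-term signs consistent across all four products is what makes the $K_0^2$ contributions reinforce while the $K_1^2$ contributions carry the geometric factors $(x^2-y^2)/z^2$ and $xy/z^2$. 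Finally, the uniformity in Theorem \ref{invkastasymp} over compact disjoint subsets transfers directly to uniformity of the $o(1)$ error for $(x,y)$ in a compact subset of $\R^2\setminus\{0\}$.
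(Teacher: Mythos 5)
Your proposal is correct and follows essentially the same route as the paper's proof: expressing $2\eps\,\partial_j h^a$ as differences of indicators of the two $a$-edges crossed, cancelling the marginal terms via the determinantal identity and the equality of the four $a$-edge probabilities, reducing to $(2\eps)^{-2}$ times products of inverse Kasteleyn entries with weight factors $(ai)^2=-1+O(\eps)$ (vertical--vertical case) resp.\ $ia^2=i+O(\eps)$ (mixed case), and then applying Theorem \ref{invkastasymp} with $\alpha=y$, $\beta=-x$ and the sublattice bookkeeping, with \eqref{temp3212} and \eqref{temp3211} obtained by the same symmetry arguments the paper uses. The only step you leave implicit --- which the paper itself dispatches in one sentence --- is the conclusion: the uniform pointwise asymptotics turn the Riemann sums $(2\eps)^4\sum_{x_1,x_2}$ into integrals against $f_1 f_2$, which must then be identified with $\frac{1}{4\pi}\E_{SG(4\pi,z)}[\partial_i\varphi(f_1)\partial_j\varphi(f_2)]$ through the explicit formulas of Lemma \ref{dirderivSG}, using that the choice of $|z|$ makes the Bessel arguments agree ($A|z|=\lambda/\sqrt{2}$).
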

By the Coleman correspondence \cite{BW}, we have the following convergence of two-point correlations of weak derivatives of the height field to truncated correlations of the free massive Dirac fermions.
\begin{corollary}
Recall the correlations functions of the free massive Dirac fermions with mass $\mu>0$, \eqref{fermcorrpaired}.
Let $\eps>0$ and $\mu>0$ not depend on $\eps$. 
Also let $0<a<1$ and the mass $\mu$ be related by
\begin{align}
a=1-\mu\sqrt{2}\eps.
\end{align}
For smooth functions $f_1^+,f_2^+,f_1^-,f_2^-$ all with disjoint, compact support we have
\begin{align}
\E_a[\prod_{j=1}^q(-i\partial h_\eps^a(f_j^+))\prod_{j'=1}^{q'}(-i\overline{\partial}h_\eps^a(f_{j'}^-))]\rarrow \frac{1}{4}\Big \langle\prod_{j=1}^q\overline{\psi}_2\psi_1(f_j^+)\prod_{j'=1}^{q'}\overline{\psi}_1\psi_2(f_{j'}^-)\Big \rangle_{FF(\mu)}^T
\end{align}
as $\eps\rarrow 0$ with $q,q'\in\{0,1\}$ such that $q+q'=2$ and where $h_\eps^a$ is defined above.
\end{corollary}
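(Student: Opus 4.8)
The plan is to reduce the claim to the two-point convergence of Theorem~\ref{derivstosinegordon} and then to evaluate the resulting sine-Gordon two-point function by Theorem~\ref{theorem2BW} together with the Coleman dictionary. First note that the hypotheses $q,q'\in\{0,1\}$ and $q+q'=2$ force $q=q'=1$, so the identity to be proved is
\begin{align}
&\E_a[(-i\partial h_\eps^a(f_1^+))(-i\overline{\partial}h_\eps^a(f_1^-))] \nonumber \\
&\qquad\rarrow \frac{1}{4}\Big\langle\overline{\psi}_2\psi_1(f_1^+)\overline{\psi}_1\psi_2(f_1^-)\Big\rangle_{FF(\mu)}^T. \nonumber
\end{align}
(The $\partial\partial$ and $\overline{\partial}\overline{\partial}$ bilinears, corresponding to the remaining entries of the dictionary, would be handled identically using the first formula of Theorem~\ref{theorem2BW}.) The key preliminary is parameter matching: the hypothesis $a=1-\mu\sqrt2\,\eps$ is exactly that of Theorem~\ref{derivstosinegordon} with $\lambda=\mu\sqrt2$, whence $|z|=\lambda e^{\gamma/2}/(4\sqrt2\,\pi)=\mu e^{\gamma/2}/(4\pi)$, and with $A=4\pi e^{-\gamma/2}$ this gives the crucial relations $A|z|=\mu$ and $A^2z^2=\mu^2$, so that $K_0(A|z||x-y|)=K_0(\mu|x-y|)$ throughout.

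Next I would expand the complex derivatives via \eqref{diffops}, writing
\begin{align}
-i\partial h_\eps^a(f)=-\tfrac12(\partial_0+i\partial_1)h_\eps^a(f), && -i\overline{\partial}h_\eps^a(f)=\tfrac12(\partial_0-i\partial_1)h_\eps^a(f), \nonumber
\end{align}
so that the left-hand side is a fixed finite linear combination of the four quantities $\E_a[\partial_i h_\eps^a(f_1^+)\partial_j h_\eps^a(f_1^-)]$, $i,j\in\{0,1\}$. Because $f_1^+,f_1^-$ have disjoint compact support, Theorem~\ref{derivstosinegordon} applies term by term; pulling the finite linear combination through the limit gives
\begin{align}
&\E_a[(-i\partial h_\eps^a(f_1^+))(-i\overline{\partial}h_\eps^a(f_1^-))] \nonumber \\
&\qquad\rarrow \frac{1}{4\pi}\E_{SG(4\pi,z)}[(-i\partial\varphi(f_1^+))(-i\overline{\partial}\varphi(f_1^-))]. \nonumber
\end{align}

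It remains to match constants. Using $(-i)(-i)=-1$ and \eqref{temp34fr}, together with $B^2=\pi$ and $A|z|=\mu$, one finds
\begin{align}
\E_{SG(4\pi,z)}[(-i\partial\varphi(f_1^+))(-i\overline{\partial}\varphi(f_1^-))]&=-\E_{SG(4\pi,z)}[\partial\varphi(f_1^+)\overline{\partial}\varphi(f_1^-)] \nonumber\\
&=\frac{\mu^2}{4\pi}\int dx_1dx_2\,f_1^+(x_1)f_1^-(x_2)K_0(\mu|x_1-x_2|)^2, \nonumber
\end{align}
so that the limit of the left-hand side equals $\tfrac{\mu^2}{16\pi^2}\int f_1^+(x_1)f_1^-(x_2)K_0(\mu|x_1-x_2|)^2\,dx_1dx_2$. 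On the fermionic side, the remark recording $\langle\overline{\psi}_2\psi_1(x_1)\overline{\psi}_1\psi_2(x_2)\rangle_{FF(\mu)}^T=\tfrac{\mu^2}{4\pi^2}K_0(\mu|x_1-x_2|)^2$ and the pairing \eqref{fermcorrpaired} give the identical value for $\tfrac14\langle\overline{\psi}_2\psi_1(f_1^+)\overline{\psi}_1\psi_2(f_1^-)\rangle_{FF(\mu)}^T$, which closes the argument.

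I expect no genuine analytic obstacle, since all of the hard asymptotics are already packaged in Theorem~\ref{derivstosinegordon}; the only places demanding care are the sign and constant bookkeeping — in particular using the corrected sign of \eqref{temp34fr} recorded in the remark, and tracking the factors of $(-i)$, $\tfrac12$, $B^2=\pi$ and $A|z|=\mu$ — together with verifying that the disjoint-support hypothesis keeps us away from the coincidence singularity (and the principal value that would appear in the $\partial\partial$ case), so that the product-of-kernels formulas apply without correction.
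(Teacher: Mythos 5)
Your proposal is correct and follows exactly the route the paper intends: the paper states this corollary without a written proof, as an immediate consequence of Theorem \ref{derivstosinegordon} combined with the Coleman correspondence of \cite{BW} (i.e.\ Theorem \ref{theorem2BW} with the corrected sign of \eqref{temp34fr} and the truncated fermion two-point value $\tfrac{\mu^2}{4\pi^2}K_0(\mu|x_1-x_2|)^2$ recorded in the remark), and your parameter matching $A|z|=\mu$ together with the constant bookkeeping $\tfrac{1}{4\pi}\cdot\tfrac{\mu^2}{4\pi}=\tfrac14\cdot\tfrac{\mu^2}{4\pi^2}$ is precisely the verification that makes this work.
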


We introduce the following notation, we can write the $a$-height function as
\begin{align}
h^a(y)=\sum_{e,e'\in \gamma(y)}\sigma_e (\ind_e-\ind_{e'}), \label{aheightrep}
\end{align}
where $\gamma(y)$ is a path from $(1/2,1/2)$ to $y\in (2\Z+1/2)^2$ made of straight vertical and horizontal line segments of length two, 
the sum is over edges crossed by $\gamma(y)$, and $e$ ($e'$) is the first (second) edge crossed by $\gamma(y)$ when travelling from an $a$-face to one of the four nearest $a$-faces.
Also $\sigma_e=1$ if $\gamma(y)$ crosses $e$ with a white vertex on the right and $\sigma_e=-1$ otherwise.

We have the following Proposition on the convergence of the two-point correlation function of the $a$-height field to the two-point correlation function of the sine-Gordon field.
\begin{proposition}\label{aheighttwoptcorr}
Let $\eps>0$ and $\lambda>0$ be fixed not depending on $\eps$. 
Also let $0<a<1$, $z\in\R\setminus\{0\}$ be such that 
\begin{align}
a=1-\lambda\eps, && |z|=\lambda\frac{e^{\gamma/2}}{4\sqrt{2}\pi},
\end{align}
where $\gamma$ is the Euler-Mascheroni constant.
Let $f_i\in C_c^\infty(\R^2)$, $i\in\{1,2\}$ be two functions which satisfy condition \ref{cond1}.
We have the following limit,
\begin{align}
\E_a[h_\eps^a(f_1)h_\eps^a(f_2)]\rarrow \frac{1}{4\pi}\E_{SG(4\pi,z)}[\varphi(f_1)\varphi(f_2)]\label{temp52df4}
\end{align}
as $\eps\rarrow 0$. 
\begin{proof}
We have 
\begin{align}\label{temp31440}
\E_a[h_\eps^a(f_1)h_\eps^a(f_2)]=& \ \E_a[h_\eps^a(\partial g_1)h_\eps^a(\partial g_2)]+\E_a[h_\eps^a(\overline{\partial} h_1)h_\eps^a(\overline{\partial} h_2)]\\
&+\E_a[h_\eps^a(\partial g_1)h_\eps^a(\overline{\partial} h_2)]+\E_a[h_\eps^a(\overline{\partial} h_1)h_\eps^a(\partial g_2)].\nonumber
\end{align}
Let us focus on $\E_a[h_\eps^a(\partial g_1)h_\eps^a(\partial g_2)]$ and $\E_a[h_\eps^a(\partial g_1)h_\eps^a(\overline{\partial} h_2)]$. 
Define
\begin{align}
\delta_0^\eps f(x)=\frac{f(x)-f(x-\eps(2,0))}{2\eps}, && \delta_1^\eps f(x)=\frac{f(x)-f(x-\eps(0,2))}{2\eps}.
\end{align}
Observe that via summation by parts, 
\begin{align}
\partial_j h_\eps^a(f)&=-(2\eps)^2\sum_{x\in \eps(2\Z+1/2)^2} h^a(x/\eps)\delta_j^\eps f(x)\\
&=-(2\eps)^2\sum_{x\in \eps(2\Z+1/2)^2} h^a(x/\eps)(\partial_j f(x)+R(x,\eps)\eps).\nonumber
\end{align}
where $R(x,\eps)$ is a smooth boundedly supported function for each $\eps$ that comes from the remainder in Taylor's theorem, and which is a bounded function over $(x,\eps)$, and $\partial_jf(x)$ means the directional derivative of $f\in C_c^\infty(\R^2)$ in the direction $x_j$, $x=(x_0,x_1)$. Hence
\begin{align}
\label{temp0505}\E_a[h_\eps^a(\partial g_1)h_\eps^a(\partial g_2)]= & \ \E_a[\partial h_\eps^a( g_1)\partial h_\eps^a(g_2)]-\eps\E_a[h_\eps^a( R_1(\cdot,\eps))\partial h_\eps^a(g_2)+\partial h_\eps^a( g_1)h_\eps^a(R_2(\cdot,\eps))]\\
&+\eps^2\E_a[h_\eps^a(R_1(\cdot,\eps))h_\eps^a(R_2(\cdot,\eps))].\nonumber\\
& = :\ \E_a[\partial h_\eps^a( g_1)\partial h_\eps^a(g_2)]+ \eps \tilde{R}_1(\eps).\nonumber
\end{align}
Similarly,
\begin{align}
E_a[h_\eps^a(\partial g_1)h_\eps^a(\overline{\partial} h_2)]=\E_a[\partial h_\eps^a( g_1)\overline\partial h_\eps^a(h_2)]+\eps\tilde{R}_2(\eps).\label{temp302f}
\end{align}
From Theorem \ref{derivstosinegordon}
\begin{align}\label{temp09523}
\E_a[\partial h_\eps^a( g_1)\partial h_\eps^a(g_2)]\rarrow \frac{1}{4\pi}\E_{SG(4\pi,z)}[\partial \varphi(g_1)\partial \varphi(g_2)],\\
\E_a[\partial h_\eps^a( g_1)\overline\partial h_\eps^a(h_2)]\rarrow \frac{1}{4\pi}\E_{SG(4\pi,z)}[\partial \varphi(g_1)\overline\partial \varphi(h_2)]\nonumber
\end{align}
as $\eps\rarrow 0$.
The following formulas (see equations (3.36), (3.43) in \cite{BW}) can be derived via residue calculus
\begin{align}\label{temp04311}
\E_{SG(4\pi,z)}[\partial \varphi(g_1)\partial \varphi(g_2)]= \int_{\R^2}\frac{dp}{(2\pi)^2}\widehat{\partial g_1}(p)\widehat{\partial g_2}(-p)C_{A|z|}(p),\\
\E_{SG(4\pi,z)}[\partial \varphi(g_1)\overline\partial \varphi(h_2)]= \int_{\R^2}\frac{dp}{(2\pi)^2}\widehat{\partial g_1}(p)\widehat{\overline\partial h_2}(-p)C_{A|z|}(p)\nonumber
\end{align}
with the integrals understood in the principal value sense. By taking complex conjugates, we can use \eqref{temp0505}, \eqref{temp302f}, \eqref{temp09523} and \eqref{temp04311} in \eqref{temp31440} to get \eqref{temp52df4}. 
\\
All that remains is to show that $\eps \tilde{R}_i(\eps)\rarrow 0$, $i=1,2$. This is quite straightforward. Consider the term $\eps\E_a[h_\eps^a(R_1(\cdot,\eps)\partial h_\eps^a(g_2)]$.
Using the notation in \eqref{aheightrep}, let $\partial h_\eps^a(x_2/\eps)=\sigma_{e_1}(\ind_{e_1}-\ind_{e_1'})/\eps$ so
\begin{align}
&\E_a[h_\eps^a(R_1(\cdot,\eps)\partial h_\eps^a(g_2)]\label{temp4r50}\\
&= \E_a[(2\eps)^3\sum_{x_1,x_2\in \eps(2\Z+1/2)^2}\sum_{e_2,e_2'\in \gamma(x_1/\eps)}\sigma_{e_1}\sigma_{e_2}(\ind_{e_1}-\ind_{e_1'})(\ind_{e_2}-\ind_{e_2'})R_1(x_1,\eps)g_2(x_2)].
\nonumber
\end{align}
Also one can take the path $\gamma(x_1)/\eps$ to have length $<C'/\eps$, so the modulus of \eqref{temp4r50} is bounded above by 
\begin{align}
(2\eps)^2\sum_{x_1,x_2\in\eps(2\Z+1/2)^2}\sup_{e_2,e_2'\in \gamma(x_1/\eps)}|\E_a[(\ind_{e_1}-\ind_{e_1'})(\ind_{e_2}-\ind_{e_2'})]R_1(x_1,\eps)g_2(x_2)|.\label{temp0923k}
\end{align}

For all $\eps$ sufficiently small, both $R_i(\cdot,\eps)$ and $g_i(\cdot)$ have support contained in a compact set $D_i^\eps$ where $D_1^\eps\cap D_2^\eps=\emptyset$. So one sees that the edges $e_1,e_1'$ are distance $>C\eps$ from the edges $e_2,e_2'$ in the above sum which justifies the use of Theorem \ref{invkastasymp}.  Now with arguments very similar to those in the proof of Theorem \ref{derivstosinegordon} and by the boundedness of the Bessel functions on compact sets, one can see that \eqref{temp0923k} is bounded in $\eps$,  so $\eps\E_a[h_\eps^a(R_1(\cdot,\eps)\partial h_\eps^a(g_2)]\rarrow 0$, the other terms in $\eps \tilde{R}(\eps)$ are treated similarly.
\end{proof}
\end{proposition}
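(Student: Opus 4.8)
The plan is to reduce the two-point function of the $a$-height field to the weak-derivative correlations controlled by Theorem~\ref{derivstosinegordon}, and then reassemble the full sine-Gordon two-point function using the residue-calculus identities \eqref{temp04311}. First I would invoke Condition~\ref{cond1} to write each test function as $f_i=\partial g_i+\overline{\partial}h_i$, with $g_1,h_1$ supported away from $g_2,h_2$. By bilinearity of the expectation, $\E_a[h_\eps^a(f_1)h_\eps^a(f_2)]$ then expands into four terms built out of $h_\eps^a(\partial g_i)$ and $h_\eps^a(\overline{\partial}h_i)$.

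The second step is a discrete summation by parts transferring the derivative from the test function onto the height field. Writing $h_\eps^a(\partial g)$ as a lattice sum against $\partial g$ and applying summation by parts produces $\partial h_\eps^a(g)$, at the cost of replacing the discrete difference operator $\delta_j^\eps$ by the true derivative $\partial_j$; Taylor's theorem bounds the defect by $\eps$ times a smooth, compactly supported, uniformly bounded remainder $R_i(\cdot,\eps)$. Consequently each of the four terms equals the corresponding derivative correlation, such as $\E_a[\partial h_\eps^a(g_1)\partial h_\eps^a(g_2)]$, plus an $\eps$-prefactored remainder which I denote $\eps\tilde R_i(\eps)$.

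Applying Theorem~\ref{derivstosinegordon} termwise, the derivative correlations converge to $\frac{1}{4\pi}$ times the sine-Gordon derivative correlations $\E_{SG(4\pi,z)}[\partial\varphi(g_1)\partial\varphi(g_2)]$ and $\E_{SG(4\pi,z)}[\partial\varphi(g_1)\overline{\partial}\varphi(h_2)]$, together with their complex conjugates. The identities \eqref{temp04311}---whose validity relies precisely on the disjoint-support hypothesis of Condition~\ref{cond1}---rewrite these four limiting pieces as a single Fourier-side integral, recombining into $\E_{SG(4\pi,z)}[\varphi(f_1)\varphi(f_2)]$ and yielding \eqref{temp52df4} up to the remainders.

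The main obstacle, and the only genuinely analytic point, is showing $\eps\tilde R_i(\eps)\rarrow0$, i.e.\ that each remainder stays bounded as $\eps\to0$ despite the $a$-height function being summed along a path $\gamma(y)$ of length $O(1/\eps)$. For a representative term such as $\E_a[h_\eps^a(R_1(\cdot,\eps))\partial h_\eps^a(g_2)]$, I would expand $h_\eps^a(R_1)$ via the path representation \eqref{aheightrep} into a double lattice sum of differences of dimer indicators, use the determinantal identity \eqref{dimersdetprocess} to reduce each two-edge expectation to a product of two inverse-Kasteleyn entries, and then invoke Theorem~\ref{invkastasymp}. Since the supports $D_1^\eps,D_2^\eps$ remain disjoint, the two edge-pairs stay at distance $\gtrsim\eps$, so the kernel asymptotics apply uniformly and the boundedness of $K_0,K_1$ on the relevant compact sets turns the double sum into a convergent Riemann sum, bounded uniformly in $\eps$. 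The $\eps$ prefactor then forces the remainder to vanish, and the other terms of $\eps\tilde R_i(\eps)$ are treated identically.
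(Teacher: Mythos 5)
Your proposal is correct and follows essentially the same route as the paper's own proof: the Condition~\ref{cond1} decomposition $f_i=\partial g_i+\overline{\partial}h_i$ with bilinear expansion into four terms, discrete summation by parts with an $O(\eps)$ Taylor remainder, termwise application of Theorem~\ref{derivstosinegordon}, recombination via the identities \eqref{temp04311}, and the remainder estimate via the path representation \eqref{aheightrep}, the determinantal structure, and the uniform kernel asymptotics of Theorem~\ref{invkastasymp} under disjoint supports. No substantive differences to report.
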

We can now prove the main result of this paper on the convergence of the two-point correlation function of the height field to the two-point correlation function of the sine-Gordon field (Theorem \ref{mainthm} restated).
\begin{theorem}\label{mainthm2}
Let $\eps>0$ and $z\in\R\setminus\{0\}$ be fixed not depending on $\eps$. Let $0<a<1$, be such that 
\begin{align}
a=1-C |z|  \eps
\end{align}
where $C =4\sqrt{2} \pi e^{-\gamma/2}$ and $\gamma$ is the Euler-Mascheroni constant.
We have the following limit,
\begin{align}
\E_a[h_\eps(f_1)h_\eps(f_2)]\rarrow \frac{1}{4\pi}\E_{SG(4\pi,z)}[\varphi(f_1)\varphi(f_2)]\label{temp52df4q}
\end{align}
as $\eps\rarrow 0$. \\
In the above, $f_i\in C_c^\infty(\R^2)$, $i\in\{1,2\}$ are two functions satisfying condition \ref{cond1}.
\begin{proof}
%\begin{align}
%\eps^4\sum_{x_1,x_2\in \eps(\Z+1/2)^2}\E_a[h(x_1/\eps)h(x_2/\eps)]f(x_1)f(x_2)-4^2\eps^4\sum_{x_1,x_2\in \eps(2\Z+1/2)^2}\E_a[h(x_1/\eps)h(x_2/\eps)]f(x_1)f(x_2).
%\end{align}
We use Proposition \ref{aheighttwoptcorr}, together with an argument that shows the $a$-height field $h_\eps^a$ is "close enough" to the height field $h_\eps$, that is, we show
\begin{align}
\E_a[h_\eps(f_1)h_\eps(f_2)]-\E_a[h_\eps^a(f_1)h_\eps^a(f_2)]\rarrow 0 \label{temp43rdz}
\end{align} 
as $\eps\rarrow 0$.
Write the $16$ points $\{(0,0),\eps(1,0),\eps(0,1),\eps(1,1)\}^2$ as $A$. We can write the left-hand side of \eqref{temp43rdz} as
\begin{align}
\eps^4\sum_{x_1,x_2\in \eps(2\Z+1/2)^2}\Big(\sum_{y_1,y_2\in A} \E_a[h((x_1+y_1)/\eps)h((x_2+y_2)/\eps)]f(x_1+y_1)f(x_2+y_2)\Big)\label{temp23fr}\\-4^2\E_a[h(x_1/\eps)h(x_2/\eps)]f(x_1)f(x_2).\nonumber
\end{align}
By Taylor's theorem we can get a bounded function $R(\eps,x_1,x_2,y_1,y_2)$ with bounded support such that
\begin{align}
f(x_1+y_1)f(x_2+y_2)=f(x_1)f(x_2)+\eps R(\eps,x_1,x_2,y_1,y_2),
\end{align}
which we use in \eqref{temp23fr}. Hence we will be done if we show that both of the limits
\begin{align}
\eps \E_a[h((x_1+y_1)/\eps)h((x_2+y_2)/\eps)]R(\eps,x_1,x_2,y_1,y_2)\rarrow 0\label{temp23dcz}\\
\E_a[h((x_1+y_1)/\eps)h((x_2+y_2)/\eps)-h(x_1/\eps)h(x_2/\eps)]\rarrow 0 \label{temp23dc2}
\end{align} 
hold uniformly for $x_1, x_2\in \eps(2\Z+1/2)^2$ in the supports of $f_1,f_2$ respectively and for $y_1,y_2\in A$ as $\eps\rarrow 0$. To see \eqref{temp23dcz} we argue that  $\E_a[h((x_1+y_1)/\eps)h((x_2+y_2)/\eps)]$ is bounded via the use of Theorem \ref{invkastasymp} and boundedness of the Bessel functions on compact sets.  Since this is similar to before we just focus on showing \eqref{temp23dc2}.
As in the notation introduced in \eqref{aheightrep}, write
\begin{align}
h((x_i+y_i)/\eps)=h(x_i/\eps)+\sum_{e\in \gamma_{i}}\sigma_e(\ind_e-1/4) \label{temp23vxb}
\end{align}
where the path $\gamma_{i}$ is the shortest path made of vertical and horizontal line segments from the point $x_i/\eps$ to the point $(x_i+y_i)/\eps$ and the sum is over edges crossed by the path, observe $\text{length}(\gamma_i)\leq 2$. Insert \eqref{temp23vxb} into the left hand side of \eqref{temp23dc2}, we get
\begin{align}
\E_a[\sum_{e\in \gamma_{1}}\sigma_e(\ind_e-1/4) h(x_2)+\sum_{e\in \gamma_{2}}\sigma_e(\ind_e-1/4) h(x_1)]\\+ \ 
\E_a[\sum_{e_1\in \gamma_{1}}\sum_{e_2\in \gamma_{2}}\sig_{e_1}\sig_{e_2}(\ind_{e_1}-1/4)(\ind_{e_2}-1/4)].\label{temp365g}
\end{align}
Note that $h(x_i)=h^a(x_i)$. Using the symmetries in \eqref{infinvKasteleyn} we recall that that $\P_a(e\in\omega)$ are all equal for the four types of edges $\eps_1,\eps_2$ of weight $a$ and so $\E_a[h(x_i)]=\E_a[h^a(x_i)]=0$. We then have 
\begin{align}
|\E_a[\sum_{e\in \gamma_{1}}\sigma_e(\ind_e-1/4) h(x_2)]|  =& \ |\E_a[\sum_{e\in \gamma_{1}}\sigma_e\ind_e h(x_2)]|\nonumber \\ \leq & \ \sum_{e\in \gamma_1}| \E_a[\ind_eh(x_2)]|\nonumber\\
 =& \ \sum_{e\in \gamma_1}| \E_a[\ind_e \sum_{e_2,e_2'\in\gamma(x_2/\eps)}\sigma_{e_2} (\ind_{e_2}-\ind_{e_2'})]|.\label{temp23bnv}
\end{align}
We can take the path $\gamma(x_2/\eps)$ so that the number of edges it crosses is bounded above by $\tilde{C}/\eps$ for some $\tilde{C}>0$, to get \eqref{temp23bnv} less than or equal to
\begin{align}
 C\frac{\tilde{C}}{\eps}\sup_{e\in \gamma_1 , \ e_2,e_2'\in \gamma(x_2/\eps)\cap \text{supp}(f_2)}|\P(e,e_2)-\P(e,e_2')|.\label{temp56v}
\end{align}
Now from \eqref{dimerdetprocess} and Proposition \ref{invkastasymp} we see \eqref{temp56v} is $O(\eps)$. Now we prove the term in \eqref{temp365g} goes to zero. We have
\begin{align}
\E_a[(\ind_{e_1}-1/4)(\ind_{e_2}-1/4)]=\P(e_1,e_2)-\frac{1}{4}(\P(e_1)+\P(e_2))+\frac{1}{16}. \label{temp35rf2}
\end{align}
From \cite{Bain} we have $\P(e\in \omega)=1/4+O(\eps\ln\eps)$ (see the equation preceding equation 3.7).
From \eqref{dimerdetprocess} and Theorem \ref{invkastasymp} we have $\P(e_1,e_2)=\P(e_1)\P(e_2)-L(e_1,e_2)L(e_2,e_1)=1/16+O(\eps \ln\eps)+O(\eps^2)$. Hence \eqref{temp35rf2} is $O(\eps\ln\eps)$ and we are done.
\end{proof}
\end{theorem}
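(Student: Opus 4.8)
The plan is to deduce the statement from Proposition \ref{aheighttwoptcorr}, which already supplies the convergence \eqref{temp52df4} for the coarser $a$-height field $h_\eps^a$. Everything then reduces to showing that the full height field $h_\eps$ and the $a$-height field $h_\eps^a$ have asymptotically equal two-point functions, i.e. that the difference in \eqref{temp43rdz} tends to zero. The conceptual reason to expect this is that $h_\eps^a$ samples the height function only on the sublattice of $a$-faces $\eps(2\Z+1/2)^2$, whereas $h_\eps$ sums over all faces $\eps(\Z+1/2)^2$; but pairing against smooth test functions averages over the finitely many face-offsets inside each fundamental domain, and the height increments between neighbouring faces are only weakly correlated at the macroscopic separation enforced by the disjoint supports of $f_1,f_2$.

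First I would regroup the fine sum defining $h_\eps$ into blocks indexed by the $a$-sublattice, introducing the $16$-element offset set $A=\{(0,0),\eps(1,0),\eps(0,1),\eps(1,1)\}^2$ and writing the difference as in \eqref{temp23fr}. A single application of Taylor's theorem replaces $f(x_1+y_1)f(x_2+y_2)$ by $f(x_1)f(x_2)+\eps R$ with $R$ bounded and boundedly supported, which reduces the whole problem to the two uniform limits \eqref{temp23dcz} and \eqref{temp23dc2}. The first is immediate once one knows that $\E_a[h((x_1+y_1)/\eps)h((x_2+y_2)/\eps)]$ is bounded uniformly in $\eps$; this boundedness follows exactly as in the proof of Proposition \ref{aheighttwoptcorr}, from the determinantal structure \eqref{dimersdetprocess}, Theorem \ref{invkastasymp}, and the boundedness of $K_0,K_1$ on the compact set of separations determined by the disjoint supports.

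The substance is the second limit \eqref{temp23dc2}. Here I would use the path representation \eqref{aheightrep} to write $h((x_i+y_i)/\eps)=h(x_i/\eps)+\sum_{e\in\gamma_i}\sigma_e(\ind_e-1/4)$ with $\gamma_i$ a path of length at most $2$, then substitute and expand the product to isolate the two linear terms and the quadratic cross term of \eqref{temp365g}. For the linear terms I would use that $\E_a[h(x_i)]=0$ (from the symmetries of \eqref{infinvKasteleyn}, which also make all weight-$a$ single-edge probabilities equal) to discard the constant $-1/4$, and then unfold $h(x_2)$ via \eqref{aheightrep} into a sum $\sum_{e_2,e_2'}\sigma_{e_2}(\ind_{e_2}-\ind_{e_2'})$ along a path of length $O(1/\eps)$. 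The quadratic term is dealt with using the covariance identity \eqref{temp35rf2} together with the inputs $\P(e\in\omega)=1/4+O(\eps\ln\eps)$ from \cite{Bain} and $\P(e_1,e_2)=1/16+O(\eps\ln\eps)+O(\eps^2)$ coming from \eqref{dimerdetprocess} and Theorem \ref{invkastasymp}, so that this term is $O(\eps\ln\eps)$.

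I expect the main obstacle to be the linear terms, where a naive estimate pits the path length $O(1/\eps)$ against individual correlations that are only $O(\eps\ln\eps)$ and would therefore fail to close. The resolution is a cancellation: each step of the path crosses two edges $e_2,e_2'$ that are \emph{both} weight-$a$ edges, so $\P(e_2)=\P(e_2')$ and the product (mean-field) contribution to $\P(e,e_2)-\P(e,e_2')$ cancels exactly, leaving only $-L(e,e_2)L(e_2,e)+L(e,e_2')L(e_2',e)$, which is $O(\eps^2)$ since Theorem \ref{invkastasymp} makes every relevant entry of $\K_a^{-1}$ of size $O(\eps)$. Summed against the $O(1/\eps)$ path length this yields the $O(\eps)$ bound displayed in \eqref{temp56v}. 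The delicate point is to carry out this cancellation uniformly over the compactly supported regions while keeping the edges of $\gamma_1$ at distance of order $\eps$ from those along $\gamma(x_2/\eps)$, so that Theorem \ref{invkastasymp} is legitimately applicable; once this is in place, the remaining estimates are routine bookkeeping.
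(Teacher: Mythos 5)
Your proposal is correct and follows essentially the same route as the paper's proof: the same reduction via Proposition \ref{aheighttwoptcorr} and the block/Taylor decomposition \eqref{temp23fr}, the same path representation \eqref{temp23vxb} splitting into linear and quadratic terms, and the same inputs (\eqref{dimerdetprocess}, Theorem \ref{invkastasymp}, and the one-point estimate from \cite{Bain}). Your explicit identification of the cancellation $\P(e,e_2)-\P(e,e_2')=-L(e,e_2)L(e_2,e)+L(e,e_2')L(e_2',e)$, using $\P(e_2)=\P(e_2')$ for weight-$a$ edges, is exactly the mechanism the paper invokes implicitly when it asserts \eqref{temp56v} is $O(\eps)$.
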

\section{Asymptotics of $\K_{a}^{-1}$}\label{sectionAsymptotics}

\subsection{Definitions and an expression for $\K_a^{-1}$}
In this subsection we obtain a formula for $\K_a^{-1}$ (given in Lemma \ref{preasymplem}) which is the starting point for the asymptotics of the next section.
We will use the single integral formula for $\K_{a}^{-1}$ derived in paper \cite{C/J} which we give below in \eqref{K_aalt}. It is instructive to derive this single integral formula from the double integral \eqref{infinvKasteleyn}. We include a condensed version of their method starting from the formula \eqref{infinvKasteleyn} for pedagogical purposes, for more details see the original paper. 

Consider the inverse Kasteleyn matrix $\K_a^{-1}(x,y)$, where $(x,y)\in W_{\eps_1}\times B_{\eps_2}$  and $u(2\vec{e}_1)+v(2\vec{e}_2)$, $u,v\in \Z$, is the translation to get from the fundamental domain containing $x$ to the fundamental domain containing $y$. We recall formula \eqref{infinvKasteleyn} as
\begin{align}
\K_{a}^{-1}(x,y)=\frac{1}{(2\pi i)^2}\int_{\Gamma_{1}}\frac{dz}{z}\int_{\Gamma_{1}}\frac{dw}{w}\frac{Q(z,w)_{\eps_1+1,\eps_2+1}}{P(z,w)} z^uw^v\label{temp3rfs}
\end{align}
and $P(z,w)=-2-2a^2-a/w-aw-a/z-az$ for the readers convenience. Observe that since $0<a<1$, $P(z,w)$ has no zeros on $\Gamma_1^2$. Making the change of variables 
$z=u_2/u_1$ for $u_1$ fixed and then $w=u_1u_2$, $u_1,u_2\in \Gamma_1$ in \eqref{temp3rfs}, we write
\begin{align}
\tilde{Q}(u_1,u_2):=Q(u_2/u_1,u_1u_2)=\begin{pmatrix}
i(a+u_1u_2) & -(a+u_2/u_1)\\
-(a+u_1/u_2) & i(  a+1/(u_1u_2)   )
\end{pmatrix}
\end{align}
and
\begin{align}
\tilde{P}(u_1,u_2):=P(u_2/u_1,u_1u_2)=-2(1+a^2)(1+\frac{c}{2}(u_1+1/u_1)(u_2+1/u_2)),\label{temp35rf}
\end{align}
where
\begin{align}
c=\frac{a}{1+a^2}\in(0,1/2).\nonumber
\end{align}

Let $k=u+v$ and $\ell=v-u$ so that we have
\begin{align}
\K_{a}^{-1}(x,y)&=\frac{1}{(2\pi i)^2}\int_{\Gamma_{1}}\frac{du_1}{u_1}\int_{\Gamma_{1}}\frac{du_2}{u_2}\frac{\tilde{Q}(u_1,u_2)_{\eps_1+1,\eps_2+1}}{\tilde{P}(u_1,u_2)}u_1^{v-u}u_2^{v+u}\label{etmp45trfd|}\\
&=  \begin{pmatrix} ia \ \mathcal{I}_a(k,\ell)+i\ \mathcal{I}_a(k+1,\ell+1) & -a \ \mathcal{I}_a(k,\ell)-\mathcal{I}_a(k+1,\ell-1)\\
-a \ \mathcal{I}_a(k,\ell)-\mathcal{I}_a(k-1,\ell+1)& ia \ \mathcal{I}_a(k,\ell)+i \ \mathcal{I}_a(k-1,\ell-1)\end{pmatrix}_{\eps_1+1,\eps_2+1}\nonumber
\end{align}
where we define
\begin{align}
\mathcal{I}_a(k,\ell):=\frac{1}{(2\pi i)^2}\int_{\Gamma_{1}}\frac{du_1}{u_1}\int_{\Gamma_{1}}\frac{du_2}{u_2}\frac{u_1^\ell u_2^k}{\tilde{P}(u_1,u_2)}=\mathcal{I}_a(|k|,|\ell|)\label{ikl1}
\end{align}
for integers $k,\ell$. The last equality follows from using symmetries of the form $u_i\rarrow 1/u_i$ in the integrand of $\mathcal{I}_a(k,\ell)$. 

We focus on getting a single integral formula for $\mathcal{I}_a(k,\ell)$.
Deform both of the contours over $\Gamma_1$ in \eqref{ikl1} to $\Gamma_R$  for $R<1$ very close to 1 (avoiding the zeros of $\tilde{P}$), so that
\begin{align}
\mathcal{I}_a(k,\ell)=\frac{1}{(2\pi i)^2}\int_{\Gamma_{R}}\frac{du_1}{u_1}\int_{\Gamma_{R}}\frac{du_2}{u_2}\frac{u_1^{|\ell|} u_2^{|k|}}{\tilde{P}(u_1,u_2)}.\label{ikl2}
\end{align}
Denote the punctured open unit disc $\D^*=\D\setminus\{0\}\subset \C$. Introduce the analytic bijective map 
\begin{align}
J:\D^*\rarrow \C\setminus i[-\sqrt{2c},\sqrt{2c}]; \ u\mapsto \sqrt{\frac{c}{2}}(u-1/u),
\end{align}
with analytic inverse
\begin{align}\label{Gfunction}
G:\C\setminus i[-\sqrt{2c},\sqrt{2c}] \rarrow \D^*;  \ w\mapsto \frac{1}{\sqrt{2c}}(w-\sqrt{w^2+2c})
\end{align}
where $\sqrt{w^2+2c}=i\sqrt{-\sqrt{2c}-iw}\sqrt{\sqrt{2c}-iw}$ and the two previous square roots are principal branch square roots. Equivalently,
\begin{align}
G(w)=\frac{w}{\sqrt{2c}}(1-\sqrt{1+\frac{2c}{w^2}})\label{temp34reflkj}
\end{align}
where the previous square root is the principle branch square root. We note $J$ is related to the Joukovski map and the above claims about $J$ and $G$ follow from chapter 6 in \cite{Mar}. Observe that
\begin{align}
\tilde{P}(u_1,u_2)=-2(1+a^2)(1-J(iu_1)J(iu_2)).\label{temp5gg4tf}
\end{align}

Making the change of variables $u_1=i^{-1}G(w_1), u_2=i^{-1}G(w_2)$ in \eqref{ikl2}, a short calculation gives
\begin{align}
\frac{du_i}{u_i}=\frac{dw_i}{\sqrt{w^2+2c}}, \ i=1,2,
\end{align}
and 
\begin{align}
\tilde{P}(i^{-1}G(w_1),i^{-1}G(w_2))=-2(1+a^2)(1-w_1w_2)
\end{align}
which follows from \eqref{temp5gg4tf} and the fact that $G$ is the inverse of $J$. 
After this change of variables we have
\begin{align}
\mathcal{I}_a(k,\ell)=\frac{-i^{-|k|-|\ell|}}{2(1+a^2)(2\pi i)^2}\int_{\gamma_{R}}dw_1\int_{\gamma_{R}}dw_2\frac{G(w_1)^{|\ell|}G(w_2)^{|k|}}{\sqrt{w_1^2+2c}\sqrt{w_2^2+2c}(1-w_1w_2)}.
\end{align}
The contour $\gamma_R$ is the image of $\Gamma_R$ under $J(i \ \cdot)$, so it is an ellipse. Infact, with some basic analysis one can show that for $R$ close enough to $1$, for each $w_1\in \gamma_R$, the pole at $1/w_1$ lies outside the region enclosed by the $w_2$ contour $\gamma_R$. Hence we deform the $w_2$ contour to $\Gamma_{R'}$ and take the limit $R'\rarrow \infty$. We pick up a single integral coming from the simple pole at $w_2=1/w_1$. The double integral vanishes in the limit since from \eqref{temp34reflkj} it is clear that for $|w_2|=R'$, $R'>>1$ very large, we have $|G(w_2)|=O(1/R')$. Finally deform the $w_1$ contour to $\Gamma_1$ in the single integral to get the formula
\begin{align}
\mathcal{I}_a(k,\ell)=\frac{-i^{-|k|-|\ell|}}{2(1+a^2)(2\pi i)^2}\int_{\Gamma_{1}}\frac{dw_1}{w_1}\frac{G(w_1)^{|\ell|}G(1/w_1)^{|k|}}{\sqrt{w_1^2+2c}\sqrt{w_1^2+2c}}.\label{ikl}
\end{align}
Plugging this back into \eqref{etmp45trfd|} we see we have derived a single integral formula for $\K_a^{-1}$. For example, if we take the two vertices $x,y\in W_0\times B_0$ in the same fundamental domain, i.e. $u=v=0$ so that $e=(y,x)$ is an $a$ weighted dimer, we have
\begin{align}
\K_a^{-1}(x,y)=\frac{-1}{2(1+a^2)2\pi i}\int_{\Gamma_{1}}\frac{dw_1}{w_1}\frac{i(a-G(w_1)G(1/w_1))}{\sqrt{w_1^2+2c}\sqrt{1/w_2^2+2c}}.
\end{align}

We now define some formulas which allow us to write a compact expression for $\K_a^{-1}(x,y)$ directly in terms of the planar coordinates of the vertices $x$ and $y$.
For $\eps_1,\eps_2\in\{0,1\}$, we write
\begin{equation}
h(\eps_1,\eps_2)=\eps_1(1-\eps_2)+\eps_2(1-\eps_1).\label{HHH}
\end{equation}
 Let $(x_1,x_2)\in W_{\eps_1}$, $(y_1,y_2)\in B_{\eps_2}$ and define
\begin{align}
&k_1=\frac{x_2-y_2-1}{2}+h(\eps_1,\eps_2), && \ell_1=\frac{y_1-x_1-1}{2},\label{klh1}\\
& k_2=k_1+1-2h(\eps_1,\eps_2) , && \ell_2=\ell_1+1.
\label{klh2}
\end{align}
For vertices $(x_1,x_2)\in W_{\eps_1}$, $(y_1,y_2)\in B_{\eps_2}$, we have the following single integral representation of the inverse Kasteleyn matrix (equation (4.22) in \cite{C/J}):
\begin{align}
\K_{a}^{-1}(x_1,x_2,y_1,y_2)=i^{1+h(\eps_1,\eps_2)}(a^{\eps_2}\mathcal{I}_a(k_1,\ell_1)+a^{1-\eps_2}\mathcal{I}_a(k_2,\ell_2))
\label{K_aalt}
\end{align}
with $\mathcal{I}_a$ as in \eqref{ikl}. Note that the function $E_{k,\ell}$ in equation (4.22) in \cite{C/J} is $-\mathcal{I}_a(k,\ell)$.
The formula \eqref{K_aalt} is just a compact way of writing \eqref{etmp45trfd|}. To see this, let $FD(v)$ denote the centre of the fundamental domain containing the vertex $v$, if $x=(x_1,x_2)\in W_{\eps_1}, y=(y_1,y_2)\in B_{\eps_2}$ then
\begin{align}
FD(x)=(x_1,x_2)-(2\eps_1-1,0), && FD(y)=(y_1,y_2)-(0,2\eps_2-1).
\end{align}
The translation to get from the fundamental domain containing $x$ to the fundamental domain containing $y$ is
\begin{align}
u(2\vec{e}_1)+v(2\vec{e}_2)=FD(y)-FD(x),
\end{align}
which implies
\begin{align}
2(-\ell,k)=2(u-v,u+v)=(y_1-x_1+2\eps_1-1,y_2-x_2-(2\eps_2-1)).
\end{align}
Substituting $\ell$ and $k$ into \eqref{etmp45trfd|} and using $\mathcal{I}(k,\ell)=\mathcal{I}(|k|,|\ell|)$ one can check that \eqref{K_aalt} and \eqref{etmp45trfd|} are equal in the four cases $\eps_1,\eps_2\in\{0,1\}$.
For later use, we note the symmetries
\begin{align}
\overline{\sqrt{w^2+2c}}=\sqrt{\overline{w}^2+2c}, && -\sqrt{w^2+2c}=\sqrt{(-w)^2+2c}
\label{squarerootsymmetries}
\end{align}
which give
\begin{align}
&\overline{G(w)}=G(\overline{w}),&& -G(w)=G(-w).
\label{Gsymmetries}
\end{align}

Consider two $a$-dimers, $e_j=(x(j),y(j))\in W_{\eps_1}\times B_{\eps_2}$ where their coordinates are specified by
\begin{align}
x(j)=(x_1(j),x_2(j))=(1+\alpha_j\eps^{-1})\vec{e}_1+\beta_j\eps^{-1}\vec{e}_2+(0,2\eps_1-1)\label{temp0392}\\
y(j)=(y_1(j),y_2(j))=(1+\alpha_j\eps^{-1})\vec{e}_1+\beta_j\eps^{-1}\vec{e}_2+(2\eps_2-1,0)\nonumber
\end{align}
where $j\in\{0,1\}$, $(\alpha_j\eps^{-1},\beta_j\eps^{-1})\in (2\Z)^2$, $\vec{e}_1=(1,1)$ and $\vec{e}_2=(-1,1)$. Also for the remainder of the article, $(\alpha_0,\beta_0)\neq (\alpha_1,\beta_1)$ and $\alpha_j,\beta_j$ are independent of $\eps$. Next we introduce some notation that will come into an integral representation of $\K_a^{-1}$ amenable to asymptotic analysis.
For $i\in \{0,1\}$, $i\neq j$, define 
\begin{align}
\alpha=\alpha_j-\alpha_i, && \beta=\beta_j-\beta_i,
\end{align}
\begin{align}\label{sig1sig2}
\sig_1=
\begin{cases}
\text{sign}(\alpha+\beta), & \text{if }\alpha\neq -\beta\\
-1,& \text{if }\alpha=-\beta \text{ and } \eps_1=0\\
1,& \text{if }\alpha=-\beta \text{ and } \eps_1=1
\end{cases},
&&
\sig_2=
\begin{cases}
\text{sign}(\alpha-\beta), & \text{if }\alpha\neq \beta\\
1,& \text{if }\alpha=\beta \text{ and } \eps_2=0\\
-1,& \text{if }\alpha=\beta \text{ and } \eps_2=1
\end{cases}.
\end{align}
Also define
\begin{align}\label{gfn}
g_{\eps_1,\eps_2}^{\sig_1,\sig_2}(w)=& \ a^{\eps_2}\big(i^{-1}G(w)\big)^{\sig_2(1-\eps_2)}\big(i^{-1}G(1/w)\big)^{\sig_1(2\eps_1-1)(1-\eps_2)}\\&+a^{1-\eps_2}\big(i^{-1}G(w)\big)^{-\sig_2\eps_2}\big(i^{-1}G(1/w)\big)^{\sig_1\eps_2(2\eps_1-1)}.\nonumber
\end{align}

If we rotate the graph $G$ by 45 degrees anti-clockwise, $\alpha$ represents the difference in the vertical direction between the dimers $e_j$ and $e_i$, and $\beta$ represents their horizontal difference. Next we have a lemma which is used in  the substitution of \eqref{temp0392} into \eqref{K_aalt}, and a lemma rewriting  the resulting integral in a form ready for asymptotic analysis.
\begin{lemma}\label{formk1l1k2l2} Consider $(x(j),y(i))\in W_{\eps_1}\times B_{\eps_2}$ as in \eqref{temp0392}, we have
\begin{align}
&|k_1|=\frac{|\alpha+\beta|}{2\eps}+\sig_1(2\eps_1-1)(1-\eps_2), &&|\ell_1|=\frac{|\alpha-\beta|}{2\eps}+\sig_2(1-\eps_2),\\
&|k_2|=\frac{|\alpha+\beta|}{2\eps}+\sig_1\eps_2(2\eps_1-1), &&|\ell_2|=\frac{|\alpha-\beta|}{2\eps}-\sig_2\eps_2.
\end{align}
\begin{proof}
We just show the formula for $|k_1|$, we have
\begin{align}
|k_1|&=|\frac{x_2(j)-y_2(i)-1}{2}+h(\eps_1,\eps_2)|\\
&=|\frac{(1+\alpha_j\eps^{-1}+\beta_j\eps^{-1}+2\eps_1-1)-(1+\alpha_i\eps^{-1}+\beta_i\eps^{-1})-1}{2}+h(\eps_1,\eps_2)|\nonumber\\
&=|\frac{(\alpha+\beta)}{2\eps}+\eps_1-1+h(\eps_1,\eps_2)|\nonumber
\end{align} 
and $\eps_1-1+h(\eps_1,\eps_2)=(2\eps_1-1)(1-\eps_2)$ by \eqref{HHH}. The formula for $|k_1|$ follows by the definition of $\sig_1$ in \eqref{sig1sig2} immediately for the case $\alpha=-\beta$ and for the case $\alpha\neq -\beta$ use the identity
\begin{align}
|x+y|=|y|+\text{sign}(y) \cdot x, \quad \text{ when } |x|\leq |y|
\end{align}
to get
\begin{align}
|k_1|=\frac{|\alpha+\beta|}{2\eps}+\text{sign}(\alpha+\beta)(2\eps_1-1)(1-\eps_2), \quad \text{ when } \frac{|\alpha+\beta|}{2\eps}\geq |(2\eps_1-1)(1-\eps_2)|.
\label{temp23d4ef}
\end{align}
Since $\eps_1,\eps_2\in \{0,1\}$ and $\frac{\alpha+\beta}{2\eps}\in \Z\setminus\{0\}$, one sees the inequality in \eqref{temp23d4ef} is satisfied for $\alpha\neq -\beta$.
\end{proof}
\end{lemma}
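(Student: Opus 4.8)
The plan is to write each of $k_1,\ell_1,k_2,\ell_2$ as a large term of order $\eps^{-1}$ plus a bounded integer correction, and then read off the absolute value with an elementary sign identity. First I would substitute \eqref{temp0392} into the plane using $\vec e_1=(1,1)$, $\vec e_2=(-1,1)$ to obtain the four coordinates, e.g.\ $x_2(j)=(\alpha_j+\beta_j)\eps^{-1}+2\eps_1$ and $y_2(i)=1+(\alpha_i+\beta_i)\eps^{-1}$. Feeding these into \eqref{klh1}--\eqref{klh2} and collecting the $\eps^{-1}$ terms gives $k_1=\frac{\alpha+\beta}{2\eps}+(\eps_1-1+h(\eps_1,\eps_2))$ and $\ell_1=\frac{\beta-\alpha}{2\eps}+(\eps_2-1)$, with $k_2,\ell_2$ obtained from $k_2=k_1+1-2h$ and $\ell_2=\ell_1+1$. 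The correction terms then collapse via the Boolean identities $\eps_1-1+h(\eps_1,\eps_2)=(2\eps_1-1)(1-\eps_2)$ and $(2\eps_1-1)(1-\eps_2)+1-2h(\eps_1,\eps_2)=\eps_2(2\eps_1-1)$, both immediate case checks over $\eps_1,\eps_2\in\{0,1\}$; for the $\ell$'s the corrections are already $\eps_2-1$ and $\eps_2$.

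The heart of the argument is the elementary identity $|x+y|=|y|+\text{sign}(y)\,x$, valid whenever $y\in\Z$ and $|x|\le|y|$. I would apply it with $y$ the large term and $x\in\{-1,0,1\}$ the correction. Provided $\alpha+\beta\ne0$ (for the $k$'s) or $\alpha-\beta\ne0$ (for the $\ell$'s), the assumption $(\alpha_k\eps^{-1},\beta_k\eps^{-1})\in(2\Z)^2$ forces the large term to be a \emph{nonzero} integer, so $|y|\ge1\ge|x|$ and the identity is licensed. Matching $\text{sign}(\alpha+\beta)$ and $\text{sign}(\beta-\alpha)=-\text{sign}(\alpha-\beta)$ against the first clauses in the definitions of $\sig_1,\sig_2$ in \eqref{sig1sig2} then produces the four stated formulas directly.

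What remains, and where I expect the only genuine care is needed, are the degenerate loci $\alpha=-\beta$ (for $k_1,k_2$) and $\alpha=\beta$ (for $\ell_1,\ell_2$). There the large term vanishes, the sign identity no longer applies, and each quantity equals its correction term outright. One then checks by hand that the claimed formula still holds once the special values of the signs are inserted — namely $\sig_1=2\eps_1-1$ when $\alpha=-\beta$ and $\sig_2=1-2\eps_2$ when $\alpha=\beta$ — the decisive cancellation for the $k$'s being $(2\eps_1-1)^2=1$. The main obstacle is therefore not analytic but combinatorial bookkeeping: one must carry all four quantities through the two-parameter $(\eps_1,\eps_2)$ split and treat the two degenerate lines consistently, since it is precisely there that the non-generic clauses in the definitions of $\sig_1,\sig_2$ are forced upon us.
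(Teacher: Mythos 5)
Your proof is correct and takes essentially the same route as the paper's: substitute the coordinates \eqref{temp0392} into \eqref{klh1}--\eqref{klh2}, collapse the Boolean corrections via \eqref{HHH}, apply the identity $|x+y|=|y|+\text{sign}(y)\cdot x$ (licensed because $\frac{\alpha+\beta}{2\eps}$, resp.\ $\frac{\alpha-\beta}{2\eps}$, is a nonzero integer in the generic case), and handle the degenerate lines $\alpha=\mp\beta$ through the non-generic clauses of \eqref{sig1sig2}. The only difference is completeness: the paper details only $|k_1|$ and leaves the other three as similar, whereas you carry all four quantities (including the $\text{sign}(\beta-\alpha)=-\text{sign}(\alpha-\beta)$ bookkeeping for the $\ell$'s) through explicitly.
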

\begin{lemma}\label{preasymplem}
Consider $(x(j),y(i))\in W_{\eps_1}\times B_{\eps_2}$ as in \eqref{temp0392}, we have the integral representation
\begin{align}
\K_a^{-1}(x(j),y(i))&=-\frac{i^{1+h(\eps_1,\eps_2)}}{(1+a^2)\pi }\mcR\Big [\int_0^{\pi/2}\frac{dt}{|e^{2it}-2c|}\big(i^{-1}G(ie^{it})\big)^{\frac{|\alpha-\beta|}{2\eps}}\big(i^{-1}G(-ie^{-it}))^{\frac{|\alpha+\beta|}{2\eps}} g_{\eps_1,\eps_2}^{\sig_1,\sig_2}(ie^{it})\Big]\label{preasymp}
\end{align}
where $\mcR(z)$ denotes the real part of $z\in \C$.
\begin{proof}
Substitute the formulas in lemma \ref{formk1l1k2l2} into \eqref{K_aalt} and parametrise $w=e^{i\theta}$ to obtain
\begin{align}
\K_a^{-1}(x(j),y(i))&=-\frac{i^{1+h(\eps_1,\eps_2)}}{4(1+a^2)\pi i}\int_{\Gamma_1}\frac{dw}{w}\frac{\big(i^{-1}G(w)\big)^{|\alpha-\beta|/2\eps}\big(i^{-1}G(1/w))^{|\alpha+\beta|/2\eps}}{\sqrt{w^2+2c}\sqrt{1/w^2+2c}} g_{\eps_1,\eps_2}^{\sig_1,\sig_2}(w)\\
&=-\frac{i^{1+h(\eps_1,\eps_2)}}{4(1+a^2)\pi }\int_{-\pi}^\pi d\theta\frac{\big(i^{-1}G(e^{i\theta})\big)^{|\alpha-\beta|/2\eps}\big(i^{-1}G(e^{-i\theta}))^{|\alpha+\beta|/2\eps}}{|e^{2i\theta}+2c|} g_{\eps_1,\eps_2}^{\sig_1,\sig_2}(e^{i\theta}).
\end{align}
Split the above integral up into the sum over $(-\pi,0)$ and $(0,\pi)$. Make the change of variables $\theta\rarrow -\theta$ just over the region $\theta\in(-\pi,0)$. We have the symmetry
\begin{align}
\overline{g_{\eps_1,\eps_2}^{\sig_1,\sig_2}(e^{-i\theta})}=& \  a^{\eps_2}\big(i^{-1}G(e^{i\theta})\big)^{\sig_2(1-\eps_2)}\big(i^{-1}G(e^{-i\theta})\big)^{\sig_1(2\eps_1-1)(1-\eps_2)}(-1)^{\sig_2(1-\eps_2)+\sig_1(2\eps_1-1)(1-\eps_2)}\\&+a^{1-\eps_2}\big(i^{-1}G(e^{i\theta})\big)^{-\sig_2\eps_2}\big(i^{-1}G(e^{-i\theta})\big)^{\sig_1\eps_2(2\eps_1-1)}(-1)^{-\sig_2\eps_2+\sig_1\eps_2(2\eps_1-1)}\nonumber\\
=&\ g_{\eps_1,\eps_2}^{\sig_1,\sig_2}(e^{i\theta})\nonumber
\end{align}
where we used 
\begin{align}
i^{-1}G(e^{-i\theta})=-\overline{(i^{-1}G(e^{i\theta}))}
\end{align} in the first line and in the second line we used the facts $\sig_1,\sig_2\in \{-1,1\}$, $(1-\eps_2)+(2\eps_1-1)(1-\eps_2) \text{ mod }2=0$ and $-\eps_2+\eps_2(2\eps_1-1) \text{ mod }2 =0$.
We now have
\begin{align}
\K_a^{-1}(x(j),y(i))&=-\frac{i^{1+h(\eps_1,\eps_2)}}{2(1+a^2)\pi }\mcR\Big[\int_{0}^\pi d\theta\frac{\big(i^{-1}G(e^{i\theta})\big)^{|\alpha-\beta|/2\eps}\big(i^{-1}G(e^{-i\theta}))^{|\alpha+\beta|/2\eps}}{|e^{2i\theta}+2c|} g_{\eps_1,\eps_2}^{\sig_1,\sig_2}(e^{i\theta})\Big].
\end{align}
Now we use the change of variable $\theta=t+\pi/2$, $t\in (-\pi/2,\pi/2)$, followed by the symmetries
\begin{align}
i^{-1}G(ie^{-it})=\overline{(i^{-1}G(ie^{it}))},&& g_{\eps_1,\eps_2}^{\sig_1,\sig_2}(ie^{it})=\overline{g_{\eps_1,\eps_2}^{\sig_1,\sig_2}(ie^{-it})}
\end{align}
which yields the integral formula \eqref{preasymp}.
\end{proof}
\end{lemma}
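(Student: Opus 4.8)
The plan is to start from the compact single-integral representation \eqref{K_aalt}, substitute the exponents computed in Lemma \ref{formk1l1k2l2}, and then fold the resulting contour integral down to a quarter period by exploiting conjugation symmetries of $G$ and of $g_{\eps_1,\eps_2}^{\sig_1,\sig_2}$. First I would insert the single-integral formula \eqref{ikl} for each of $\mathcal{I}_a(k_1,\ell_1)$ and $\mathcal{I}_a(k_2,\ell_2)$ into \eqref{K_aalt}. The two integrals share the contour $\Gamma_1$ and the denominator $\sqrt{w^2+2c}\sqrt{1/w^2+2c}$, so the only $m$-dependence sits in $i^{-|k_m|-|\ell_m|}G(w)^{|\ell_m|}G(1/w)^{|k_m|}$. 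By Lemma \ref{formk1l1k2l2} each exponent splits as a large part $\frac{|\alpha\pm\beta|}{2\eps}$ plus a bounded $\sig_1,\sig_2$-correction, and the factors $i^{-|k_m|-|\ell_m|}$ combine with the $G$-powers to give powers of $i^{-1}G(w)$ and $i^{-1}G(1/w)$. Pulling out the common leading powers $(i^{-1}G(w))^{|\alpha-\beta|/2\eps}(i^{-1}G(1/w))^{|\alpha+\beta|/2\eps}$, the remaining bounded-exponent pieces together with the weights $a^{\eps_2}$ and $a^{1-\eps_2}$ are precisely the two summands defining $g_{\eps_1,\eps_2}^{\sig_1,\sig_2}$ in \eqref{gfn}, collapsing the sum of two $\mathcal{I}_a$-terms into one contour integral.

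Next I would parametrise $w=e^{i\theta}$, so that $dw/w=i\,d\theta$ (the $i$ cancelling the $1/i$ in the prefactor) and, writing $z=e^{2i\theta}+2c$, the denominator $\sqrt{w^2+2c}\sqrt{1/w^2+2c}$ becomes $\sqrt{z}\,\overline{\sqrt{z}}=|z|=|e^{2i\theta}+2c|$ via the conjugation symmetry \eqref{squarerootsymmetries}. This yields an integral of $\theta$ over $(-\pi,\pi)$.

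The step carrying the real content is then to fold this onto $(0,\pi)$. Splitting off $(-\pi,0)$ and sending $\theta\mapsto-\theta$ there, I would invoke $i^{-1}G(e^{-i\theta})=-\overline{i^{-1}G(e^{i\theta})}$, which follows from \eqref{Gsymmetries}, to prove the conjugation symmetry $\overline{g_{\eps_1,\eps_2}^{\sig_1,\sig_2}(e^{-i\theta})}=g_{\eps_1,\eps_2}^{\sig_1,\sig_2}(e^{i\theta})$. Applying that symmetry termwise produces the signs $(-1)^{\sig_2(1-\eps_2)+\sig_1(2\eps_1-1)(1-\eps_2)}$ and $(-1)^{-\sig_2\eps_2+\sig_1\eps_2(2\eps_1-1)}$; the key parity check is that, since $\sig_1,\sig_2\in\{-1,1\}$, these exponents have the parities of $(1-\eps_2)+(2\eps_1-1)(1-\eps_2)=2\eps_1(1-\eps_2)$ and $-\eps_2+\eps_2(2\eps_1-1)=2\eps_2(\eps_1-1)$, both even, so no spurious sign survives. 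The two halves are then complex conjugates and the integral reduces to twice the real part of the $(0,\pi)$ piece.

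Finally I would substitute $\theta=t+\pi/2$, under which $e^{i\theta}=ie^{it}$, $e^{-i\theta}=-ie^{-it}$, and $e^{2i\theta}+2c=-(e^{2it}-2c)$ (hence the denominator $|e^{2it}-2c|$), carrying $(0,\pi)$ to $(-\pi/2,\pi/2)$; then, using the analogous symmetries $i^{-1}G(ie^{-it})=\overline{i^{-1}G(ie^{it})}$ and $g_{\eps_1,\eps_2}^{\sig_1,\sig_2}(ie^{it})=\overline{g_{\eps_1,\eps_2}^{\sig_1,\sig_2}(ie^{-it})}$, I would fold $(-\pi/2,\pi/2)$ once more onto $(0,\pi/2)$, doubling the real part of the integrand and cancelling the leftover factor of $2$ to reach \eqref{preasymp}. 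I expect the main obstacle to be bookkeeping rather than anything conceptual: tracking the $i$-powers through \eqref{K_aalt} and \eqref{ikl}, verifying the conjugation symmetry of $g_{\eps_1,\eps_2}^{\sig_1,\sig_2}$ (the parity computation above), and ensuring the branch of $G$ and of $\sqrt{w^2+2c}$ stays on the correct sheet along the rotated contours $w=ie^{it}$ and $w=-ie^{-it}$.
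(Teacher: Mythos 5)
Your proposal is correct and follows essentially the same route as the paper's own proof: starting from \eqref{K_aalt} with the exponents from Lemma \ref{formk1l1k2l2}, combining the two $\mathcal{I}_a$-terms into $g_{\eps_1,\eps_2}^{\sig_1,\sig_2}$, parametrising $w=e^{i\theta}$, folding $(-\pi,\pi)$ onto $(0,\pi)$ via the conjugation symmetry of $g_{\eps_1,\eps_2}^{\sig_1,\sig_2}$ (with the same parity check on the sign exponents), and then folding once more onto $(0,\pi/2)$ after the shift $\theta=t+\pi/2$. The sign and branch bookkeeping you describe matches the paper's argument throughout.
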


\subsection{Asymptotics analysis}
In this section we perform an asymptotic analysis on the expression for $\K_a^{-1}$ found in Lemma \ref{preasymplem}.
The analysis is akin to Laplace's method, but instead of approximating a saddle-point function $f(x)$ near its critical point $x_0$ by a quadratic $f(x_0)+f''(x_0)x^2/2$ (a saddle) we will approximate the associated "saddle-point" function by a complex-valued function involving square roots, see Lemma \ref{'saddleptfn'}.
Define 
\begin{align}
r_1=\frac{|\alpha-\beta|}{2},&&r_2=\frac{|\alpha+\beta|}{2}
\end{align}
and the "saddle-point" function
\begin{align}
f_{r_1,r_2}(t)=r_1\log(i^{-1}G(ie^{it}))+r_2\log(i^{-1}G(-ie^{-it})).
\end{align}
The formula \eqref{preasymp} then reads
\begin{align}
\K_a^{-1}(x(j),y(i))&=-\frac{i^{1+h(\eps_1,\eps_2)}}{(1+a^2)\pi }\mcR\Big [\int_0^{\pi/2}\frac{dt}{|e^{2it}-2c|}\exp(\frac{1}{\eps}f_{r_1,r_2}(t)) g_{\eps_1,\eps_2}^{\sig_1,\sig_2}(ie^{it})\Big].\label{preasymp12}
\end{align}

We perform an asymptotic analysis on the integral which appears under the real-part sign in \eqref{preasymp12}. 
For $\eps_1,\eps_2\in\{0,1\}$, $\sig_1,\sig_2\in \{-1,1\}$
and for $t>0$, define
\begin{align}
\tilde{g}_{\eps_1,\eps_2}^{\sig_1,\sig_2}(t)=-1-\frac{(-1)^{\eps_1}\sig_1}{\sqrt{2}}\sqrt{1-4it}+\frac{(-1)^{\eps_2}\sig_2}{\sqrt{2}}\sqrt{1+4it}
\end{align}
where the square roots are principal branch square roots.

We require some facts about our "saddle-point" function $f_{r_1,r_2}$.
\begin{lemma}\label{descent}
If $0< t_1<t_2 <  \pi/2$ and one of $r_1,r_2$ is greater than zero, then  $\mcR[f_{r_1,r_2}(t_2)]<\mcR[f_{r_1,r_2}(t_1)]$, that is, $f_{r_1,r_2}$ is a strictly decreasing function.
\begin{proof}
Follows from Lemma 10 in \cite{J.M} by the symmetry $G(w)=\overline{G(\overline{w})}$.
\end{proof}
\end{lemma}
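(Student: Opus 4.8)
The plan is to first collapse the real part of $f_{r_1,r_2}$ into a single logarithm and then prove monotonicity of that logarithm. Using the symmetry $\overline{G(w)}=G(\overline{w})$ from \eqref{Gsymmetries} with $w=ie^{it}$ (so that $\overline{w}=-ie^{-it}$), we get $G(-ie^{-it})=\overline{G(ie^{it})}$; since $|i^{-1}|=1$ and $\mcR[\log\zeta]=\log|\zeta|$ for any branch of the logarithm, taking real parts in the definition of $f_{r_1,r_2}$ yields
\begin{align}
\mcR[f_{r_1,r_2}(t)]=r_1\log|G(ie^{it})|+r_2\log|G(-ie^{-it})|=(r_1+r_2)\log|G(ie^{it})|.
\end{align}
Because $r_1,r_2\geq 0$ with at least one strictly positive, $r_1+r_2>0$, so it suffices to prove that $t\mapsto |G(ie^{it})|$ is strictly decreasing on $(0,\pi/2)$.

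For this I would use the geometric meaning of $|G|$ coming from $J$. Writing $u=\rho e^{i\phi}$, the image $J(\{|u|=\rho\})$ is the ellipse with horizontal semi-axis $\sqrt{c/2}(\rho^{-1}-\rho)$ and vertical semi-axis $\sqrt{c/2}(\rho^{-1}+\rho)$; a one-line computation gives $b^2-a^2=2c$, so these ellipses are confocal with foci $\pm i\sqrt{2c}$, and their common (vertical) major axis has length $2b(\rho)=\sqrt{2c}(\rho+\rho^{-1})$, which is strictly decreasing in $\rho\in(0,1)$ since $\tfrac{d}{d\rho}(\rho+\rho^{-1})=1-\rho^{-2}<0$. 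Since $G=J^{-1}$, the number $\rho(t):=|G(ie^{it})|$ is the parameter of the unique confocal ellipse through $ie^{it}$, and this is well defined because $ie^{it}$ never meets the focal segment $i[-\sqrt{2c},\sqrt{2c}]$ (here $\sqrt{2c}<1$ as $c\in(0,1/2)$). By the defining property of an ellipse, $2b(\rho(t))$ equals the sum of distances from $ie^{it}$ to the two foci,
\begin{align}
S(t):=|ie^{it}-i\sqrt{2c}|+|ie^{it}+i\sqrt{2c}|=|e^{it}-\sqrt{2c}|+|e^{it}+\sqrt{2c}|.
\end{align}
As $\rho\mapsto 2b(\rho)$ is strictly decreasing, $|G(ie^{it})|=\rho(t)$ is strictly decreasing precisely when $S(t)$ is strictly increasing.

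It then remains to check that $S$ is strictly increasing on $(0,\pi/2)$, which is elementary. From $|e^{it}\mp\sqrt{2c}|^2=(1+2c)\mp 2\sqrt{2c}\cos t$ one obtains
\begin{align}
S'(t)=\sqrt{2c}\,\sin t\Big(\frac{1}{\sqrt{1+2c-2\sqrt{2c}\cos t}}-\frac{1}{\sqrt{1+2c+2\sqrt{2c}\cos t}}\Big),
\end{align}
and for $t\in(0,\pi/2)$ both $\sin t>0$ and $\cos t>0$, so the bracketed difference is strictly positive and $S'(t)>0$, giving the claim. The main obstacle here is bookkeeping rather than genuine difficulty: one must confirm the monotone dictionary between $\rho$, the major-axis length $2b(\rho)$, and $|G|$, and verify that the unit circle avoids the branch cut. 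An alternative, more computational route would differentiate directly, using $\tfrac{d}{dw}\log G(w)=-1/\sqrt{w^2+2c}$ and $w'(t)=-e^{it}$ to get $\tfrac{d}{dt}\log|G(ie^{it})|=\mcR\big[e^{it}/\sqrt{(ie^{it})^2+2c}\big]$; but there the sign of the real part hinges on correctly tracking the branch of $\sqrt{w^2+2c}$ specified in \eqref{Gfunction}, which is exactly the bookkeeping the modulus-only geometric argument sidesteps.
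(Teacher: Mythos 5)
Your proof is correct, but it takes a genuinely different route from the paper. The paper disposes of Lemma \ref{descent} by citation: it invokes Lemma 10 of \cite{J.M} (which concerns the analogous monotonicity statement there) together with the conjugation symmetry $G(w)=\overline{G(\overline{w})}$, and offers no self-contained argument. You instead give a complete standalone proof: the same symmetry collapses $\mcR[f_{r_1,r_2}(t)]$ to $(r_1+r_2)\log|G(ie^{it})|$, and then you establish monotonicity of $|G(ie^{it})|$ geometrically, via the fact that $J$ maps the circles $|u|=\rho$, $\rho\in(0,1)$, onto the confocal family of ellipses with foci $\pm i\sqrt{2c}$ and major-axis length $\sqrt{2c}(\rho+\rho^{-1})$ strictly decreasing in $\rho$, so that $|G(ie^{it})|$ decreases exactly when the focal distance sum $S(t)=|e^{it}-\sqrt{2c}|+|e^{it}+\sqrt{2c}|$ increases, which your elementary computation of $S'(t)>0$ confirms (note $1+2c-2\sqrt{2c}\cos t\geq(1-\sqrt{2c})^2>0$, so the square roots are harmless). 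All the intermediate claims check out: the reduction $\mcR[\log\zeta]=\log|\zeta|$ is branch-independent, the point $ie^{it}$ avoids the focal segment since $\sqrt{2c}<1$, and the monotone dictionary between $\rho$, the axis length, and $S(t)$ is applied in the right direction. What each approach buys: the paper's citation is short and leans on work already done in \cite{J.M}, whereas your argument makes the paper self-contained at this point and exposes the mechanism behind the monotonicity (a confocal foliation plus the ellipse distance-sum property) rather than deferring it; it would be a legitimate replacement for the paper's proof, at the cost of roughly a page.
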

\begin{lemma}\label{'saddleptfn'}
Let $a=1-\eps>0$, $\eps>0$, $d_\eps\sim \eps^{-1/2}$ and $0<t\leq d_\eps\pi/2$, there is a bounded function $R_1(t,\eps)$ such that
\begin{align}
\frac{1}{\eps}f_{r_1,r_2}(t\eps^2)=-i\pi r_2/\eps-\frac{1}{\sqrt{2}}\big(r_1\sqrt{1+4it}+r_2\sqrt{1-4it}\big)+R_1(t,\eps).
\end{align}
Furthermore, we have a constant $C>0$ such that
\begin{align}
|R_1(t,\eps)|<C\eps^{1/2}
\end{align}  for $0<t\leq d_\eps\pi/2$, $\eps$ sufficiently small and $(r_1,r_2)\in \R^2_{\geq 0}\setminus\{0\}$ in compact sets.
\begin{proof}
We have
\begin{align}
&f_{r_1,r_2}(t\eps^2)\label{temp0432}\\&=r_1\log(i^{-1}\frac{1}{\sqrt{2c}}(ie^{it\eps^2}-\sqrt{(ie^{it\eps^2})^2+2c}))+r_2\log(i^{-1}\frac{1}{\sqrt{2c}}(-ie^{-it\eps^2}-\sqrt{(-ie^{-it\eps^2})^2+2c})).\nonumber
\end{align}
We focus on the first term on the right-hand side above.
By Taylor's theorem we have bounded functions $R_2(\eps)$ for $0<\eps<1$ and $R_3(t\eps^2)$ for $0<t\eps^2<\pi/2$ such that
\begin{align}
\sqrt{2c}=1-\eps^2/4+R_2(\eps)\eps^3&& i\big(ie^{it\eps^2}\big)=-1-it\eps^2-t^2\eps^4R_3(t\eps^2).
\end{align}
Recall that $\sqrt{w^2+2c}=i\sqrt{-\sqrt{2c}-iw}\sqrt{\sqrt{2c}-iw}$ where the square roots on the right are principal branch square roots. Inserting the above expansions into  the principal branch square roots, another application of Taylor's  theorem yields two functions $R_5(t,\eps),R_6(t,\eps)$ both bounded for $0<\eps<1$, $0<t\eps^2<\pi/2$ such that
\begin{align}
\sqrt{(ie^{it\eps^2})^2+2c}=& \ i\sqrt{-\sqrt{2c}-i(ie^{it\eps^2})}\cdot \sqrt{\sqrt{2c}-i(ie^{it\eps^2})}
\\=& \ i\eps\sqrt{it+1/4}(1+R_5(t,\eps)[R_3(\eps)\eps+t^2\eps^2R_4(t\eps^2)])\nonumber\\
&\times\sqrt{2}(1+R_6(t,\eps)[\eps^2/4-R_3(\eps)\eps^3+it\eps^2+t^2\eps^4R_4(t\eps^2)]).\nonumber
\end{align}
Expanding out the above brackets we can see there is a bounded function $R_7(t,\eps)$ where $0<\eps<1$, $0<t\leq d_\eps \pi/2$ such that
\begin{align}
\sqrt{(ie^{it\eps^2})^2+2c}=\frac{i\eps}{\sqrt{2}}\sqrt{1+4it}+R_7(t,\eps)\eps^2.
\end{align}
Now we have $R_8(t\eps^2)$ bounded on $0<t\eps^2<\pi/2$ such that $e^{it\eps^2}=1+R_8(t\eps^2)t\eps^2$ and $R_9(\eps)$ bounded on $0<\eps<1$ such that $1/\sqrt{2c}=1+R_9(\eps)\eps^2$, so we get
\begin{align}
&r_1\log(i^{-1}\frac{1}{\sqrt{2c}}(ie^{it\eps^2}-\sqrt{(ie^{it\eps^2})^2+2c}))\\&=r_1\log(\frac{1}{\sqrt{2c}}(e^{it\eps^2}-\frac{\eps}{\sqrt{2}}\sqrt{1+4it}+i^{-1}R_7(t,\eps)\eps^2))\nonumber\\
&=r_1\log(1+R_9(\eps)\eps^2)+r_1\log(1-\frac{\eps}{\sqrt{2}}\sqrt{1+4it}+R_8(t\eps^2)t\eps^2+i^{-1}R_7(t,\eps)\eps^2).\nonumber
\end{align}
By Taylor's theorem applied to the logarithms in the previous line, we see we get a bounded function $R_{10}(t,\eps)$ where $0<t\leq d_\eps\pi/2$ and $\eps>0$ is small, such that
\begin{align}
r_1\log(i^{-1}\frac{1}{\sqrt{2c}}(ie^{it\eps^2}-\sqrt{(ie^{it\eps^2})^2+2c}))=-r_1\frac{\eps}{\sqrt{2}}\sqrt{1+4it}+R_{10}(t,\eps)\eps^{3/2}.\label{temp5190}
\end{align}
Next we note that 
\begin{align}
&r_2\log(i^{-1}\frac{1}{\sqrt{2c}}(-ie^{-it\eps^2}-\sqrt{(-ie^{-it\eps^2})^2+2c}))\\& \ =-i\pi r_2 +r_2\log(-i^{-1}\frac{1}{\sqrt{2c}}(-ie^{-it\eps^2}-\sqrt{(-ie^{-it\eps^2})^2+2c})).\nonumber
\end{align}
Since the complex conjugate of \eqref{temp5190} is the second factor in the previous expression, we see that the lemma holds with $R_1(t,\eps)=(R_{10}(t,\eps)+\overline{R_{10}(t,\eps)})\eps^{1/2}$.
\end{proof}
\end{lemma}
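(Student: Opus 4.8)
The goal is an asymptotic expansion of $\frac{1}{\eps}f_{r_1,r_2}(t\eps^2)$ on the rescaled region $0<t\leq d_\eps\pi/2$ with $d_\eps\sim\eps^{-1/2}$, so that $t\eps^2$ ranges over $(0,\eps^{3/2}\pi/2)$, a shrinking neighbourhood of $0$. I would begin by writing $f_{r_1,r_2}$ out explicitly using the formula \eqref{Gfunction} for $G$, so that the two logarithms become $r_1\log\big(i^{-1}\frac{1}{\sqrt{2c}}(ie^{it\eps^2}-\sqrt{(ie^{it\eps^2})^2+2c})\big)$ and the analogous $r_2$-term with $-ie^{-it\eps^2}$. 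The two terms are complex conjugates up to a sign, so the plan is to analyse the first ($r_1$) term carefully and then recover the second by conjugation, which is exactly how the displayed structure $R_1=(R_{10}+\overline{R_{10}})\eps^{1/2}$ arises at the end.

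The core of the argument is a chain of Taylor expansions carried out in the correct order so as to track the branch of the square root. First I would expand the two ingredients $\sqrt{2c}=1-\eps^2/4+O(\eps^3)$ (using $c=a/(1+a^2)$ and $a=1-\eps$) and $i(ie^{it\eps^2})=-e^{it\eps^2}=-1-it\eps^2+O(t^2\eps^4)$, recording each remainder as an explicitly bounded function. Next I would substitute these into the branch-respecting representation $\sqrt{w^2+2c}=i\sqrt{-\sqrt{2c}-iw}\sqrt{\sqrt{2c}-iw}$ with $w=ie^{it\eps^2}$: the factor $\sqrt{\sqrt{2c}-i(ie^{it\eps^2})}$ is $\sqrt{2}+O(\eps^2,t\eps^2)$ while the factor $\sqrt{-\sqrt{2c}-i(ie^{it\eps^2})}=\sqrt{\eps^2/4+it\eps^2+\cdots}=\eps\sqrt{it+1/4}\,(1+\text{small})$ produces the crucial $\sqrt{1+4it}$ through $\sqrt{it+1/4}=\tfrac12\sqrt{1+4it}$. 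Multiplying gives $\sqrt{(ie^{it\eps^2})^2+2c}=\frac{i\eps}{\sqrt2}\sqrt{1+4it}+O(\eps^2)$. Inserting this into $G$, factoring the leading $e^{it\eps^2}\approx 1$ out of the logarithm and expanding $\log(1+x)\approx x$ yields $r_1\log(i^{-1}G(ie^{it\eps^2}))=-r_1\frac{\eps}{\sqrt2}\sqrt{1+4it}+R_{10}(t,\eps)\eps^{3/2}$, i.e. equation \eqref{temp5190}. The second logarithm contributes the branch jump $-i\pi r_2$ and, being a conjugate, the $-\frac{1}{\sqrt2}r_2\sqrt{1-4it}$ term; dividing through by $\eps$ produces the stated formula.

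The main obstacle is \emph{uniformity of the remainders over the large range $0<t\leq d_\eps\pi/2$}, not the leading-order bookkeeping. Because $d_\eps\sim\eps^{-1/2}$, the variable $t$ is allowed to grow like $\eps^{-1/2}$, so naive bounds of the form $t^2\eps^4$ do not automatically beat $\eps^{3/2}$; I would need to check that every discarded term genuinely carries enough powers of $\eps$ relative to the growth in $t$ to remain $O(\eps^{3/2})$ uniformly, which is precisely what forces the $\eps^{1/2}$ loss and the claimed bound $|R_1(t,\eps)|<C\eps^{1/2}$. Concretely, the delicate point is that inside $\sqrt{-\sqrt{2c}-i(ie^{it\eps^2})}$ the quantity $\eps^2/4+it\eps^2$ does not degenerate (its modulus stays $\sim\eps^2\sqrt{t^2+1/16}$ away from the branch cut) for $t$ in the full range, so the principal-branch square root and subsequent $\log$ remain in their regions of analyticity with controlled derivatives. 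Verifying this — i.e. that all the intermediate bounded functions $R_2,\dots,R_{10}$ stay bounded uniformly in $t$ up to $d_\eps\pi/2$ and that the errors combine to $O(\eps^{3/2})$ after dividing by $\eps$ — is where the real care lies, while the algebra producing $\sqrt{1\pm 4it}$ and the phase $-i\pi r_2/\eps$ is routine once the expansions are set up.
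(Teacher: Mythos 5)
Your proposal follows essentially the same route as the paper's proof: the same explicit form of $f_{r_1,r_2}$ via \eqref{Gfunction}, the same chain of Taylor expansions of $\sqrt{2c}$ and $ie^{it\eps^2}$ fed into the branch-respecting factorisation $\sqrt{w^2+2c}=i\sqrt{-\sqrt{2c}-iw}\sqrt{\sqrt{2c}-iw}$, the same extraction of $\frac{i\eps}{\sqrt 2}\sqrt{1+4it}$, the same logarithm expansion giving \eqref{temp5190}, and the same treatment of the $r_2$-term by the branch jump $-i\pi r_2$ plus complex conjugation, ending with $R_1=(R_{10}+\overline{R_{10}})\eps^{1/2}$. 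You also correctly identify the source of the $\eps^{1/2}$ loss (the leading error $\eps^2|1+4it|\lesssim\eps^{3/2}$ on the range $t\lesssim\eps^{-1/2}$, divided by $\eps$), which matches the paper's accounting.
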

\begin{lemma}\label{geps1eps2lim}
Let $0<t<d_\eps\pi/2$, $\eps>0$, $d_\eps\sim \eps^{-1/2}$, then there is bounded function $R(t,\eps)$ such that
\begin{align}
\frac{1}{\eps}g_{\eps_1,\eps_2}^{\sig_1,\sig_2}(ie^{it\eps^2})=\tilde{g}_{\eps_1,\eps_2}^{\sig_1,\sig_2}(t)+R(t,\eps),
\end{align}
and there is a constant $C>0$ such that
\begin{align}
|R(t,\eps)|\leq C\eps^{1/2}
\end{align}
uniformly in $t$.
\begin{proof}
From the definition of $g$ in \eqref{gfn}, 
\begin{align}\label{temp041}
g_{\eps_1,\eps_2}^{\sig_1,\sig_2}(ie^{it\eps^2})=& \ a^{\eps_2}\big(i^{-1}G(ie^{it\eps^2})\big)^{\sig_2(1-\eps_2)}\big(-i^{-1}G(-ie^{-it\eps^2})\big)^{\sig_1(2\eps_1-1)(1-\eps_2)}(-1)^{\sig_1(2\eps_1-1)(1-\eps_2)}\\&+a^{1-\eps_2}\big(i^{-1}G(ie^{it\eps^2})\big)^{-\sig_2\eps_2}\big(-i^{-1}G(-ie^{-it\eps^2})\big)^{\sig_1\eps_2(2\eps_1-1)}(-1)^{\sig_1\eps_2(2\eps_1-1)}.\nonumber\\
=& \ h_{\eps_1,\eps_2}^{\sig_1,\sig_2}(ie^{it\eps^2})+h_{\eps_1,1-\eps_2}^{\sig_1,\sig_2}(ie^{it\eps^2})\label{tem20}
\end{align}
where
\begin{align}
h_{\eps_1,\eps_2}^{\sig_1,\sig_2}(ie^{it\eps^2}):=a^{\eps_2}\big(i^{-1}G(ie^{it\eps^2})\big)^{\sig_2(1-\eps_2)}\big(-i^{-1}G(-ie^{-it\eps^2})\big)^{\sig_1(2\eps_1-1)(1-\eps_2)}(-1)^{\sig_1(2\eps_1-1)(1-\eps_2)}.\label{tmph}
\end{align}
We will focus on a Taylor expansion for $h_{\eps_1,\eps_2}^{\sig_1,\sig_2}$.
From the Taylor expansions performed in lemma \ref{'saddleptfn'}, we have a bounded function $R_{11}(t,\eps)$ on $0<t<d_\eps \pi/2$, $\eps>0$ such that
\begin{align}
i^{-1}G(ie^{it\eps^2})=1-\frac{\eps}{\sqrt{2}}\sqrt{1+4it}+R_{11}(t,\eps)\eps^{3/2},\nonumber\\
-i^{-1}G(-ie^{-it\eps^2})=1-\frac{\eps}{\sqrt{2}}\sqrt{1-4it}+\overline{R_{11}(t,\eps)}\eps^{3/2}.\nonumber
\end{align}
We insert these formulae into \eqref{tmph}  and use the following identity: for $\sigma\in \{-1,0,1\}$, $x\in\C$, $\eps$ small enough, we have 
\begin{align}
\big(1-\eps x\big)^{\sigma}=1-\sigma\eps x+\sigma\eps^2x^2R^{\sigma}(\eps,x)
\end{align}
where $R^0=R^1=0$, and $R^{-1}$ is bounded function given by Taylor's theorem.
After we use this identity we can see we have a bounded function  $R_{12}(t,\eps)$ on $0<t<d_\eps\pi/2$, $\eps>0$ small enough, such that 
\begin{align}
&h_{\eps_1,\eps_2}^{\sig_1,\sig_2}(ie^{it\eps^2})\nonumber\\&=\big(1-\eps\big)^{\eps_2}\big(1-\sig_2(1-\eps_2)\frac{\eps}{\sqrt{2}}\sqrt{1+4it}\big)\big(1-\sig_1(2\eps_1-1)(1-\eps_2)\frac{\eps}{\sqrt{2}}\sqrt{1-4it}\big)(-1)^{\sig_1(2\eps_1-1)(1-\eps_2)}\nonumber\\
& \  \ \ + R_{12}(t,\eps)\eps^{3/2}.\nonumber
\end{align}
Next observe that $(1-\eps)^{\eps_2}=1-\eps_2\eps$, $(-1)^{\sig_1(2\eps_1-1)(1-\eps_2)}=(-1)^{1-\eps_2}$ and the bound $|\sqrt{1+4it}|\leq C\eps^{-1/2}$. This gives a bounded function $R_{13}(t,\eps)$ such that
\begin{align}
&h_{\eps_1,\eps_2}^{\sig_1,\sig_2}(ie^{it\eps^2})\nonumber\\&=\big(1-\eps_2\eps-\sig_2(1-\eps_2)\frac{\eps}{\sqrt{2}}\sqrt{1+4it}-\sig_1(2\eps_1-1)(1-\eps_2)\frac{\eps}{\sqrt{2}}\sqrt{1-4it}\big)(-1)^{(1-\eps_2)}\nonumber\\
& \  \ \ + R_{13}(t,\eps)\eps^{3/2}.\nonumber
\end{align}
Next observe that $(-1)^{1-\eps_2}(1-\eps_2)=-(1-\eps_2)$, $(-1)^{1-\eps_2}\eps_2=\eps_2$ so that
\begin{align}
&h_{\eps_1,\eps_2}^{\sig_1,\sig_2}(ie^{it\eps^2})\nonumber\\&=(-1)^{1-\eps_2}-\eps_2\eps+\sig_2(1-\eps_2)\frac{\eps}{\sqrt{2}}\sqrt{1+4it}+\sig_1(2\eps_1-1)(1-\eps_2)\frac{\eps}{\sqrt{2}}\sqrt{1-4it}\nonumber\\
& \  \ \ + R_{13}(t,\eps)\eps^{3/2}.\nonumber
\end{align}
The lemma now follows by \eqref{tem20} and the identity $(-1)^{1-\eps_2}+(-1)^{-\eps_2}=0$.
\end{proof}
\end{lemma}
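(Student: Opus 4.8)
The plan is to reduce the lemma to the single-variable expansions of $i^{-1}G$ that are already implicit in Lemma \ref{'saddleptfn'}, and then to carry out a purely algebraic bookkeeping over the sign choices $\eps_1,\eps_2\in\{0,1\}$, $\sig_1,\sig_2\in\{-1,1\}$. First I would record, with $a=1-\eps$ as in Lemma \ref{'saddleptfn'}, the two expansions
\begin{align}
i^{-1}G(ie^{it\eps^2})&=1-\frac{\eps}{\sqrt2}\sqrt{1+4it}+O(\eps^{3/2}),\nonumber\\
i^{-1}G(-ie^{-it\eps^2})&=-\Big(1-\frac{\eps}{\sqrt2}\sqrt{1-4it}\Big)+O(\eps^{3/2}),\nonumber
\end{align}
the second of which carries a crucial overall minus sign (since $i^{-1}G(-i)=-1$ in the limit). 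Both follow from the Taylor computation of $\sqrt{(ie^{it\eps^2})^2+2c}$ performed inside Lemma \ref{'saddleptfn'} together with the conjugation symmetry \eqref{Gsymmetries}, and the remainders are uniform on $0<t<d_\eps\pi/2$.

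Next I would substitute these into the definition \eqref{gfn}, treating $g_{\eps_1,\eps_2}^{\sig_1,\sig_2}$ as the sum of its two summands. Because the factor $i^{-1}G(1/w)$ occurs and $1/(ie^{it\eps^2})=-ie^{-it\eps^2}$, each such occurrence produces $(-1)$ raised to the relevant exponent. Since all exponents $\sig_2(1-\eps_2)$, $\sig_1(2\eps_1-1)(1-\eps_2)$, $-\sig_2\eps_2$, $\sig_1\eps_2(2\eps_1-1)$ lie in $\{-1,0,1\}$, I can expand every factor with the elementary identity $(1-\eps x)^{\sigma}=1-\sigma\eps x+O(\eps^2x^2)$ for $\sigma\in\{-1,0,1\}$, and expand the weight prefactors exactly as $a^{\eps_2}=1-\eps_2\eps$, $a^{1-\eps_2}=1-(1-\eps_2)\eps$. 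Collecting by order in $\eps$, the constant parts of the two summands come out to $(-1)^{1-\eps_2}$ and $(-1)^{\eps_2}$, which cancel by $(-1)^{1-\eps_2}+(-1)^{\eps_2}=0$; this is exactly the mechanism forcing $g$ to be of size $\eps$, hence $g/\eps$ finite. The $O(\eps)$ parts then reassemble, using $1-2\eps_2=(-1)^{\eps_2}$ and $2\eps_1-1=-(-1)^{\eps_1}$, into precisely $\eps\,\tilde g_{\eps_1,\eps_2}^{\sig_1,\sig_2}(t)$.

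For the error estimate I would observe that every discarded term is either a product of two first-order corrections, of size $O(\eps^2(1+4it))$ or $O(\eps^2\sqrt{(1+4it)(1-4it)})$, or one of the $O(\eps^{3/2})$ remainders inherited from the $G$-expansions. On the range $0<t<d_\eps\pi/2\sim\eps^{-1/2}$ one has the crude but sufficient bound $|\sqrt{1\pm4it}|\le C\eps^{-1/2}$, so that $|(1+4it)|$ and $\sqrt{1+16t^2}$ are $O(\eps^{-1/2})$; hence all these contributions are $O(\eps^{3/2})$ uniformly in $t$, and dividing by $\eps$ gives the claimed uniform bound $|R(t,\eps)|\le C\eps^{1/2}$.

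The genuine analytic content is already supplied by Lemma \ref{'saddleptfn'}, so the only real difficulty I anticipate is the sign bookkeeping: one must verify the constant-order cancellation and the reassembly of the $O(\eps)$ terms \emph{simultaneously} across all four parities of $(\eps_1,\eps_2)$ and both signs of $(\sig_1,\sig_2)$, rather than laboriously case by case. The cleanest route, which I would follow, is to keep $\eps_1,\eps_2,\sig_1,\sig_2$ formal throughout and repeatedly invoke the boolean identities $(-1)^{1-\eps_2}(1-\eps_2)=-(1-\eps_2)$, $(-1)^{1-\eps_2}\eps_2=\eps_2$ and their $\eps_2\mapsto1-\eps_2$ analogues, so that the cancellation is transparent without enumerating the eight configurations separately.
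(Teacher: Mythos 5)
Your proposal is correct and follows essentially the same route as the paper's proof: the expansions of $i^{-1}G(ie^{it\eps^2})$ and $-i^{-1}G(-ie^{-it\eps^2})$ inherited from Lemma \ref{'saddleptfn'}, the identity $(1-\eps x)^{\sigma}=1-\sigma\eps x+O(\eps^2x^2)$ for $\sigma\in\{-1,0,1\}$, the exact expansion of $a^{\eps_2}$, the cancellation $(-1)^{1-\eps_2}+(-1)^{\eps_2}=0$ at constant order, and the uniform bound $|\sqrt{1\pm 4it}|\leq C\eps^{-1/2}$ on $0<t<d_\eps\pi/2$ for the $O(\eps^{1/2})$ error. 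The only difference is cosmetic but in your favour: you expand the two summands of \eqref{gfn} directly, keeping the exponent $-\sig_2\eps_2$ of the second summand as written, whereas the paper's intermediate rewriting \eqref{tem20} as $h_{\eps_1,\eps_2}+h_{\eps_1,1-\eps_2}$ tacitly flips that exponent's sign; your direct bookkeeping reproduces $\tilde g_{\eps_1,\eps_2}^{\sig_1,\sig_2}$ exactly, coefficient by coefficient.
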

\begin{proposition}\label{prop105}
We have the limit
\begin{align}
&\frac{(-1)^{r_2/\eps}}{\eps}\int_0^{\pi/2}\frac{dt}{|e^{2it}-2c|}\exp(\frac{1}{\eps}f_{r_1,r_2}(t)) g_{\eps_1,\eps_2}^{\sig_1,\sig_2}(ie^{it})\label{temp4rfz}\\ & \ \rarrow 2\int_0^\infty \frac{dt}{\sqrt{1+16t^2}}\exp\big(-\frac{1}{\sqrt{2}}\big(r_1\sqrt{1+4it}+r_2\sqrt{1-4it}\big)\big)\tilde{g}_{\eps_1,\eps_2}^{\sig_1,\sig_2}(t)\nonumber
\end{align}
as $\eps\rarrow 0$ uniformly for $(r_1,r_2)$ in a compact subset of  $\R_{\geq 0}^2\setminus\{0\}$.
\begin{proof}
Make the change of variables $t\rarrow \eps^2t$ and write
\begin{align}
(-1)^{r_2/\eps}\eps\int_0^{\pi/(2\eps^2)}\frac{dt}{|e^{2it\eps^2}-2c|}\exp(\frac{1}{\eps}f_{r_1,r_2}(t\eps^2)) g_{\eps_1,\eps_2}^{\sig_1,\sig_2}(ie^{it\eps^2}).\label{temp23eds}
\end{align}
Compute $|e^{2i \eps^2t}-2c|^2=(1-2c)^2+8c\sin^2\eps^2t=\eps^4/4+4\sin^2\eps^2t+\eps^5R_5(\eps)$ for a bounded function $R_5(\eps,t)$ on $\eps>0,t>0$. Let $d_\eps\sim \eps^{-1/2}$, for $0<t\leq d_\eps\pi/2$, $\frac{\sin^2\eps^2t}{\eps^4}=t^2+\eps^4t^4R_6(\eps^2t)=t^2+\eps^2R_7(\eps,t)$ where $R_6$ is a bounded function that comes from Taylor's theorem and $R_7$ is a bounded function since $\eps^4t^4\leq \eps^2\pi/2$. Another application of Taylor's theorem gives a bounded function $R_8(t,\eps)$ defined on $0<\eps<1,0<t\leq d_\eps\pi/2$ such that
\begin{align}
\frac{\eps^2}{|e^{2i\eps^2t}-2c|}=\frac{2}{1+16t^2}+\eps R_8(\eps,t).\label{temp4rzx}
\end{align}  
One can also show that there is a $C$ such that $|e^{2i\eps^2t}-2c|^{-1}\leq C\eps^{-2}$ for all $t>0$ and $0<\eps<1$. 
The main contribution to the asymptotics of \eqref{temp23eds} comes from the section of the integral over $(0,d_\eps \pi/2)$: 
\begin{align}
(-1)^{r_2/\eps}\eps\int_0^{d_\eps \pi/2}\frac{dt}{|e^{2it\eps^2}-2c|}\exp(\frac{1}{\eps}f_{r_1,r_2}(t\eps^2)) g_{\eps_1,\eps_2}^{\sig_1,\sig_2}(ie^{it\eps^2}).\label{temp23edf}
\end{align}
We now make a collection of successive approximations to the integral in \eqref{temp23edf} and show that the errors tend to zero as $\eps\rarrow 0$.
The first approximation is
\begin{align}
\eps\int_0^{d_\eps\pi/2}\frac{dt}{|e^{2it\eps^2}-2c|}\exp(-\frac{1}{\sqrt{2}}(r_1\sqrt{1+4it}+r_2\sqrt{1-4it})) g_{\eps_1,\eps_2}^{\sig_1,\sig_2}(ie^{it\eps^2}).\label{temp543df}
\end{align}
The modulus of the difference between \eqref{temp23edf} and \eqref{temp543df} (i.e. the error) is bounded above by
\begin{align}
\eps \int_0^{d_\eps\pi/2}\frac{dt}{|e^{2it\eps^2}-2c|}|(-1)^{r_2/\eps}e^{\frac{1}{\eps}f_{r_1,r_2}(t\eps^2)}-e^{-\frac{1}{\sqrt{2}}(r_1\sqrt{1+4it}-r_2\sqrt{1-4it})}||g_{\eps_1,\eps_2}^{\sig_1,\sig_2}(ie^{it\eps^2})|.\label{temp24dxz}
\end{align}
By Lemma \ref{'saddleptfn'} we have
\begin{align}
&|(-1)^{r_2/\eps}e^{\frac{1}{\eps}f_{r_1,r_2}(t\eps^2)}-e^{-\frac{1}{\sqrt{2}}(r_1\sqrt{1+4it}-r_2\sqrt{1-4it})}|\\
&\quad\leq e^{\mcR[\frac{1}{\sqrt{2}}(r_1\sqrt{1+4it}+r_2\sqrt{1-4it})]}|R_1(t,\eps)|e^{|R_1(t,\eps)}|\\
&\quad\leq C_1\eps^{1/2}e^{C_1\eps^{1/2}-\mcR[\frac{1}{\sqrt{2}}(r_1\sqrt{1+4it}+r_2\sqrt{1-4it})]}\nonumber
\end{align}
which, together with Lemma \ref{geps1eps2lim}, gives the upper bound on \eqref{temp24dxz}
\begin{align}
\eps^2\int_0^{d_\eps\pi/2}\frac{dt}{|e^{2it\eps^2}-2c|}C_1\eps^{1/2}e^{C_1\eps^{1/2}-\mcR[\frac{1}{\sqrt{2}}(r_1\sqrt{1+4it}+r_2\sqrt{1-4it})]}(|g_{\eps_1,\eps_2}^{\sigma_1,\sigma_2}(t)|+|C_2\eps^{1/2}|)\nonumber\\
\leq C\eps^{1/2}C_1\int_0^{d_\eps\pi/2}dte^{C_1\eps^{1/2}-\mcR[\frac{1}{\sqrt{2}}(r_1\sqrt{1+4it}+r_2\sqrt{1-4it})]}(|g_{\eps_1,\eps_2}^{\sigma_1,\sigma_2}(t)|+|C_2\eps^{1/2}|).\label{temp23edgc}
\end{align}
Now we note that $\mcR[\sqrt{1+4it}]=\mcR[\sqrt{1-4it}]$ and that there is an $r^*>0$ such that $r_1+r_2\geq r^*$.
From this we see \eqref{temp23edgc} is bounded above by
\begin{align}
C\eps^{1/2}C_1e^{C_1\eps^{1/2}}\int_0^{\infty}dte^{-\mcR[\frac{1}{\sqrt{2}}r^*\sqrt{1+4it}]}(|g_{\eps_1,\eps_2}^{\sigma_1,\sigma_2}(t)|+|C_2\eps^{1/2}|)
\end{align}
which clearly tends to zero as $\eps\rarrow 0$ as one sees that the integrand is integrable.
We now approximate \eqref{temp543df} by 
\begin{align}
\eps\int_0^{d_\eps\pi/2}\frac{dt}{|e^{2it\eps^2}-2c|}\exp(-\frac{1}{\sqrt{2}}(r_1\sqrt{1+4it}+r_2\sqrt{1-4it})) \eps\tilde{g}_{\eps_1,\eps_2}^{\sig_1,\sig_2}(t)\label{temp5rd}
\end{align}
Indeed, by lemma \ref{geps1eps2lim}, the modulus of the difference (the error) between \eqref{temp5rd} and \eqref{temp543df} is bounded above by
\begin{align}
&C_2\eps^{5/2}\int_0^{d_\eps\pi/2}\frac{dt}{|e^{2it\eps^2}-2c|}\exp(-\frac{r^*}{\sqrt{2}}\mcR[\sqrt{1+4it}])\\&\leq CC_2\eps^{1/2}\int_0^{\infty}dt\exp(-\frac{r^*}{\sqrt{2}}\mcR[\sqrt{1+4it}])
\end{align}
which tends to zero as $\eps\rarrow 0$. Finally we approximate \eqref{temp5rd} by
\begin{align}
2\int_0^\infty \frac{dt}{\sqrt{1+16t^2}}\exp\big(-\frac{1}{\sqrt{2}}\big(r_1\sqrt{1+4it}+r_2\sqrt{1-4it}\big)\big)\tilde{g}_{\eps_1,\eps_2}^{\sig_1,\sig_2}(t)
\end{align}
which by \eqref{temp4rzx} has an error of
\begin{align}
C_3\eps \int_0^{\infty}dt\exp(-\frac{r^*}{\sqrt{2}}\mcR[\sqrt{1+4it}])\rarrow 0.
\end{align}
All that remains is to show the section of the integral in \eqref{temp23eds} over $(d_\eps\pi/2,\pi/(2\eps^2))$ tends to zero. It is straightforward to see  $g_{\eps_1,\eps_2}^{\sig_1,\sig_2}(ie^{it\eps^2})$ is bounded uniformly for all $0<t<\pi/(2\eps^2)$, $0<\eps<1$. Now
\begin{align}
&\big|\eps\int_{d_\eps\pi/2}^{\pi/(2\eps^2)}\frac{dt}{|e^{2it\eps^2}-2c|}\exp(\frac{1}{\eps}f_{r_1,r_2}(t\eps^2)) g_{\eps_1,\eps_2}^{\sig_1,\sig_2}(ie^{it\eps^2})\big|\label{temp310}\\
& \ \leq C\eps^{-1}\exp\big(\max_{t\in (d_\eps\pi/2,\pi/(2\eps^2))}\mcR[\frac{1}{\eps}f_{r_1,r_2}(t\eps^2)]\big)(\pi/(2\eps^2)-d_\eps\pi/2).\nonumber
\end{align}
We know from Lemma \ref{descent} that $\mcR[\frac{1}{\eps}f_{r_1,r_2}(t\eps^2)]$ is decreasing in $t>0$, hence by Lemma \ref{'saddleptfn'},
\begin{align}
&\max_{t\in (d_\eps\pi/2,\pi/(2\eps^2))}\mcR[\frac{1}{\eps}f_{r_1,r_2}(t\eps^2)]\leq \mcR[\frac{1}{\eps}f_{r_1,r_2}(d_\eps\pi\eps^2/2)]\\& \ \leq -\mcR[\frac{1}{\sqrt{2}}\big(r_1\sqrt{1+4id_\eps\pi/2}+r_2\sqrt{1-4id_\eps\pi/2}\big)+R_1(d_\eps\pi/2,\eps)]\nonumber\\
& \ \leq -C'\eps^{-1/4}\nonumber
\end{align} for some $C'>0$. Hence the bound in \eqref{temp310} decays exponentially as $\eps\rarrow 0$. From our proof we see that the limit \eqref{temp4rfz} holds with an error term of order $\eps^{1/2}$.
\end{proof}
\end{proposition}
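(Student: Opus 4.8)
The plan is to evaluate the integral by a Laplace-type (saddle-point) argument built on the three preparatory lemmas. Since Lemma \ref{descent} says that $\mcR[f_{r_1,r_2}]$ is strictly decreasing on $(0,\pi/2)$, the modulus of the integrand in \eqref{preasymp12} is largest at the left endpoint $t=0$, so I would first rescale $t\mapsto\eps^2 t$ to magnify a neighbourhood of that endpoint. This turns the left-hand side of \eqref{temp4rfz} into $(-1)^{r_2/\eps}\eps\int_0^{\pi/(2\eps^2)}\frac{dt}{|e^{2it\eps^2}-2c|}\exp(\tfrac1\eps f_{r_1,r_2}(t\eps^2))\,g_{\eps_1,\eps_2}^{\sig_1,\sig_2}(ie^{it\eps^2})$. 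I would then cut the domain at $d_\eps\pi/2$ with $d_\eps\sim\eps^{-1/2}$, treat the window $(0,d_\eps\pi/2)$ as the main term, and show the tail $(d_\eps\pi/2,\pi/(2\eps^2))$ is negligible.

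On the main window I would carry out a chain of replacements, each with a uniformly vanishing error. First, insert the expansion of Lemma \ref{'saddleptfn'}: the oscillatory factor $e^{-i\pi r_2/\eps}$ it produces is exactly cancelled by the prefactor $(-1)^{r_2/\eps}$ (note $r_2/\eps=|\alpha+\beta|/(2\eps)\in\Z$), leaving $\exp(-\tfrac1{\sqrt2}(r_1\sqrt{1+4it}+r_2\sqrt{1-4it}))$ up to a factor $e^{R_1}$ with $|R_1|<C\eps^{1/2}$. Second, replace $\tfrac1\eps g_{\eps_1,\eps_2}^{\sig_1,\sig_2}(ie^{it\eps^2})$ by $\tilde g_{\eps_1,\eps_2}^{\sig_1,\sig_2}(t)$ via Lemma \ref{geps1eps2lim}, again to error $O(\eps^{1/2})$. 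Third, replace the measure: from $|e^{2it\eps^2}-2c|^2=(1-2c)^2+8c\sin^2(\eps^2 t)$ together with $1-2c\sim\eps^2/2$ and $c\rarrow1/2$ one gets $\eps^2/|e^{2it\eps^2}-2c|\rarrow 2/\sqrt{1+16t^2}$ with error $O(\eps)$, which supplies both the factor $2$ and the $\sqrt{1+16t^2}$ in the denominator of the limit. Finally, extend the upper limit to $\infty$. The uniform control at every stage rests on the identity $\mcR[\sqrt{1+4it}]=\mcR[\sqrt{1-4it}]\sim\sqrt{2t}$ and on compactness of the parameter set away from the origin, which yields $r^*>0$ with $r_1+r_2\geq r^*$; each error integrand is then dominated by the integrable function $\exp(-\tfrac{r^*}{\sqrt2}\mcR[\sqrt{1+4it}])$, decaying like $e^{-c\sqrt t}$.

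For the tail I would use Lemma \ref{descent} to bound $\mcR[\tfrac1\eps f_{r_1,r_2}(t\eps^2)]$ throughout $(d_\eps\pi/2,\pi/(2\eps^2))$ by its value at the cutoff $t=d_\eps\pi/2$, and then evaluate that value with Lemma \ref{'saddleptfn'}. Since $\mcR[\sqrt{1+4id_\eps\pi/2}]\sim\sqrt{2\cdot d_\eps\pi/2}\sim\eps^{-1/4}$, this gives a bound $\leq-C'\eps^{-1/4}$. Combining the $\eps$ prefactor, the crude estimate $|e^{2it\eps^2}-2c|^{-1}\leq C\eps^{-2}$, and the interval length $O(\eps^{-2})$ produces only a polynomial factor $O(\eps^{-3})$, which is overwhelmed by $\exp(-C'\eps^{-1/4})$; hence the tail is exponentially small.

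The main obstacle is the simultaneous calibration of scales around the cutoff $d_\eps\sim\eps^{-1/2}$. On one hand the window $0<t\leq d_\eps\pi/2$ must stay inside the regime where the Taylor remainders in Lemmas \ref{'saddleptfn'} and \ref{geps1eps2lim} are still $O(\eps^{1/2})$ (this is precisely why those lemmas are phrased on $0<t\leq d_\eps\pi/2$); on the other hand the decay of the saddle-point function accumulated by the cutoff, of order $\eps^{-1/4}$, must be strong enough to kill the algebraically large prefactors in the tail. Making both hold at once, while threading the $r^*$-uniformity through each successive approximation so that convergence is uniform on compact subsets of $\R_{\geq0}^2\setminus\{0\}$ and the overall rate comes out as $O(\eps^{1/2})$, is the delicate part; the individual expansions are routine once the lemmas are granted.
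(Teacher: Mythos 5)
Your proposal is correct and takes essentially the same route as the paper's own proof: the same rescaling $t\mapsto\eps^2t$, the same cutoff $d_\eps\sim\eps^{-1/2}$, the same chain of successive replacements driven by Lemmas \ref{'saddleptfn'} and \ref{geps1eps2lim} with errors dominated by the integrable function $\exp\big(-\tfrac{r^*}{\sqrt{2}}\mcR[\sqrt{1+4it}]\big)$, and the same monotonicity-plus-cutoff argument (Lemma \ref{descent}, decay of order $\exp(-C'\eps^{-1/4})$ beating the polynomial prefactor) for the tail. As a minor point in your favour, your statement of the measure asymptotics, $\eps^2/|e^{2it\eps^2}-2c|\rarrow 2/\sqrt{1+16t^2}$, is the correct one, whereas the paper's intermediate display \eqref{temp4rzx} misprints the denominator as $1+16t^2$.
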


We are now ready to prove theorem \ref{invkastasymp}.
\begin{proof}[Proof of Theorem \ref{invkastasymp}]\label{proofinvKastasymp}
Assume first that $\lambda=1$. We recall the definitions of $\sig_1,\sig_2$ in \eqref{sig1sig2} and set $r_1=|\alpha-\beta|/2, r_2=|\alpha+\beta|/2$. We use the formula  for $\K_a^{-1}$ given by \eqref{preasymp12} and proposition \ref{prop105} to get 
\begin{align}
\frac{(-1)^{(\alpha+\beta)\eps^{-1}/2}}{\eps}\K_a^{-1}(x(j),y(i))= \frac{i^{1+h(\eps_1,\eps_2)}}{\pi}\mcR\int_0^\infty \frac{dt}{\sqrt{1+16t^2}}\exp\big(-\frac{1}{\sqrt{2}}\big(r_1\sqrt{1+4it}+r_2\sqrt{1-4it}\big)\big)\label{temp5ko9}\\
\Big(1+\frac{(-1)^{\eps_1}\sig_1}{\sqrt{2}}\sqrt{1-4it}-\frac{(-1)^{\eps_2}\sig_2}{\sqrt{2}}\sqrt{1+4it}\Big)+o(1)\nonumber
\end{align}
as $\eps\rarrow 0$, where $h(\eps_1,\eps_2)$ is defined in \eqref{HHH}. From Proposition \ref{prop64} in the Appendix below one can see the identity
\begin{align}
&\mcR\int_0^\infty \frac{dt}{\sqrt{1+16t^2}}\exp\big(-\frac{1}{\sqrt{2}}\big(r_1\sqrt{1+4it}+r_2\sqrt{1-4it}\big)\big)\big(\frac{(-1)^{\eps_1}\sig_1}{\sqrt{2}}\sqrt{1-4it}-\frac{(-1)^{\eps_2}\sig_2}{\sqrt{2}}\sqrt{1+4it}\big)\\ & \ =\frac{(-1)^{\eps_1}\sig_1r_2-(-1)^{\eps_2}\sig_2r_1}{2\sqrt{r_1^2+r_2^2}}K_1( \ \sqrt{r_1^2+r_2^2} \ ).\nonumber
\end{align}
When $\alpha\neq \beta$ and $\alpha\neq -\beta$, \eqref{sig1sig2} gives $\sig_1=\text{sign}(\alpha+\beta)$ and $\sig_2=\text{sign}(\alpha-\beta)$, in which case one can see
\begin{align}
(-1)^{\eps_1}\sig_1r_2-(-1)^{\eps_2}\sig_2r_1&= (-1)^{\eps_1}(\alpha+\beta)/2-(-1)^{\eps_2}(\alpha-\beta)/2\label{temp0563}\\
&=((-1)^{\eps_2}-(-1)^{\eps_1})\frac{\alpha}{2}-((-1)^{\eps_2}+(-1)^{\eps_1})\frac{\beta}{2}\nonumber
\\
&=(-1)^{\eps_2}(\ind_{\eps_1\neq \eps_2}\alpha-\ind_{\eps_1=\eps_2}\beta)\nonumber.
\end{align}
One can see that for the cases when either $\alpha=\beta$ or $\alpha=-\beta$, the identity \eqref{temp0563} still holds. We also have $\sqrt{r_1^2+r_2^2}=\sqrt{\alpha^2+\beta^2}/\sqrt{2}$. Hence proposition \ref{prop64}, \eqref{temp0563} and the limit \eqref{temp5ko9} give
\begin{align}
\label{tempr4ea0}&\frac{(-1)^{(\alpha+\beta)\eps^{-1}/2}}{\eps}\K_a^{-1}(x(j),y(i))=\frac{i^{1+h(\eps_1,\eps_2)}}{2\pi}\Big(K_0(\sqrt{\alpha^2+\beta^2}/\sqrt{2})\\&-(-1)^{\eps_2}(\ind_{\eps_1\neq \eps_2}\sqrt{2}\alpha-\ind_{\eps_1=\eps_2}\sqrt{2}\beta)\frac{K_1(\sqrt{\alpha^2+\beta^2}/\sqrt{2})}{\sqrt{\alpha^2+\beta^2}}\Big)+o(1)\nonumber
\end{align}
as $\eps>0$ tends to zero. Now \eqref{tempr4ea0} gives the four limits in the theorem statement for the case $\lambda=1$. If we instead assume $\lambda>0$, then set $a=1-\lambda \eps=1-\eps'$ where $\eps'=\lambda \eps$. Rescaling the coordinates in \eqref{temp039112} by substituting $\eps=\eps'/\lambda$, one can then use the limits in the case $\lambda=1$ to prove the case $\lambda >0$. 
\end{proof}

\appendix
\section{Identities for Bessel functions of the second kind}
In this Appendix we prove some identities for Bessel functions of the second kind, $K_\nu(z)$, $\nu=0,1$. These are used to rewrite the limiting integral expression for $\K_a^{-1}$ in terms of $K_\nu$. We recall the following well-known integral representation of $K_\nu$, \cite{DLMF},
\begin{align}
K_\nu(z)=\frac{\pi^{1/2}(\frac{1}{2}z)^v}{\Gamma(\nu+1/2)}\int_1^\infty e^{-zt}(t^2-1)^{\nu-1/2}dt, \quad z>0, \ \nu=0,1,2... \ .
\end{align}
We first give alternative integral representations of $K_0$ and $K_1$.
\begin{lemma}\label{lemmaintreps}
Let $(r_1,r_2)\in \R^2_{\geq 0}\setminus\{0\}$, then
\begin{align}
K_0\big( \ \sqrt{r_1^2+r_2^2} \ \big)&=\int_1^\infty\frac{du}{\sqrt{u^2-1}}e^{-\frac{r_1+r_2}{\sqrt{2}}u}\cos\big(\frac{r_1-r_2}{\sqrt{2}}\sqrt{u^2-1}\big),\label{temp5290}\\
\frac{r_1+r_2}{\sqrt{r_1^2+r_2^2}}K_1\big( \ \sqrt{r_1^2+r_2^2} \ \big)&=\int_1^\infty\frac{\sqrt{2}udu}{\sqrt{u^2-1}}e^{-\frac{r_1+r_2}{\sqrt{2}}u}\cos\big(\frac{r_1-r_2}{\sqrt{2}}\sqrt{u^2-1}\big),\label{temp9025}\\
\frac{r_1-r_2}{\sqrt{r_1^2+r_2^2}}K_1\big( \ \sqrt{r_1^2+r_2^2} \ \big)&=\int_1^\infty \sqrt{2}due^{-\frac{r_1+r_2}{\sqrt{2}}u}\sin\big(\frac{r_1-r_2}{\sqrt{2}}\sqrt{u^2-1}\big).\label{temp59022}
\end{align}
\begin{proof}
 Set $z=\sqrt{r_1^2+r_2^2}$ so that $r_1=z\cos\theta$, $r_2=z\sin\theta$, $\theta\in[0,\pi/2]$. We have $\frac{r_1+r_2}{\sqrt{2}}=z\cos(\theta-\pi/4)$ and $\frac{r_1-r_2}{\sqrt{2}}=z\cos(\theta+\pi/4)$. The right-hand side of \eqref{temp5290} is
\begin{align}
 %\mcR\int_1^\infty\frac{du}{\sqrt{u^2-1}}e^{-\frac{r_1+r_2}{\sqrt{2}}u+i\frac{r_1-r_2}{\sqrt{2}}\sqrt{u^2-1}}=
 \mcR\int_1^\infty\frac{du}{\sqrt{u^2-1}}e^{-z(\cos(\theta-\pi/4)u-i\cos(\theta+\pi/4)\sqrt{u^2-1})}.\label{temp978}
\end{align}
Let $y=\cos(\theta-\pi/4)u-i\cos(\theta+\pi/4)\sqrt{u^2-1}$, one can show that 
\begin{align}
u&=\sqrt{1-y^2}\cos(\theta+\pi/4)+y\cos(\theta-\pi/4),\\
\frac{du}{dy}&=-\frac{y}{\sqrt{1-y^2}}\cos(\theta+\pi/4)+\cos(\theta-\pi/4).\label{temp0609}
\end{align}
Let $\gamma_\theta\subset \C$ be the image of the interval $[1,\infty)$ under the map $u\mapsto y$. One can see that $\gamma_\theta$ is a curve starting at the point $\cos(\theta-\pi/4)\in\C$ and travelling to infinity with increasing real part. Substitution in \eqref{temp978} gives \eqref{temp978} equal to
\begin{align}
\mcR\int_{\gamma_\theta}dye^{-zy}\frac{du}{dy}\frac{1}{\sqrt{u^2-1}}.\label{temp920t5}
\end{align}
We have the friendly algebraic fact 
\begin{align}
\Big(\frac{du}{dy}\frac{1}{\sqrt{u^2-1}}\Big)^2=\frac{1}{y^2-1},
\end{align}
and we take the square root of this equation and insert it into \eqref{temp920t5}. We then see we have an analytic integrand and deform $\gamma_\theta$ to the straight line $(\cos(\theta-\pi/4),\infty)\subset \C$ to obtain
\begin{align}
&\mcR\int_{(\cos(\theta-\pi/4),\infty)}dye^{-zy}\frac{1}{\sqrt{y^2-1}}\label{tepm32}\\
& = \int_{1}^\infty dye^{-zy}\frac{1}{\sqrt{y^2-1}}\nonumber\\
&  =K_0(z).\nonumber
\end{align}
The first equality in \eqref{tepm32} follows since the section of the integral over $(\cos(\theta-\pi/4),1)$ is purely imaginary.

Similarly, consider the right-hand side of \eqref{temp9025} as
\begin{align}
 \mcR\int_1^\infty\frac{\sqrt{2}udu}{\sqrt{u^2-1}}e^{-z(\cos(\theta-\pi/4)u-i\cos(\theta+\pi/4)\sqrt{u^2-1})}.\label{temp1240}
\end{align}
Making the substitution $u\rarrow y$ in \eqref{temp1240}, similarly to above one obtains
\begin{align}
&\sqrt{2}\mcR\int_{\gamma_\theta}dy\frac{y\cos(\theta-\pi/4)+\sqrt{1-y^2}\cos(\theta+\pi/4)}{\sqrt{y^2-1}}e^{-zy}\label{temp420999}\\
&  \ =\sqrt{2}\mcR\int_{\cos(\theta-\pi/4)}^\infty dy\frac{y\cos(\theta-\pi/4)+\sqrt{1-y^2}\cos(\theta+\pi/4)}{\sqrt{y^2-1}}e^{-zy}\nonumber\\
& \ = \sqrt{2}\cos(\theta-\pi/4)\int_1^\infty dy\frac{y}{\sqrt{y^2-1}}e^{-zy}\nonumber\\
& \ = \frac{r_1+r_2}{z}\int_1^\infty dy \ z \ e^{-zy}\sqrt{y^2-1} =\frac{r_1+r_2}{z}K_1(z).\nonumber
\end{align}
In the second last equality we used integration by parts.

Finally, consider  the right-hand side of \eqref{temp59022} as
\begin{align}
\sqrt{2}\mcR \int_1^\infty idue^{-z(\cos(\theta-\pi/4)u-i\cos(\theta+\pi/4)\sqrt{u^2-1})}.
\end{align}
Use the substitution $u\rarrow y$, \eqref{temp0609} and then a similar argument to \eqref{temp420999} shows that \eqref{temp59022} holds.
\end{proof}
\end{lemma}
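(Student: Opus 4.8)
The plan is to reduce all three identities to the standard integral representations $K_0(z)=\int_1^\infty (t^2-1)^{-1/2}e^{-zt}\,dt$ and $K_1(z)=z\int_1^\infty (t^2-1)^{1/2}e^{-zt}\,dt$ (the latter obtained from the displayed representation with $\nu=1$ and $\Gamma(3/2)=\sqrt\pi/2$), by means of a single well-chosen complex change of variables. First I would introduce the angular parametrisation $z=\sqrt{r_1^2+r_2^2}$, $r_1=z\cos\theta$, $r_2=z\sin\theta$ with $\theta\in[0,\pi/2]$, and record the identities $\frac{r_1+r_2}{\sqrt 2}=z\cos(\theta-\pi/4)$ and $\frac{r_1-r_2}{\sqrt 2}=z\cos(\theta+\pi/4)$. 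Writing $\cos\phi=\mcR[e^{i\phi}]$ in the right-hand side of \eqref{temp5290}, the integral becomes $\mcR\int_1^\infty (u^2-1)^{-1/2}e^{-zy(u)}\,du$, where $y(u)=\cos(\theta-\pi/4)u-i\cos(\theta+\pi/4)\sqrt{u^2-1}$.

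The central device is the substitution $u\mapsto y=y(u)$, designed to collapse the combined real-decay and oscillatory exponent into the clean factor $e^{-zy}$. I would verify the inverse relation $u=\sqrt{1-y^2}\cos(\theta+\pi/4)+y\cos(\theta-\pi/4)$ together with the corresponding $du/dy$, and then check the key algebraic identity $\big(\tfrac{du}{dy}\,(u^2-1)^{-1/2}\big)^2=(y^2-1)^{-1}$. Taking the (correctly chosen branch of the) square root transports the $K_0$-type integrand $(u^2-1)^{-1/2}\,du$ exactly onto $(y^2-1)^{-1/2}\,dy$, so the integral becomes $\mcR\int_{\gamma_\theta}(y^2-1)^{-1/2}e^{-zy}\,dy$ over the image contour $\gamma_\theta$. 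A short analysis shows $\gamma_\theta$ starts at $\cos(\theta-\pi/4)$ and runs to infinity with increasing real part; deforming it (by analyticity of the integrand away from the branch points $y=\pm1$) to the real ray $[\cos(\theta-\pi/4),\infty)$ and noting that the piece over $(\cos(\theta-\pi/4),1)$ is purely imaginary and hence annihilated by $\mcR$, I recover $\int_1^\infty(y^2-1)^{-1/2}e^{-zy}\,dy=K_0(z)$, which is \eqref{temp5290}.

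For \eqref{temp9025} and \eqref{temp59022} I would apply the same substitution but to the integrands $\sqrt 2\,u\,(u^2-1)^{-1/2}$ and (after writing $\sin\phi$ as the imaginary part of $e^{i\phi}$) $\sqrt 2\,i$ respectively. In both cases, substituting $u=u(y)$ splits the surviving integrand into a piece proportional to $y$ and a piece proportional to $\sqrt{1-y^2}$; after deformation to the real ray and application of $\mcR$, exactly one of these two pieces survives. For \eqref{temp9025} the surviving term is $\sqrt 2\cos(\theta-\pi/4)\int_1^\infty y\,(y^2-1)^{-1/2}e^{-zy}\,dy$, and integration by parts together with $\sqrt 2\cos(\theta-\pi/4)=(r_1+r_2)/z$ turns this into $\frac{r_1+r_2}{z}K_1(z)$; for \eqref{temp59022} the surviving term involves $\sqrt 2\cos(\theta+\pi/4)=(r_1-r_2)/z$ and yields $\frac{r_1-r_2}{z}K_1(z)$.

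I expect the main obstacle to be the rigorous justification of the contour manipulation: establishing the precise geometry of $\gamma_\theta$ (starting point $\cos(\theta-\pi/4)$, real part increasing monotonically to $+\infty$), checking that the region swept between $\gamma_\theta$ and the real ray avoids the branch cuts of $(y^2-1)^{-1/2}$, controlling the contribution of the arc at infinity, and --- most delicately --- choosing the branch of the square root consistently when passing from the squared identity $\big(\tfrac{du}{dy}(u^2-1)^{-1/2}\big)^2=(y^2-1)^{-1}$ to the unsquared integrand, so that no spurious sign is introduced along the contour.
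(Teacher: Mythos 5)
Your proposal is correct and follows essentially the same route as the paper's own proof: the same angular parametrisation $r_1=z\cos\theta$, $r_2=z\sin\theta$, the same substitution $y=\cos(\theta-\pi/4)u-i\cos(\theta+\pi/4)\sqrt{u^2-1}$ with the key identity $\big(\tfrac{du}{dy}(u^2-1)^{-1/2}\big)^2=(y^2-1)^{-1}$, the same deformation of $\gamma_\theta$ to the real ray with the purely imaginary segment killed by $\mathcal{R}$, and the same splitting into $y$- and $\sqrt{1-y^2}$-pieces plus integration by parts for the $K_1$ identities. The delicate points you flag (branch choice and contour geometry) are exactly the ones the paper handles, albeit tersely.
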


\begin{proposition}\label{prop64}
Let $(r_1,r_2)\in \R^2_{\geq 0}\setminus\{0\}$. The following identities hold
\begin{align}
&2\mcR\int_0^\infty \frac{dt}{\sqrt{1+16t^2}}\exp\big(-\frac{1}{\sqrt{2}}\big(r_1\sqrt{1+4it}+r_2\sqrt{1-4it}\big)\big)\label{temp302}\\& \ =K_0( \ \sqrt{r_1^2+r_2^2} \ ),\nonumber\\
&2\mcR\int_0^\infty \frac{dt}{\sqrt{1+16t^2}}\exp\big(-\frac{1}{\sqrt{2}}\big(r_1\sqrt{1+4it}+r_2\sqrt{1-4it}\big)\big)\big(\frac{1}{\sqrt{2}}\sqrt{1+4it}+\frac{1}{\sqrt{2}}\sqrt{1-4it}\big)\label{temp0596}\\& \ =\frac{r_1+r_2}{\sqrt{r_1^2+r_2^2}}K_1( \ \sqrt{r_1^2+r_2^2} \ ),\nonumber\\
&2\mcR\int_0^\infty \frac{dt}{\sqrt{1+16t^2}}\exp\big(-\frac{1}{\sqrt{2}}\big(r_1\sqrt{1+4it}+r_2\sqrt{1-4it}\big)\big)\big(\frac{1}{\sqrt{2}}\sqrt{1+4it}-\frac{1}{\sqrt{2}}\sqrt{1-4it}\big)\label{temp05961}\\ & \ =\frac{r_1-r_2}{\sqrt{r_1^2+r_2^2}}K_1( \ \sqrt{r_1^2+r_2^2} \ ).\nonumber
\end{align}
\begin{proof}
Changing variables $t\rarrow t/4$ in \eqref{temp302}, one can calculate
\begin{align}
\mcR\sqrt{1+it}=(\frac{1}{2}(\sqrt{1+t^2}+1))^{1/2}=\mcR\sqrt{1-it},\\
\mcI\sqrt{1+it}=(\frac{1}{2}(\sqrt{1+t^2}-1))^{1/2}=-\mcI\sqrt{1-it}
\end{align}
for $t>0$. The exponent in the integrand in \eqref{temp302} is then 
\begin{align}
&-\frac{1}{\sqrt{2}}\big(r_1\sqrt{1+it}+r_2\sqrt{1-it}\big)\\
& \ =-\frac{1}{\sqrt{2}}(r_1+r_2)(\frac{1}{2}(\sqrt{1+t^2}+1))^{1/2}-i\frac{1}{\sqrt{2}}(r_1-r_2)(\frac{1}{2}(\sqrt{1+t^2}-1))^{1/2}.\nonumber
\end{align}
Let $u=(\frac{1}{2}(\sqrt{1+t^2}+1))^{1/2}$, one can calculate
\begin{align}
&t=2u\sqrt{u^2-1}, && \frac{dt}{du}=2\frac{2u^2-1}{\sqrt{u^2-1}},
\\  &(\frac{1}{2}(\sqrt{1+t^2}-1))^{1/2}=\sqrt{u^2-1},&& \sqrt{t^2+1}=2u^2-1.\nonumber
\end{align}
We make the substitution $t\rarrow u$ and get \eqref{temp302} equal to
\begin{align}
&\mcR\int_1^\infty \frac{du}{\sqrt{u^2-1}}\exp\big(-\frac{1}{\sqrt{2}}(r_1+r_2)u-i\frac{1}{\sqrt{2}}(r_1-r_2)\sqrt{u^2-1}\big)\\& \ =K_0( \ \sqrt{r_1^2+r_2^2} \ )\nonumber
\end{align}
by \eqref{temp5290}.

Rescale $t\rarrow t/4$ in both \eqref{temp0596},\eqref{temp05961}, then observe that 
\begin{align}
&\big(\frac{1}{\sqrt{2}}\sqrt{1+it}+\frac{1}{\sqrt{2}}\sqrt{1-it}\big)=\sqrt{2}u,\\
&\big(\frac{1}{\sqrt{2}}\sqrt{1+it}-\frac{1}{\sqrt{2}}\sqrt{1-it}\big)=\sqrt{2}i\sqrt{u^2-1}
\end{align} under the substitution $t\rarrow u$. Hence under the substitution $t\rarrow u$, \eqref{temp0596} becomes
\begin{align}
&\sqrt{2}\mcR\int_1^\infty \frac{udu}{\sqrt{u^2-1}}\exp\big(-\frac{1}{\sqrt{2}}(r_1+r_2)u-i\frac{1}{\sqrt{2}}(r_1-r_2)\sqrt{u^2-1}\big)\\& \ =\frac{r_1+r_2}{\sqrt{r_1^2+r_2^2}}K_1(\sqrt{r_1^2+r_2^2})\nonumber
\end{align}
and \eqref{temp05961} becomes
\begin{align}
&\sqrt{2}\mcR\int_1^\infty i du\exp\big(-\frac{1}{\sqrt{2}}(r_1+r_2)u-i\frac{1}{\sqrt{2}}(r_1-r_2)\sqrt{u^2-1}\big)\\& \ =\frac{r_1-r_2}{\sqrt{r_1^2+r_2^2}}K_1(\sqrt{r_1^2+r_2^2})\nonumber
\end{align}
by Lemma \ref{lemmaintreps}.
\end{proof}
\end{proposition}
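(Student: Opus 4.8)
The plan is to reduce each of the three identities to one of the position-space integral representations of $K_0$ and $K_1$ supplied by Lemma \ref{lemmaintreps}, by means of a single explicit change of variables that is uniform across all three cases. I would begin by rescaling $t\rarrow t/4$ in each of \eqref{temp302}, \eqref{temp0596}, \eqref{temp05961}; this replaces $\sqrt{1\pm 4it}$ by $\sqrt{1\pm it}$, turns $\sqrt{1+16t^2}$ into $\sqrt{1+t^2}$, and contributes a Jacobian factor $1/4$ that will be absorbed later.

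Next I would compute the principal-branch real and imaginary parts of $\sqrt{1\pm it}$ explicitly, namely $\mcR\sqrt{1+it}=\mcR\sqrt{1-it}=(\tfrac12(\sqrt{1+t^2}+1))^{1/2}$ and $\mcI\sqrt{1+it}=-\mcI\sqrt{1-it}=(\tfrac12(\sqrt{1+t^2}-1))^{1/2}$ for $t>0$, which follow from the half-angle formula for the square root of a complex number. This rewrites the exponent $-\tfrac{1}{\sqrt2}(r_1\sqrt{1+it}+r_2\sqrt{1-it})$ in the separated form $-\tfrac{r_1+r_2}{\sqrt2}u-i\tfrac{r_1-r_2}{\sqrt2}\sqrt{u^2-1}$, where I set $u=(\tfrac12(\sqrt{1+t^2}+1))^{1/2}$. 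The second substitution $t\rarrow u$ is the crux: it is a monotone bijection of $(0,\infty)$ onto $(1,\infty)$ with inverse $t=2u\sqrt{u^2-1}$, so that $\sqrt{1+t^2}=2u^2-1$ and $dt=2\tfrac{2u^2-1}{\sqrt{u^2-1}}\,du$; combined with the $1/4$ from the first step the measure becomes $\tfrac{du}{\sqrt{u^2-1}}$, exactly the one appearing in \eqref{temp5290}--\eqref{temp59022}.

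Under the same substitution the prefactors collapse cleanly: $\tfrac{1}{\sqrt2}(\sqrt{1+it}+\sqrt{1-it})=\sqrt2\,u$ and $\tfrac{1}{\sqrt2}(\sqrt{1+it}-\sqrt{1-it})=\sqrt2\,i\sqrt{u^2-1}$. Taking real parts then produces precisely the integrands of Lemma \ref{lemmaintreps}: the trivial factor and the real factor $\sqrt2\,u$ yield the cosine $\cos(\tfrac{r_1-r_2}{\sqrt2}\sqrt{u^2-1})$, giving \eqref{temp302} and \eqref{temp0596} respectively, while the purely imaginary factor $\sqrt2\,i\sqrt{u^2-1}$ turns the real part of the exponential into the sine $\sin(\tfrac{r_1-r_2}{\sqrt2}\sqrt{u^2-1})$, giving \eqref{temp05961}. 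I expect the main obstacle to be purely bookkeeping: confirming that $u\mapsto t$ is a genuine bijection with the stated Jacobian, tracking how the explicit $i$ in the third prefactor interacts with the real-part operation, and checking that the degenerate case $r_1=r_2$ (where the oscillatory argument vanishes) is subsumed by the same formulas. Granting Lemma \ref{lemmaintreps}, none of these steps is deep, so the three identities follow directly.
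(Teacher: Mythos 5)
Your proposal is correct and follows essentially the same route as the paper's own proof: the rescaling $t\rarrow t/4$, the explicit half-angle computation of $\mcR\sqrt{1\pm it}$ and $\mcI\sqrt{1\pm it}$, the substitution $u=(\tfrac12(\sqrt{1+t^2}+1))^{1/2}$ with inverse $t=2u\sqrt{u^2-1}$, and the final reduction of all three identities to the integral representations of Lemma \ref{lemmaintreps}. The only difference is cosmetic: you expand the real parts into cosine/sine integrands explicitly, whereas the paper leaves them as $\mcR$ of complex exponentials before citing the lemma.
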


\newpage

%%% References %%%

\end{document}